\theoremstyle{definition}
\newtheorem{definition}{Definition}[section]
\theoremstyle{remark}
\newtheorem{oss}[definition]{Remark}
\theoremstyle{plain}
\newtheorem{theorem}[definition]{Theorem}
\newtheorem{prop}[definition]{Proposition}
\newtheorem{lemma}[definition]{Lemma}
\newtheorem{cor}[definition]{Corollary}
\DeclareMathOperator{\Sym}{Sym}
\DeclareMathOperator{\Lin}{Lin}
\DeclareMathOperator{\Pol}{Pol}
\DeclareMathOperator{\Ker}{Ker}
\DeclareMathOperator{\Imag}{Im}
\title[The BRST cohomology complex of a matrix model]{The BRST cohomology and a generalized Lie algebra cohomology: analysis of a matrix model}
\author{Roberta A. Iseppi}
\address{Center for Quantum Geometry of Moduli Spaces, Aarhus University, Ny Munkegade 118, 8000 Aarhus, Denmark}
\email{roberta.iseppi@qgm.au.dk}
\date{\today}
\begin{document}

\begin{abstract}
This article is devoted to the analysis of the gauge-fixed BRST cohomology complex for a matrix model endowed with a {\small{$U(2)$}}-gauge symmetry. After a brief introduction on the BV construction and the gauge-fixing procedure in the setting of finite-dimensional gauge theories, we apply these constructions to  the model, with the purpose of explicitly determining and computing the corresponding gauge-fixed BRST cohomology groups. In the second part of this article,  we introduce a notion of {\em generalized Lie algebra cohomology}, which allows the gauge-fixed BRST cohomology complex for a new description, able to detect a {\em double complex structure}. 
\end{abstract}

\maketitle

\section{The BV construction and the gauge-fixing procedure: motivations}
\label{Section: intro}
The Batalin-Vilkovisky (BV) formalism (cf. \cite{BV1}, \cite{BV2}, \cite{BV3}) can be viewed as the end point of a long path, which had its motivation in the problem of quantizing gauge theories via the path integral approach (cf.\cite{Feynman}) and its origin in the introduction of the concept of {\em ghost field} by Faddeev and Popov in 1967 (cf. \cite{Faddeev-Popov}). The ghost fields are extra non-physical fields that are used to enlarge a gauge theory, with the aim of overcoming the issues created by the presence of local symmetries. Given a gauge theory {\small{$(X_{0}, S_{0})$}} for {\small{$X_{0}$}} an initial configuration space and {\small{$S_{0}: X_{0} \rightarrow \mathbb{R}$}} an action functional on {\small{$X_{0}$}} that is invariant under the action of a gauge group {\small{$\mathcal{G}$}}, the BV construction can be seen as a method to determine a new extended pair {\small{$(\widetilde{X}, \widetilde{S})$}}
$$\begin{array}{ccc}
(X_{0}, S_{0}) & -------\rightarrow  &(\widetilde{X}, \widetilde{S})\\
\mbox{\small{initial gauge theory}} & \mbox{\tiny{BV construction}} & \mbox{\small{extended theory}}
\end{array}
$$
where the {\em extended configuration space} {\small{$\widetilde{X}$}} is obtained as extension of the initial configuration space {\small{$X_{0}$}} with {\em ghost/anti-ghost fields}:
$$\widetilde{X} = X_{0} \cup \{ \mbox{ghost/anti-ghost fields} \},$$
and the {\em extended action} {\small{$\widetilde{S}$}} is defined by adding extra terms depending on the ghost/anti-ghost fields to the initial action {\small{$S_{0}$}}:
$$\widetilde{S} = S_{0} + \mbox{terms depending on ghost/anti-ghost fields}.$$
Further conditions have to be imposed on the pair {\small{$(\widetilde{X}, \widetilde{S})$}} to be a proper extension of the initial gauge theory {\small{$(X_{0}, S_{0})$}}. As a consequence, every properly-extended theory {\small{$(\widetilde{X}, \widetilde{S})$}} naturally induces a so-called BRST cohomology complex, first discovered by Becchi, Rouet, Stora \cite{BRS}, \cite{BRS3} and, independently, by Tyutin \cite{T}. However, after having performed this BV construction, the action {\small{$\widetilde{S}$}} still turns out to be written in a form that is not appropriate for an analysis of the theory through methods coming from perturbation theory. Indeed, the action {\small{$\widetilde{S}$}} contains anti-fields/anti-ghost fields, which need to be eliminated before computing amplitudes and {\small{$S$}}-matrix elements. Hence, a {\em gauge-fixing procedure} has to be performed to remove the anti-fields/anti-ghost fields both from {\small{$\widetilde{X}$}} and {\small{$\widetilde{S}$}}. As recalled in Section \ref{Section: The gauge fixing and the BRST cohomology complex}, despite of a gauge-fixing procedure being based on the choice of a suitable {\em gauge-fixing fermion} {\small{$\Psi$}}, the construction is implemented in such a way that all the physically-relevant quantities are independent of the choice of {\small{$\Psi$}}. However, after a gauge-fixing procedure is applied, a natural question arises: does the gauge-fixed theory {\small{$(\widetilde{X}, \widetilde{S})|_{\Psi}$}} still induce a cohomology complex? The response is that not only there exists a gauge-fixed version of the BRST cohomology complex (at least on-shell), but also its  cohomology groups allow to recover interesting information on the initial gauge theory {\small{$(X_{0}, S_{0})$}}. For example, the degree-{\small{$0$}} gauge-fixed BRST cohomology group {\small{$H^{0}(\widetilde{X}|_{\Psi}, d_{\widetilde{S}}|_{\Psi}) $}} describes the space of classical observables of {\small{$(X_{0}, S_{0})$}}, that is, the space of regular functions on {\small{$X_{0}$}} which are invariant under the action of the gauge group {\small{$\mathcal{G}$}}:
$$H^{0}(\widetilde{X}|_{\Psi}, d_{\widetilde{S}}|_{\Psi}) = \{ \mbox{ Classical observables of the initial gauge theory } (X_{0}, S_{0})\ \}.$$
In the light of the very interesting role played by the gauge-fixed BRST cohomology complex, we devote this article to a complete analysis of this cohomology complex for a finite-dimensional gauge theory {\small{$(X_{0}, S_{0})$}} endowed with a {\small{$U(2)$}}-gauge symmetry. Indeed, even though neither the BV construction nor the gauge-fixing procedure are strictly required in the finite-dimensional setting, it turns out to be a surprisingly rich context for rigorously analyzing both these constructions. \\ 
\\
More in detail, Section \ref{Section: The gauge fixing and the BRST cohomology complex} is devoted to a brief description of the gauge-fixing procedure in the finite-dimensional setting and to the statement of the notion of {\em gauge-fixed BRST cohomology complex}. In Section \ref{section: Application to a $U(2)$-matrix model} we apply the aforementioned construction to a matrix model {\small{$(X_{0}, S_{0})$}} endowed with a {\small{$U(2)$}}-gauge symmetry, arriving to determine and compute its gauge-fixed BRST cohomology groups. Finally, the notion of {\emph{generalized Lie algebra cohomology complex}} is introduced in Section \ref{Section: A generalized notion of Lie algebra cohomology}
while our main result is presented in Section \ref{Section: BRST and generalized Lie algebra cohomology}, where the gauge-fixed BRST cohomology complex of our model is reformulated in terms of this new cohomology complex. This approach to the analysis of the gauge-fixed BRST complex allows a clearer understanding of its intrinsic structure by detecting a {\em double complex structure}, not visible at the level of the BRST complex. This article ends with Section \ref{Sect: Conclusions}, where we insert this original construction in a larger perspective, providing outlooks for its application to other models, that is, for finite dimensional gauge theories with a larger gauge symmetry group.\\
\\
\noindent
{\em Acknowledgments:} the research presented in this article originates from some preliminary results obtained by the author when supported by Netherlands Organization for Scientific Research (NWO), through Vrije Competitie (pro\-ject number 613.000.910). The final stages of writing this article and the main results incorporated in it have received funding from the European Union Horizon 2020 Research and Innovation Program, under the Marie Sklodowska-Curie grant agreement No 753962. The author also would like to thank Walter D. van Suijlekom for interesting remarks and inspiring conversations. 

\section{The gauge fixing procedure and the BRST cohomology complex}
\label{Section: The gauge fixing and the BRST cohomology complex}
This section is devoted to a brief presentation of the gauge-fixing procedure in the context of finite-dimensional gauge theories (cf. \cite{Fior}, \cite{Schw}). For completeness, we mention that this procedure has been developed also in the context of infinite-dimensional gauge theories and that different methods and approaches have been formalized to perform it. In what follows,  we present the fields/anti-fields approach, referring to \cite{AKSZ}, \cite{GPS}, \cite{Henneaux2} for a more exhaustive presentation of the subject in the infinite-dimensional framework.\\
\\
We start by recalling the notion of (finite-dimensional) {\em gauge theory}, which will be used through out the whole article.

\begin{definition}
Let {\small{$X_{0}$}} be a vector space over {\small{$\mathbb{R}$}}, {\small{$S_{0}$}} be a functional on {\small{$X_{0}$}}, {\small{$S_{0}: X_{0} \rightarrow \mathbb{R}$}}, and {\small{$\mathcal{G}$}} be a group acting on {\small{$X_{0}$}} through an action {\small{$F:\mathcal{G} \times X_{0} \rightarrow X_{0}.$}} Then the pair {\small{$(X_{0}, S_{0})$}} is a {\em gauge theory with gauge group {\small{$\mathcal{G}$}}} if it holds that
 $$S_{0}(F(g, \varphi)) = S_{0}(\varphi), \quad \quad \forall \varphi \in X_{0}, \ \forall g \in \mathcal{G}.$$
\end{definition}

Given a gauge theory {\small{$(X_{0}, S_{0})$}}, we will refer to {\small{$X_{0}$}} as the {\em configuration space}, an element {\small{$\varphi$}} in {\small{$X_{0}$}} will be called a {\em gauge field}, the functional {\small{$S_{0}$}} is known as the {\em action functional} and for the group {\small{$\mathcal{G}$}} we will use the  terminology {\em gauge group}.\\
\\
As already mentioned in Section \ref{Section: intro}, given an initial gauge theory {\small{$(X_{0}, S_{0})$}}, the BV construction is used to determine an extended theory {\small{$(\widetilde{X}, \widetilde{S})$}} via the introduction of extra non-physical fields, called {\em ghost fields} (cf. \cite{Faddeev-Popov}).

\begin{definition}
A {\em field/ghost field} {\small{$\varphi$}} is a graded variable characterized by two integers:
 $$\deg(\varphi) \in \mathbb{Z} \quad \mbox{ and } \quad \epsilon(\varphi) \in \{ 0, 1 \},  \quad \mbox{ with } \quad \deg(\varphi) = \epsilon(\varphi) \quad (\mbox{mod} \ \mathbb{Z}/2).$$
{\small{$\deg(\varphi)$}} is the {\em ghost degree}, while {\small{$\epsilon(\varphi)$}} is the {\em parity}, which distinguishes between the bosonic case, where {\small{$\epsilon(\varphi)=0$}} and {\small{$\varphi$}} behaves as a real variable, and the fermionic case, where {\small{$\epsilon(\varphi)=1$}} and {\small{$\varphi$}} behaves as a Grassmannian variable: 
$$\varphi \psi = - \psi \varphi, \quad \quad \mbox{ and }\quad\quad \varphi^{2} =0, \quad \quad  \mbox { if } \quad \epsilon(\varphi) = \epsilon(\psi) = 1.$$
The {\em anti-field/anti-ghost field} {\small{$\varphi^{*}$}} corresponding to a field/ghost field {\small{$\varphi$}} satisfies
$$\deg(\varphi^{*}) = - \deg(\varphi) -1, \quad \quad \mbox{ and } \quad \quad \epsilon(\varphi^{*}) = \epsilon(\varphi) +1, \quad (\mbox{mod} \ \mathbb{Z}/2).$$
\end{definition}
In what follows, the term {\em fields} is reserved to the initial fields in {\small{$X_{0}$}} while {\em ghost fields} is used to identify the extra fields introduced by the BV construction. Analogously, {\em anti-fields} is specifically used for the anti-particles corresponding to the initial fields while the {\em anti-ghost fields} are the ones corresponding to the ghost fields.

\begin{definition}
\label{def: extended theory}
Let the pair {\small{$(X_{0}, S_{0})$}} be a gauge theory. An {\em extended theory} associated to {\small{$(X_{0}, S_{0})$}} is a pair {\small{$(\widetilde{X}, \widetilde{S})$}} where the extended configuration space {\small{$\widetilde{X} = \oplus_{i \in \mathbb{Z}} [\widetilde{X}]^{i}$}} is a {\small{$\mathbb{Z}$}}-graded super-vector space suitable to be decomposed as
\begin{equation}
\label{def: extended conf. sp.}
\widetilde{X} \cong \mathcal{F} \oplus \mathcal{F}^{*}[1], \quad \quad \mbox{ with } \quad [\widetilde{X}]^{0} = X_{0}
\end{equation}
for {\small{$\mathcal{F} = \oplus_{i \geqslant 0} \mathcal{F}^{i}$}} a graded locally free {\small{$\mathcal{O}_{X_{0}}$}}-module with homogeneous components of finite rank and {\small{$\mathcal{O}_{X_{0}}$}} the algebra of regular functions on {\small{$X_{0}$}}. Moreover, concerning the extended action {\small{$\widetilde{S} \in [\mathcal{O}_{\widetilde{X}}]^{0}$}}, it is a real-valued regular function on {\small{$\widetilde{X}$}}, with {\small{$\widetilde{S}|_{X_{0}}=S_{0}$}}, {\small{$\widetilde{S}\neq S_{0}$}} and such that it solves the {\em classical master equation}, i.e., 
$$\{\widetilde{S}, \widetilde{S}\}=0,$$
where {\small{$\{ -, -\}$}} denotes the graded Poisson structure on the algebra {\small{$\mathcal{O}_{\widetilde{X}}$}}.
 \end{definition}

\begin{oss}
The condition imposed in Definition \ref{def: extended theory} of {\small{$\widetilde{X}$}} being a {\small{$\mathbb{Z}$}}-{\emph{graded super-vector space}} simply encodes the fact that ghost/anti-ghost fields of even degree have to be treated as real variables while they are Grassmannian variables if they have odd ghost degree. Moreover, for what concerns the decomposition condition required in \eqref{def: extended conf. sp.}, it enforces the prescription of the BV formalism of introducing all anti-fields/anti-ghost fields corresponding to the fields/ghost fields in the extended configuration space. Indeed, while {\small{$\mathcal{F}$}} describes the fields/ghost-fields content of {\small{$\widetilde{X}$}}, {\small{$\mathcal{F}^{*}[1]$}} determines the anti-fields/anti-ghost fields part, with {\small{$\mathcal{F}^{*}[1]$}} that denotes the shifted dual module of {\small{$\mathcal{F}$:}}
 $$\mathcal{F}^{*}[1] = \oplus_{i \in \mathbb{Z}} \big[ \mathcal{F}^{*}[1] \big]^{i} \quad \quad \mbox{ with } \quad \big[ \mathcal{F}^{*}[1] \big]^{i} = \big[ \mathcal{F}^{*} \big]^{i+1}.$$
Finally, as a consequence of {\small{$\widetilde{X}$}} being a {\small{$\mathbb{Z}$}}-graded super-vector space, the algebra {\small{$\mathcal{O}_{\widetilde{X}}$}} of real-valued regular functions defined on {\small{$\widetilde{X}$}} naturally inherits a {\small{$\mathbb{Z}$}}-graded algebra structure. In addition, {\small{$\mathcal{O}_{\widetilde{X}}$}} can also be endowed with a graded Poisson structure induced by bracket 
$$\{ -, -\}: [\mathcal{O}_{\widetilde{X}}]^{\bullet} \rightarrow [\mathcal{O}_{\widetilde{X}}]^{\bullet +1}$$ 
of degree {\small{$1$}}. This structure is completely determined by requiring that, on the generators of {\small{$\widetilde{X}$}}, it satisfies the following conditions
$$\big\{ \beta_{i}, \beta_{j}\big\}= 0 , \quad \quad \quad \big\{ \beta^{*}_{i}, \beta_{j}\big\} = \delta_{ij} \quad  \quad \mbox{ and } \quad \quad \big\{ \beta^{*}_{i}, \beta^{*}_{j}\big\}=0 $$  
for {\small{$\beta_{i} \in \mathcal{F}^{p}$}} and {\small{$\beta^{*}_{i} \in \big[ \mathcal{F}^{*}[1] \big]^{-p-1}$}}, {\small{$p \in \mathbb{Z}_{\geqslant 0}$}}, while its value on any other possible combination of fields/ghost fields/anti-fields and anti-ghost fields is equal to zero. Then, the definition of the bracket structure on the whole algebra {\small{$\mathcal{O}_{\widetilde{X}}$}} is obtained by enforcing the bracket being linear and graded Poisson. 
\end{oss}

Even though it is not explicitly required in the notion, in what follows we will consider  extended theories {\small{$(\widetilde{X}, \widetilde{S})$}} with {\em finite level of reducibility}. More explicitly, we assume that only a finite number of homogeneous components {\small{$[\widetilde{X}]^{i}$}} in the extended configuration space {\small{$\widetilde{X}$}} are non-trivial. Then, given an extended configuration space {\small{$\widetilde{X} \cong \mathcal{F} \oplus \mathcal{F}^{*}[1]$}}, for {\small{$\mathcal{F} = \oplus_{i=0}^{m} [\mathcal{F}]^{i}$}}, we will refer to this theory as a theory with level of reducibility {\small{$L$}} for {\small{$L:= m-1$}},  {\small{$L\in \mathbb{Z}_{\geqslant 0}$}}. 

\subsection{The gauge-fixing procedure}
\label{Subsect: The gauge-fixing procedure}
The first step that has to be taken to perform a gauge fixing procedure is the choice of a {\em gauge-fixing fermion}, whose definition we recall for completeness.

\begin{definition}
\label{def gauge fixing fermion}
Given an extended configuration space {\small{$\widetilde{X}$}}, a {\em gauge-fixing fermion} {\small{$\Psi$}} is a regular function {\small{$ \Psi \in [\mathcal{O}_{\mathcal{F}}]^{-1}$}}, that is, a regular function in fields/ghost fields of degree {\small{$-1$}} and odd parity. 
\end{definition}

Given an extended theory {\small{$(\widetilde{X}, \widetilde{S})$}} and a gauge-fixing fermion {\small{$\Psi$}}, a gauge-fixing procedure is performed to obtain a new pair {\small{$(\widetilde{X}, \widetilde{S})|_{\Psi}$}} where neither the {\em gauge-fixed configuration space} {\small{$\widetilde{X}|_{\Psi}$}} nor the {\em gauge-fixed action} {\small{$\widetilde{S}|_{\Psi}$}} depend on anti-fields/anti-ghost fields. At the level of the configuration space, this goal is reached by defining the gauge-fixed configuration space {\small{$\widetilde{X}|_{\Psi}$}} to be
$$\widetilde{X}|_{\Psi}:= \big[\mathcal{F} \oplus\mathcal{F}^{*}[1]\big]|_{\varphi_{i}^{*} = \frac{\partial \Psi}{\partial \varphi_{i}}}, $$
namely, the Lagrangian submanifold determined by imposing the collection of so-called {\em gauge-fixing conditions} {\small{$\varphi_{i}^{*} =\partial \Psi/\partial \varphi_{i}$}}, where every anti-field/anti-ghost field {\small{$\varphi^{*}_{i}$}} in {\small{$\mathcal{F}^{*}[1] $}} is replaced with the partial derivative of {\small{$\Psi$}} with respect to the corresponding field/ghost field {\small{$\varphi$}}. Similarly, also the {\em gauge-fixed action} {\small{$\widetilde{S}|_{\Psi}$}} is obtained by imposing the gauge-fixing condition:
$$\widetilde{S}(\varphi_{i}, \varphi^{*}_{i})|_{\Psi} := \widetilde{S}\Big(\varphi_{i}, \varphi_{i}^{*} =\frac{\partial \Psi}{\partial \varphi_{i}}\Big).$$
We will refer to the pair {\small{$(\widetilde{X}|_{\Psi}, \widetilde{S}|_{\Psi})$}} as the {\em gauge-fixed theory}.\\
\\
In order to ensure that the gauge-fixing procedure is well defined from a physical point of view, we need to verify that the physically relevant quantities do not depend on the choice of gauge-fixing fermion {\small{$\Psi$}}. This condition can be restated by requiring that, given a regular function g on {\small{$\widetilde{X}|_{\Psi}$}}, the quantity computed by integrals of the following type
$$\int_{\widetilde{X}|_{\psi}} g \ d(Vol_{\widetilde{X}|_{\Psi}}),$$
for {\small{$Vol_{\widetilde{X}|_{\Psi}}$}} a volume form on {\small{$\widetilde{X}|_{\psi}$}}, is invariant with respect to the explicit form of {\small{$\Psi$}}. The fulfilling of this requirement is a consequence of Schwarz's Theorem (cf. \cite{Schw}, \cite{Fior}), where it is proved that this class of integrals only depend on the homotopy class of the gauge-fixing fermion and not on its explicit form. 

\subsection{The gauge-fixed BRST cohomology}
\label{The gauge-fixed BRST cohomology}
After having briefly reviewed the gauge-fixing procedure, a natural question arises concerning the effect of this construction at the level of the BRST cohomology. Indeed, as mentioned in the introduction, given an extended theory {\small{$(\widetilde{X}, \widetilde{S})$}}, it naturally induces a so-called {\em classical BRST cohomology complex}, where the coboundary condition for the coboundary operator {\small{$d_{\widetilde{S}}$}} holds due to the extended action {\small{$\widetilde{S}$}} solving the classical master equation.  However, what is the effect at the level of the corresponding BRST cohomology complex of having performed the gauge-fixing procedure on the pair {\small{$(\widetilde{X}, \widetilde{S})$}}? Is there a residual BRST-symmetry on {\small{$(\widetilde{X}, \widetilde{S})|_{\Psi}$}} that induces a gauge-fixed BRST cohomology complex? This section is devoted to recall the answer to this question (cf. \cite{AKSZ}) as well as the notions of these two cohomology complexes we are interested in: the {\emph{classical BRST complex}} and the \emph{gauge-fixed BRST complex}.  

\begin{definition}
\label{definition classical BRST complex}
Given an extended theory {\small{$(\widetilde{X}, \widetilde{S})$}}, the induced {\em classical BRST cohomology complex} is a cohomology complex whose cochain spaces {\small{$\mathcal{C}^{i}(\widetilde{X}, d_{\widetilde{S}})$}} and coboundary operator {\small{$d_{\widetilde{S}}$}} are defined as follows, respectively: 
$$\mathcal{C}^{i}(\widetilde{X}, d_{\widetilde{S}}) := [Sym_{\mathcal{O}_{X_{0}}}(\widetilde{X})]^{i},  $$
for {\small{$ i \in \mathbb{Z}$}}, with {\small{$Sym_{\mathcal{O}_{X_{0}}}(\widetilde{X})$}} the {\small{$\mathbb{Z}$}}-graded symmetric algebra generated by {\small{$\widetilde{X}$}} on the ring {\small{$\mathcal{O}_{X_{0}}$}}, and 
$$d_{\widetilde{S}}: \mathcal{C}^{\bullet}(\widetilde{X}, d_{\widetilde{S}}) \rightarrow \mathcal{C}^{\bullet +1}(\widetilde{X}, d_{\widetilde{S}}), \quad \quad \mbox{ with } \quad \quad d_{\widetilde{S}}:= \{ \widetilde{S}, - \}$$ 
for {\small{$\{ - , - \}$}} denoting the Poisson bracket structure on {\small{$\mathcal{O}_{\widetilde{X}}$}}.
\end{definition}
The gauge-fixing procedure effects the classical BRST cohomology complex by restricting both the cochain spaces {\small{$\mathcal{C}^{i}(\widetilde{X}, d_{\widetilde{S}})$}} and the coboundary operator {\small{$d_{\widetilde{S}}$}} to the Lagrangian submanifold {\small{$\widetilde{X}|_{\Psi} $}} determined by the gauge-fixing conditions, as precisely stated in the following definition.

\begin{definition}
\label{definition of gauge-fixed BRST cohomology}
 Given an extended theory {\small{$(\widetilde{X}, \widetilde{S})$}}, with {\small{$\widetilde{X} = \mathcal{F} \oplus \mathcal{F}^{*}[1]$}} and {\small{$\widetilde{S} \in [\mathcal{O}_{\widetilde{X}}]^{0}$}}, together with a gauge-fixing fermion {\small{$\Psi \in [\mathcal{O}_{\mathcal{F}}]^{-1}$}}, the induced {\em gauge-fixed BRST complex} {\small{$(\mathcal{C}^{j}(\widetilde{X}|_{\Psi}, d_{\widetilde{S}}|_{\Psi}), d_{\widetilde{S}}|_{\Psi})$}} is a cohomology complex with
$$\mathcal{C}^{j}(\widetilde{X}|_{\Psi}, d_{\widetilde{S}}|_{\Psi}) := \big[ \Sym_{\mathcal{O}_{X_{0}}}(\widetilde{X}|_{\Psi})\big]^{j}, \quad \quad \mbox{ and } \quad \quad d_{\widetilde{S}}|_{\Psi} := \left.\big\{ \widetilde{S}, - \big\}\right|_{\widetilde{X}|_{\Psi}} $$
where {\small{$j \in \mathbb{Z}$}} and {\small{$\widetilde{X}|_{\Psi} \subset \widetilde{X}$}} is the Lagrangian submanifold defined by the gauge-fixing conditions {\small{$\{ \varphi_{i}^{*} = \partial \Psi/\partial \varphi_{i}\}.$}}
\end{definition}

Although the definition of the gauge-fixed BRST complex is simply the restriction of the one of classical BRST complex to the gauge-fixed configuration space {\small{$\widetilde{X}|_{\Psi}$}} and hence to the corresponding algebra {\small{$\mathcal{O}_{\widetilde{X}|_{\Psi}}$}} of regular functions on {\small{$\widetilde{X}|_{\Psi}$}}, it is not straightforward that the residual BRST complex, after gauge fixing, still defines a cohomology complex. In particular, being restricted to {\small{$\mathcal{O}_{\widetilde{X}|_{\Psi}}$}}, the operator {\small{$d_{\widetilde{S}}|_{\Psi}$}} might fail to satisfy the coboundary condition {\small{$d^{2}_{\widetilde{S}}|_{\Psi}(\varphi) = 0$}}, for all {\small{$\varphi \in \widetilde{X}|_{\Psi}$}}. Nevertheless, a direct computation allows us to determine if {\small{$d_{\widetilde{S}}|_{\Psi}$}} defines a coboundary operator just knowing the extended action {\small{$\widetilde{S}$}}. Explicitly, given a generic field/ghost field {\small{$\varphi_{i} \in \widetilde{X}|_{\Psi}$}}, it holds the following equality:
$${d^{2}_{\widetilde{S}}|_{\Psi}}(\varphi_{i}) = \left.\left[ \sum_{k} (-1)^{\epsilon(\varphi_{i})(1 + \epsilon(\varphi_{k}))} \frac{\partial \widetilde{S}|_{\Psi}}{\partial \varphi_{k}} \frac{\partial^{2} \widetilde{S}}{\partial \varphi_{k}^{*} \partial \varphi_{i}^{*}} \right]\right|_{\widetilde{X}|_{\Psi}},$$
 where {\small{$\epsilon(\varphi_{i})$}} denotes the parity (real or Grassmannian) of {\small{$\varphi_{i}$}}. Hence, while the condition of the theory being {\em on-shell}, i.e., of the equations of motion {\small{$ \partial\widetilde{S}|_{\Psi}/\partial \varphi_{i} = 0$}} been satisfied for all {\small{$\varphi_{i} \in \mathcal{F}$}}, automatically ensures that  {\small{$d_{\widetilde{S}}|_{\Psi}$}} defines a coboundary operator, the coboundary condition {\small{$d^{2}_{\widetilde{S}}|_{\Psi} = 0$}} might be satisfied also {\em off-shell}, depending on the explicit form of the action {\small{$\widetilde{S}$}}.
 
\subsection{Gauge-fixing auxiliary fields}
\label{Sect: Gauge-fixing auxiliary fields}
Before going to the analysis of the gauge-fixed BRST cohomology for a {\small{$U(2)$}}-model, there is still a technical aspect that has to be discussed. Indeed, in order to perform a gauge-fixing procedure on a pair {\small{$(\widetilde{X}, \widetilde{S})$}}, it is necessary to fix a gauge-fixing fermion {\small{$\Psi \in [\mathcal{O}_{\mathcal{F}}]^{-1}$}} for {\small{$\widetilde{X} = \mathcal{F} \oplus \mathcal{F}^{*}[1]$}}. However, there are cases when it is impossible to define a suitable {\small{$\Psi$}} as, for example, when the graded {\small{$\mathcal{O}_{X_{0}}$}}-module {\small{$\mathcal{F}$}}, which describes the fields/ghost fields content of {\small{$\widetilde{X}$}}, is only {\small{$\mathbb{Z}_{\geqslant 0}$}}-graded. It was to overcome this issue that Batalin and Vilkovisky introduced the notion of {\em auxiliary fields} (cf. \cite{BV1}, \cite{BV2}).

\begin{definition}
\label{auxiliary pair}
An {\em auxiliary pair} is a pair of fields {\small{$(B, h)$}} such that their ghost degrees and parities satisfy the following relations:
$$
\begin{array}{lr}
deg(h)= deg(B) +1; \quad \quad & \quad \quad \epsilon(h)= \epsilon(B) + 1 \mbox{ (mod 2) }. 
\end{array}
$$
\end{definition}

The auxiliary fields are used to face the problem of not having negatively-graded ghost fields in {\small{$\widetilde{X}$}}. Indeed, given an extended theory {\small{$(\widetilde{X}, \widetilde{S})$}}, it can be further extended via the introduction of an auxiliary pair {\small{$(B, h)$}}, where the free parameter {\small{$\deg(B)$}} can be fixed to be in {\small{$\mathbb{Z}_{<0}$}}. 

\begin{definition}
Given an extended theory {\small{$(\widetilde{X}, \widetilde{S})$}} and an auxiliary pair {\small{$(B, h)$}}, the corresponding {\em total theory} {\small{$(X_{tot}, S_{tot})$}} has a {\em total configuration space} {\small{$X_{tot}$}} defined as the {\small{$\mathbb{Z}$}}-graded super-vector space generated by {\small{$\widetilde{X}$}}, {\small{$(B, h)$}} and their corresponding antifields {\small{$(B^{*}, h^{*})$}}
$$X_{tot}:= \langle \widetilde{X}, B, h, B^*, h^* \rangle, $$
and a {\em total action} {\small{$S_{tot}$}} given by the following sum:
$$S_{tot}:= \widetilde{S} + S_{aux},\quad \quad \mbox{ where } \quad \quad S_{aux}:= h B^{*}.$$
\end{definition}

\begin{oss}
\label{Remark Poisson structure on total conf sp}
We notice that, by construction, also {\small{$X_{tot}$}} presents the symmetry between fields/ghost fields and anti-fields/anti-ghost fields content, being suitable for the following decomposition:
$$X_{tot} = \mathcal{F}_{tot} \oplus \mathcal{F}_{tot}^{*}[1],$$
where {\small{$\mathcal{F}_{tot}$}} is a {\small{$\mathbb{Z}$}}-graded finitely-generated {\small{$\mathcal{O}_{X_{0}}$}}-module, which describes the fields/ ghost fields content of {\small{$X_{tot}$}} whilst its shifted dual {\small{$\mathcal{F}_{tot}^{*}[1]$}} describes the anti-fields/anti-ghost fields content. Moreover, {\small{$X_{tot}$}} can be equipped with a graded Poisson structure by extending the one already defined on {\small{$\widetilde{X}$}} imposing the following conditions: 
$$\{ B^{*}, B\} = \{ h^{*}, h\} =1 \quad \quad \mbox{ and } \quad \quad \{ \varphi, \xi\}= 0,$$
for {\small{$\varphi$}} any generator in {\small{$\widetilde{X}$}} and {\small{$\xi$}} any auxiliary fields among {\small{$B$}}, {\small{$h$}} or their corresponding anti-fields {\small{$B^{*}$}} and  {\small{$h^{*}$}}, while the value of the bracket on any other possible combination of auxiliary fields and corresponding auxiliary antifields is  declared to be zero.  
A straight consequence of this way of equipping {\small{$X_{tot}$}} with a Poisson structure is that the total action {\small{$S_{tot}$}} solves the classical master equation on {\small{$\mathcal{O}_{X_{tot}}$}}:
$$\{ S_{tot}, S_{tot}\} =0.$$
Therefore, the theories {\small{$(\widetilde{X}, \widetilde{S})$}} and {\small{$(X_{tot}, S_{tot})$}} satisfy similar properties, the only difference lying in the fact that {\small{$X_{tot}$}} may also contain negatively graded fields.
\end{oss}

Thus it appears that our goal has been achieved: by enlarging our extended theory with the introduction of an auxiliary pair {\small{$(B, h)$}} such that {\small{$\deg(B)=-1$}}, any {\small{$\Psi = f B$}}, with {\small{$f$}} in {\small{$\mathcal{O}_{X_{0}}$}} defines a gauge-fixing fermion, since {\small{$\Psi \in [\mathcal{O}_{\mathcal{F}_{tot}}]^{-1}$}}.  However, in addition to the conditions already imposed by its definition, to describe a well-defined physical theory a gauge-fixing fermion should determine a gauge-fixed action {\small{$S_{tot}|_{\Psi}$}} which is a {\em proper solution} of the classical master equation (cf. \cite{GPS}). This is the reason why, depending on the gauge theory considered, it might be necessary to introduce more than one auxiliary pair. More precisely, as inductively proved by Batalin and Vilkovisky (cf. Theorem \ref{teorema auxiliary fields caso generale}, \cite{BV1}, \cite{BV2}), the minimal number of auxiliary pairs that have to be added  depends on the {\em level of reducibility} of the extended theory {\small{$(\widetilde{X}, \widetilde{S})$}} considered. 

\begin{definition}
\label{def: level of reducibility}
An extended theory {\small{$(\widetilde{X}, \widetilde{S})$}}, with {\small{$\widetilde{X}= \mathcal{F} \oplus \mathcal{F}^{*}[1]$}} for {\small{$\mathcal{F}$}} a {\small{$\mathbb{Z}_{\geqslant 0}$}}-graded and finitely-generated {\small{$\mathcal{O}_{X_{0}}$}}-module, is {\em reducible with level of reducibility} {\small{$L:= m -1\geqslant 1$}} if {\small{$\mathcal{F}= \bigoplus_{i=0}^{m}\mathcal{F}_{i}$}}. Otherwise, if {\small{$L=0$}}, the theory is called {\em irreducible}.
\end{definition}

\begin{theorem}
\label{teorema auxiliary fields caso generale}
Given an extended theory {\small{$(\widetilde{X}, \widetilde{S})$}} with level of reducibility L, in order to determine a total theory {\small{$(X_{tot}, S_{tot})$}} whose gauge-fixed action {\small{$S_{tot}|_{\Psi}$}} is a proper solution of the classical master equation, a collection of auxiliary pairs {\small{$\{ (B_{i}^{j}, h_{i}^{j}) \}$}}, with {\small{$i= 0, \dots, L$}}, {\small{$j=1,\dots, i+1$}}, has to be introduced, which is completely determined by imposing that 
$$deg(B_{i}^{j}) = j -i -2  \quad \mbox{ if j is odd}, \quad \quad deg(B_{i}^{j}) = i -j +1 \quad \mbox{ if j is even}.$$
\end{theorem}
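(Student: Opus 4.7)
The plan is to prove the theorem by induction on the level of reducibility $L$, following the original strategy of Batalin and Vilkovisky. The result really combines two assertions: that a collection of auxiliary pairs with the prescribed degrees suffices to make $S_{tot}|_\Psi$ a proper solution of the classical master equation, and that the degrees $j-i-2$ and $i-j+1$ are the minimal/forced choices making this possible.

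For the base case $L=0$, the module $\mathcal{F}$ has components only in degrees $0$ and $1$, so $[\mathcal{O}_{\mathcal{F}}]^{-1}$ is empty and no gauge-fixing fermion exists at all. Introducing a single auxiliary pair $(B_0^1, h_0^1)$ with $\deg B_0^1 = -1$ — precisely the value $j-i-2$ predicted for $i=0$, $j=1$ odd — is the smallest enlargement of $\mathcal{F}$ in which some $\Psi \in [\mathcal{O}_{\mathcal{F}_{tot}}]^{-1}$ can be written. Using the Poisson relations of Remark \ref{Remark Poisson structure on total conf sp}, the total action $S_{tot}=\widetilde{S}+h_0^1(B_0^1)^*$ still satisfies $\{S_{tot},S_{tot}\}=0$; a direct rank computation then shows that for any $\Psi=fB_0^1$ with $f\in\mathcal{O}_{X_0}$ the Hessian of $S_{tot}|_\Psi$ attains maximal rank on the stationary locus, so $S_{tot}|_\Psi$ is a proper solution.

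For the inductive step, assume the statement for every level up to $L-1$ and consider a theory of level $L$, so that $\mathcal{F}$ now carries ghosts up to degree $L+1$. Applying the induction hypothesis to the subtheory supported in degrees $\leq L$ produces the pairs $\{(B_i^j,h_i^j)\}_{i\leq L-1}$. The remaining obstruction is that the degree-$(L+1)$ ghosts for ghosts lack the partners required to turn the gauge-fixing conditions into a maximal-rank system. I would then show that adjoining the new family $\{(B_L^j,h_L^j)\}_{j=1,\dots,L+1}$ with the prescribed degrees fills exactly this gap: the odd-$j$ entries, of negative degrees $j-L-2<0$, provide the new anti-ghost-like generators that allow a homogeneous $\Psi$ of degree $-1$ to couple the top-degree ghost, while the even-$j$ entries, of non-negative degrees $L-j+1$, act as Lagrange-multiplier fields enforcing the gauge-fixing conditions and restoring full rank. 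Compatibility with Definition \ref{auxiliary pair} is automatic, since the parity formulas shift by one along each pair.

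The main obstacle is verifying not only sufficiency but also that this particular triangular array is uniquely forced. I expect the cleanest approach to be combinatorial: tabulate all fields and anti-fields introduced in each degree by the pairs $(B_i^j,h_i^j)$ for $i\leq L$, and check that the properness condition on $S_{tot}|_\Psi$ — independent of the specific admissible $\Psi$ — dictates, degree by degree, the balance between newly introduced negative-degree generators and non-negative-degree multipliers. The two sequences $\deg B_i^j = j-i-2$ (odd $j$) and $\deg B_i^j = i-j+1$ (even $j$) are then the unique monotone patterns simultaneously filling the negative-degree slots required to form $\Psi$ and the non-negative slots required by the multiplier sector, which closes the induction.
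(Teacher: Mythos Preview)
The paper does not actually prove this theorem: it is stated without proof and attributed to Batalin and Vilkovisky \cite{BV1}, \cite{BV2}, the text immediately before the statement saying ``as inductively proved by Batalin and Vilkovisky''. So there is no in-paper proof against which to compare your attempt; the only information the paper gives is that the original argument proceeds by induction on the level of reducibility, which is the strategy you adopt.

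That said, what you have written is a plan rather than a proof, and the places where the real work lies are precisely the places you leave blank. In the base case you assert that ``a direct rank computation then shows that \ldots\ the Hessian of $S_{tot}|_\Psi$ attains maximal rank on the stationary locus''; this is the entire content of properness and cannot be waved through. In the inductive step the claims that the odd-$j$ pairs ``allow a homogeneous $\Psi$ of degree $-1$ to couple the top-degree ghost'' and that the even-$j$ pairs ``restore full rank'' are exactly what has to be demonstrated, and you give no mechanism for it beyond restating the conclusion. The uniqueness paragraph is even more schematic: you say you ``expect'' a combinatorial tabulation to force the two degree formulas, but you do not set up the bookkeeping, do not say what constraint properness imposes degree by degree, and do not explain why monotonicity is relevant or why no alternative pattern could work. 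None of these gaps is obviously unfixable, and the inductive skeleton matches what the paper cites, but as it stands the proposal does not contain a proof of either sufficiency or necessity.
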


To conclude, we remark that the reason for introducing the auxiliary fields in pairs satisfying the properties listed in Definition \ref{auxiliary pair} is to not modify the classical BRST cohomology complex defined by the extended theory {\small{$(\widetilde{X}, \widetilde{S})$}}. Indeed, the auxiliary pairs are irrelevant from a cohomological point of view since they determine {\em contractible pairs}. 

\begin{definition}
\label{def: contractible pair}
Let {\small{$V$}} be a {\small{$\mathbb{Z}$}}-graded space of generators for a cohomology complex {\small{$(\mathcal{C}^{\bullet}(V, d), d)$}} with 
$$\mathcal{C}^{\bullet}(V, d) = \Sym^{\bullet}_{R}(V) \quad \quad  \mbox{ and } \quad \quad d:\mathcal{C}^{\bullet}(V, d) \rightarrow \mathcal{C}^{\bullet +1}(V, d), $$
where {\small{$R$}} is a ring and {\small{$d$}} is a {\small{$1$}}-degree {\small{$R$}}-linear coboundary operator. If there exist a pair of generators {\small{$B$}}, {\small{$h \in V$}} such that 
$$d(B) = h, \quad \quad \quad d(h) = 0 \quad \quad \quad \mbox{and} \quad \quad \quad d(x) = d|_{V\setminus \{ B, h \}}(x) \,$$
for any generator {\small{$x\in V\setminus \{ B, h \}$}}, then {\small{$(B, h)$}} defines a {\em contractible pair} for the co\-ho\-mo\-lo\-gy complex {\small{$(\mathcal{C}^{\bullet}(V, d), d)$}}.
\end{definition}

\begin{oss}
The fact that auxiliary pairs determine contractible pairs for the classical BRST cohomology complex is an immediate consequence of how the coboundary operator {\small{$d_{S_{tot}}$}} is defined for the BRST cohomology complex induced by the total theory {\small{$(X_{tot}, S_{tot})$}}. Indeed, the operator {\small{$d_{S_{tot}}:= \{ S_{tot}, \ \}$}} depends on the generators in {\small{$\widetilde{X}$}} only through the extended action {\small{$\widetilde{S}$}} so {\small{$d_{S_{tot}} = d_{\widetilde{S}}$}} on {\small{$\mathcal{O}_{\widetilde{X}}$}} while the action of {\small{$d_{S_{tot}}$}} on the auxiliary fields is 
$$d_{S_{tot}}(B) = \{ S_{aux}, B \} = h, \quad \quad \mbox{ and } \quad \quad d_{S_{tot}}(h) = 0, $$
as required in the notion of contractible pair. 
 \end{oss}
 
 To conclude, each auxiliary pair {\small{$(B^{i}_{j}, h^{i}_{j})$}} introduced to properly implement the gauge-fixing procedure determines a contractible pair for the classical BRST cohomology complex. The result on the cohomological triviality of contractible pairs is recalled in the following theorem. For ideas on the proof a possible reference is \cite{BBH}.
 
\begin{theorem}
\label{theorem trivial pairs}
Given {\small{$V$}} a {\small{$\mathbb{Z}$}}-graded space of generators for a cohomology complex {\small{$(\mathcal{C}^{\bullet}(V, d), d)$}} and {\small{$(B, h)$}} a contractible pair for this complex, the cohomology complexes {\small{$\mathcal{C}^{\bullet}(V, d)$}} and {\small{$\mathcal{C}^{\bullet}(V\setminus \{ B, h \}, d|_{V\setminus \{ B, h \}})$}} are quasi-isomorphic, that is, the following isomorphism holds 
 $$\mathcal{H}^{k}(V, d) \simeq \mathcal{H}^{k}(V\setminus \{ B, h \}, d|_{V\setminus \{ B, h \}}),$$
 for {\small{$k \in \mathbb{Z}$.}} 
\end{theorem}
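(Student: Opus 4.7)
The plan is to reduce the statement to a standard Künneth-type argument. Set $W := V \setminus \{B, h\}$, so that as $\mathbb{Z}$-graded super-commutative algebras one has $\Sym^{\bullet}_{R}(V) \cong \Sym^{\bullet}_{R}(W) \otimes_R \Sym^{\bullet}_{R}(\langle B, h\rangle)$. The hypothesis $d(x) = d|_{V\setminus\{B,h\}}(x)$ for every generator $x \in W$ guarantees that $d(x) \in \Sym^{\bullet}_{R}(W)$, so the restriction of $d$ to $\Sym^{\bullet}_{R}(W)$ stays inside this subalgebra and $(\Sym^{\bullet}_{R}(W), d|_{W})$ is genuinely a subcomplex. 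Combined with $d(B) = h$ and $d(h) = 0$, the super-Leibniz rule then forces the global differential to decompose, up to the usual Koszul signs, as $d = d|_{W} \otimes 1 + 1 \otimes d_{Bh}$, where $d_{Bh}$ acts on the two-generator factor by $B \mapsto h$ and $h \mapsto 0$.

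Next, I would compute the cohomology of the small factor $\Sym^{\bullet}_{R}(\langle B, h\rangle)$ directly. Since $\deg(h) = \deg(B) +1$, exactly one of $B, h$ is even and the other odd. If $B$ is even and $h$ is odd, then $\Sym^{\bullet}_{R}(\langle B, h\rangle) \cong R[B] \otimes \Lambda(h)$ with monomial basis $\{B^a, B^a h\}_{a \geq 0}$, and the relations $d_{Bh}(B^a) = a B^{a-1} h$, $d_{Bh}(B^a h) = 0$ show, working over a characteristic-zero ring such as $\mathbb{R}$, that the cohomology is $R$ concentrated in degree $0$. The opposite parity case is entirely analogous, with $d_{Bh}(B h^b) = h^{b+1}$ and $d_{Bh}(h^b) = 0$ producing the same result. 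Applying the Künneth theorem for tensor products of cochain complexes of free $R$-modules then yields
\[
\mathcal{H}^{k}(V, d) \;\cong\; \bigoplus_{p+q=k} \mathcal{H}^{p}(W, d|_{W}) \otimes_R \mathcal{H}^{q}(\langle B, h\rangle, d_{Bh}) \;\cong\; \mathcal{H}^{k}(W, d|_{W}),
\]
which is precisely the claim.

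The main technical obstacle I expect is the bookkeeping in the tensor-product decomposition of $d$: one must check carefully that the super-Leibniz rule, the contractible-pair conditions, and the chosen ordering of generators really do produce the tensor-product differential $d|_{W} \otimes 1 + 1 \otimes d_{Bh}$ on every monomial of $\Sym^{\bullet}_{R}(V)$, with the Koszul signs agreeing. If one prefers to bypass Künneth, an equivalent route is to write down an explicit contracting homotopy $s$ of degree $-1$ on $\Sym^{\bullet}_{R}(V)$, acting as the identity on the $\Sym^{\bullet}_{R}(W)$-factor and inverting $d_{Bh}$ on the other (sending, for instance, $B^{a} h \mapsto \tfrac{1}{a+1} B^{a+1}$ in the even-$B$ case and $h^{b+1} \mapsto B h^{b}$ in the odd-$B$ case), and then to verify the homotopy identity $ds + sd = \mathrm{id} - \pi$, where $\pi$ is the projection onto $\Sym^{\bullet}_{R}(W)$; the quasi-isomorphism between the two complexes is then immediate.
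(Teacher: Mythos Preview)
Your argument is correct and is essentially the standard one: factor the graded-symmetric algebra as a tensor product, verify that the differential splits accordingly, compute the acyclicity of the two-generator factor by hand (your parity-by-parity analysis is accurate, including the use of characteristic zero in the even-$B$ case), and conclude either via K\"unneth or via an explicit homotopy. The only point worth watching is the Koszul sign in $d = d|_{W}\otimes 1 \pm 1\otimes d_{Bh}$, which you already flag.

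As for comparison with the paper: the paper does \emph{not} actually prove this theorem. It states the result and refers the reader to \cite{BBH} (Barnich--Brandt--Henneaux) for ``ideas on the proof''. The argument you have written is precisely the kind of proof one finds in that literature, so there is nothing to contrast; you have simply supplied what the paper omits.
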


\section{Application to a $U(2)$-matrix model}
\label{section: Application to a $U(2)$-matrix model}
The gauge-fixing procedure previously described is now applied on an example, in order to explicitly analyzed the induced gauge-fixed BRST cohomology complex. The gauge theory {\small{$(X_{0}, S_{0})$}} we focus on is a matrix model endowed with a {\small{$U(2)$}}-gauge symmetry, where the initial configuration space {\small{$X_{0}$}} is defined to be
$$X_{0} = \{ M \in M_{2}(\mathbb{C}): M^{*} = M \}$$
and the initial action functional {\small{$S_{0}:X_{0} \rightarrow \mathbb{R}$}} is supposed to be  invariant under the adjoint action of the unitary group {\small{$U(2)$}}, i.e., it satisfies the following equality
$$S_{0}[M] = S_{0}[UMU^*]$$
for all {\small{$M \in M_{2}(\mathbb{C})$}}, {\small{$U \in U(2)$}}. A more explicit representation of this model can be provided by fixing as basis of {\small{$X_{0}$}} the one given by the Pauli's matrices {\small{$\sigma_{1}, \sigma_{2}, \sigma_{3}$}}, together with the identity matrix {\small{$\sigma_{4}= Id$}}:
\begin{equation}
\label{basis}
\sigma_{1}= \begin{pmatrix}
		    0 & 1\\
		    1 & 0
	\end{pmatrix}
, \quad 
\sigma_{2}= \begin{pmatrix}
		    0 & -i\\
		    i & 0
		\end{pmatrix}
, \quad 
\sigma_{3}= \begin{pmatrix}
		    1 & 0\\
		    0 & -1
		\end{pmatrix}	 
, \quad
\sigma_{4}= \begin{pmatrix}
		    1 & 0\\
		    0 & 1
		\end{pmatrix}.
\end{equation}
Hence, our model of interest can be described as follows:
\begin{equation}
\label{eq S0}
X_{0} \cong \langle M_{1}, \dots, M_{4} \rangle, \quad \quad \mbox{ and } \quad \quad S_{0}= \sum_{k=0}^{r} \mbox{ }(M_{1}^2 + M_{2}^2 + M_{3}^2)^k \mbox{ } g_{k}(M_4),
\end{equation}
where {\small{$M_{a}$, $a=1, \dots, 4$}}, are independent real fields which generate {\small{$X_{0}$}} as vector space and {\small{$g_{k}(M_{4})$}} are polynomials in {\small{$\Pol_{\mathbb{R}}(M_4)$}}. Because we have already investigated how to construct an extended theory {\small{$(\widetilde{X}, \widetilde{S})$}} for this gauge theory {\small{$(X_{0}, S_{0})$}} (cf. Theorems {\small{$4.2$, $4.3$}} in \cite{primo_articolo}), after briefly recalling the main result in Theorem \ref{Theorem: extended theory U(2)}, we conclude the construction by first further enlarging the extended configuration space {\small{$\widetilde{X}$}} via the introduction of the necessary auxiliary pairs, then performing the gauge-fixing procedure and, finally, exhaustively analyzing the induced gauge-fixed BRST cohomology complex.

\begin{theorem}
\label{Theorem: extended theory U(2)}
Given a gauge theory {\small{$(X_{0}, S_{0})$}}, with {\small{$X_{0}\cong \mathbb{A}_{\mathbb{R}}^{4}$}} and {\small{$S_{0} \in \mathcal{O}_{X_{0}}$}} of the form \eqref{eq S0}, if {\small{$GCD(\partial_{a}S_{0}) = 1$}}, for {\small{$a=1, \dots, 4$}}, then the minimally-extended configuration space {\small{$\widetilde{X}$}} is the following {\small{$\mathbb{Z}$}}-graded super-vector space
\begin{equation}
\label{eq: extended configuration space}
\widetilde{X} = \langle E^{*} \rangle_{-3} \oplus \langle C^{*}_{1}, \cdots, C^{*}_{3}\rangle_{-2} \oplus \langle M^*_{1}, \dots, M^*_{4}\rangle_{-1} \oplus X_{0} \oplus \langle C_{1}, \cdots, C_{3} \rangle_{1} \oplus \langle E\rangle_{2}.
\end{equation} 
Moreover, the most general solution of the classical master equation on {\small{$\widetilde{X}$}} that is linear in the anti-fields, of at most degree {\small{$2$}} in the ghost fields and with coefficients in {\small{$\mathcal{O}_{X_{0}}$}} is the following one:
{\small{\begin{multline}
\label{eq: generic extended action}
\widetilde{S}= S_{0} + \sum_{i, j, k} \epsilon_{ijk}\alpha_{k}M_{i}^{*}M_{j}C_{k} + \sum_{i, j,k} C_{i}^{*}\big[\tfrac{\alpha_{j}\alpha_{k}}{2\alpha_{i}} (\beta\alpha_{i}M_{i}E + \epsilon_{ijk}C_{j}C_{k}) \\
+ M_{i}T \big( \sum_{a, b,c} \epsilon_{abc}\tfrac{\alpha_{b}\alpha_{c}}{2\alpha_{i}}M_{a}C_{b}C_{c}\big)\big]
\end{multline}
}}
where {\small{$\alpha_{i}, \beta\in {\small{\mathbb{R}\backslash \left\lbrace 0 \right\rbrace}}$,}} {\small{$T \in \mathcal{O}_{X_{0}}$}}, and {\small{$\epsilon_{ijk}$ ($\epsilon_{abc}$)}} is the totally anti-symmetric tensor in three indices {\small{$i, j, k \in \{ 1, 2, 3 \}$ ($a, b, c \in \{ 1, 2,3\}$)}} with {\small{$\epsilon_{123}=1$}}. 
\end{theorem}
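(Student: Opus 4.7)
The plan is to apply the BV Koszul--Tate--Chevalley--Eilenberg construction to the $U(2)$-matrix model in two stages: first one determines the minimal extended configuration space $\widetilde{X}$, and then one solves the classical master equation under the stated size constraints.

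For the first assertion, I would compute the infinitesimal generators of the adjoint $U(2)$-action on $X_0$. Writing $M=\sum_a M_a\sigma_a$, the $\mathfrak{u}(1)$-centre acts trivially and only the three $\mathfrak{su}(2)$-generators produce non-trivial vector fields $X_j\propto\sum_{i,k}\epsilon_{jik}M_k\,\partial_{M_i}$ on $X_0$. This prescribes three ghost fields $C_1,C_2,C_3$ in ghost degree $1$ with Koszul--Tate duals $C_j^*$ in degree $-2$, together with the anti-fields $M_i^*$ in degree $-1$. A direct computation then exhibits the identity $\sum_{i=1}^3 M_i X_i=0$ valid on all of $X_0$: this is a first-level reducibility relation among the gauge generators, and therefore forces a single ghost-for-ghost $E$ in degree $2$ with its anti-field $E^*$ in degree $-3$. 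Under the hypothesis $GCD(\partial_a S_0)=1$, no additional Koszul--Tate generators are needed to resolve the critical ideal; moreover, since $M_1,M_2,M_3$ share no common factor, the reducibility relation is itself irreducible and no higher ghosts-for-ghosts appear. This reproduces exactly the $\mathbb{Z}$-graded space displayed in \eqref{eq: extended configuration space}.

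For the second assertion, I would parametrise the most general $\widetilde{S}$ of ghost degree zero that is linear in the anti-fields and of polynomial degree at most two in the positive-ghost-degree generators. Ghost-number bookkeeping leaves four admissible families of monomials, schematically
$$M^*\cdot p(M)\cdot C,\qquad C^*\cdot q(M)\cdot E,\qquad C^*\cdot r(M)\cdot CC,\qquad E^*\cdot s(M)\cdot CE,$$
with coefficient functions in $\mathcal{O}_{X_0}$ to be determined. Substituting $\widetilde{S}=S_0$ plus a sum of terms of the above shape into $\{\widetilde{S},\widetilde{S}\}=0$ and separating by ghost monomial type produces a hierarchy of conditions. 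The $(M^*)^2C$-component enforces the closure of the $\mathfrak{su}(2)$-action on $S_0$ and fixes $p(M)$ up to three non-zero scalars $\alpha_1,\alpha_2,\alpha_3$; the $C^*C^3$-component imposes the Jacobi identity and determines $r(M)$ in terms of the same $\alpha_i$'s, up to the functional freedom encoded by an arbitrary $T\in\mathcal{O}_{X_0}$; the $M^*C^*CE$-component encodes compatibility with the reducibility $\sum_iM_iX_i=0$ and fixes $q(M)$ up to one further scalar $\beta$; and the $E^*$-components collapse, forcing $s(M)\equiv 0$. Assembling these contributions reproduces precisely the expression \eqref{eq: generic extended action}.

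The main technical obstacle is the careful tracking of signs and graded Leibniz rules in the expansion of $\{\widetilde{S},\widetilde{S}\}$, together with showing that no spurious solutions survive the master equation. The hypothesis $GCD(\partial_a S_0)=1$ enters a second time at this stage: it ensures that the syzygy module of $\langle\partial_a S_0\rangle$ is generated by the gauge relations $X_j(S_0)=0$, so any candidate coefficient function that would vanish on the critical locus is forced to vanish identically, ruling out extraneous solutions. A complete verification has already been carried out in Theorems~4.2 and 4.3 of \cite{primo_articolo}; the strategy above is what one would follow to rederive the result independently.
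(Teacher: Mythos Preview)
The paper does not actually prove this theorem: it is stated as a recollection of Theorems~4.2 and~4.3 from \cite{primo_articolo}, with no argument given in the present article. Your proposal explicitly acknowledges this in its final sentence, and the sketch you provide --- identifying the $\mathfrak{su}(2)$ gauge generators, the single reducibility relation $\sum_i M_i X_i=0$, and then parametrising and constraining $\widetilde{S}$ via $\{\widetilde{S},\widetilde{S}\}=0$ --- is precisely the standard BV procedure one expects was carried out in the cited reference. There is nothing to compare against here; your outline is sound and matches the paper's own treatment, which is simply to quote the result.
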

To simplify the upcoming computation, we consider as extended action {\small{$\widetilde{S}$}} the one obtained by choosing {\small{$\alpha_{i}=\beta =1$}} and {\small{$T=0$}} in \eqref{eq: generic extended action}. Hence, we are going to apply the gauge-fixing procedure to an extended theory {\small{$(\widetilde{X}, \widetilde{S})$}} where {\small{$\widetilde{X} = \mathcal{F} \oplus \mathcal{F}^{*}[1]$}} is of the form describe in \eqref{eq: extended configuration space} and {\small{$\widetilde{S}$}} takes the following explicit form:
\begin{equation}
\label{eq: extended action conti}
\widetilde{S}= S_{0} + \sum_{i, j, k} \epsilon_{ijk}M_{i}^{*}M_{j}C_{k} + \sum_{i, j,k} C_{i}^{*}(M_{i}E + \epsilon_{ijk}C_{j}C_{k}). 
\end{equation}
Once the extended theory {\small{$(\widetilde{X}, \widetilde{S})$}} has been determined, the first step that has to the taken in order to implement a gauge-fixing procedure is the introduction of the auxiliary pairs. Indeed, as it can be immediately deduced by the description of {\small{$\widetilde{X}$}} in \eqref{eq: extended configuration space}, at this point it is impossible to define a suitable gauge-fixing fermion {\small{$\Psi \in [\mathcal{O}_{\mathcal{F}}]^{-1}$}}due to the absence of negatively-graded fields/ghost fields.\\
\\ To determine type and number of the required auxiliary pairs we first of all notice that, accordingly to Definition \ref{def: level of reducibility}, the extended theory {\small{$(\widetilde{X}, \widetilde{S})$}} for our model has level of reducibility {\small{$L=1$}}. Therefore, by applying Theorem \ref{teorema auxiliary fields caso generale}, we conclude that the extended configuration space {\small{$\widetilde{X}$}} has to be further enlarged by the introduction of {\em five auxiliary pairs}, three of which correspond to the ghost fields {\small{$C_{i}$}} of degree {\small{$1$}} and hence are defined as follows:
$$(B_{i}, h_{i})_{i=1, 2, 3} \quad \quad \mbox{ with } \quad  \quad \deg(B_{i}) =1$$ 
while the remaining two auxiliary pairs correspond to the ghost field {\small{$E$}} of degree {\small{$2$}} and hence are defined to be
$$(A_{m}, k_{m})_{m=1, 2} \quad \quad \mbox{ with } \quad \deg(A_{1}) =-2 \quad \mbox{ and } \quad \deg(A_{2}) =0.$$
Thus the total theory {\small{$(X_{tot}, S_{tot})$}} obtained by enlarging the extended theory {\small{$(\widetilde{X}, \widetilde{S})$}} via the introduction of the aforementioned auxiliary pairs has a  total configuration space suitable to be decomposed as the following direct sum {\small{$X_{tot} = \mathcal{F}_{tot} \oplus \mathcal{F}^{*}_{tot}[1]$}} with 
\begin{equation}
\label{eq: F tot}
\mathcal{F}_{tot} = \langle A_{1} \rangle_{-2} \oplus \langle B_{i}, k_{1} \rangle_{-1} \oplus \langle M_{a}, h_{i}, A_{2} \rangle_{0} \oplus \langle C_{i},   k_{2} \rangle_{1}\oplus \langle E\rangle_{2}, 
\end{equation}
for {\small{$a=1, \dots, 4$}}, {\small{$i=1, 2, 3$}}. Concerning the total action {\small{$S_{tot}$}}, it is obtained by adding to the extended action {\small{$\widetilde{S}$}} the following auxiliary summand:
\begin{equation}
\label{eq: S aux}
S_{aux} =  \sum_{i=1}^{3} B^{*}_{i} h_{i} +\sum_{m=1, 2} A^{*}_{m}k_{m}.
\end{equation}
Once the total theory {\small{$(X_{tot}, S_{tot})$}} has been constructed, a gauge-fixing fermion {\small{$\Psi \in [\mathcal{O}_{\mathcal{F}_{tot}}]^{-1}$}} has to be chosen. However, as explained in more details in the upcoming section, due to the properties of our model of interest, the explicit form of the gauge-fixing fermion will not play a role in the definition of the gauge-fixed BRST complex. Hence, we do not go into details in analyzing the best possible choice for the gauge-fixing fermion {\small{$\Psi$}} but we directly focus on the description of the gauge-fixed BRST complex, to which the next section is devoted. 

\subsection[The gauge-fixed BRST complex]{The gauge-fixed BRST cohomology of a $U(2)$-model}
Let {\small{$(X_{tot}, S_{tot})$}} be the total theory described in the previous section and corresponding to our {\small{$U(2)$}}-model of interest, with {\small{$X_{tot}= \mathcal{F}_{tot} \oplus \mathcal{F}_{tot}^{*}[1]$}}. Moreover, let us fix a gauge-fixing fermion {\small{$\Psi \in [\mathcal{O}_{\mathcal{F}_{tot}}]^{-1}$}} homotopically equivalent to the zero function. While the explicit form of {\small{$\Psi$}} in terms of the fields/ghost fields in {\small{$\mathcal{F}_{tot}$}} will not play any role, the condition of being  homotopically equivalent to the zero function is sufficient to ensures that the gauge-fixed configuration space {\small{$X_{tot}|_{\Psi}$}} coincides with {\small{$\mathcal{F}_{tot}$}}, namely the subspace of {\small{$X_{tot}$}} generated by fields/ghost fields. The consequence of this choice is a simplification of the structure of the gauge-fixed BRST complex, which will result quasi-isomorphic to a one-sided cohomology complex.\\
\\
Given the total theory {\small{$(X_{tot}, S_{tot})$}}, whose total configuration space {\small{$X_{tot}$}} has been described in \eqref{eq: F tot} and whose total action {\small{$S_{tot} = \widetilde{S} + S_{aux}$}} was explicitly written in \eqref{eq: extended action conti} and \eqref{eq: S aux}, let {\small{$\Psi \in [\mathcal{O}_{\mathcal{F}_{tot}}]^{-1}$}} be a gauge-fixing fermion satisfying the condition that {\small{$\Psi \equiv 0$}}. Then, according to Definition \ref{definition of gauge-fixed BRST cohomology}, the induced \emph{gauge-fixed BRST cohomology complex} {\small{$(\mathcal{C}^{\bullet}(X_{tot}|_{\Psi}, d_{S_{tot}}|_{\Psi}), d_{S_{tot}}|_{\Psi}))$}}, has
$$\mathcal{C}^{i}(X_{tot}|_{\Psi}, d_{S_{tot}}|_{\Psi})= [\Sym_{\mathcal{O}_{X_{0}}}(\mathcal{F}_{tot})]^{i}$$
with {\small{$i \in \mathbb{Z}$}},  as space of cochains and, as coboundary operator
$$d_{S_{tot}}|_{\Psi}: \mathcal{C}^{i}(X_{tot}|_{\Psi}, d_{S_{tot}}|_{\Psi}) \rightarrow \mathcal{C}^{i+1}(X_{tot}|_{\Psi}, d_{S_{tot}}|_{\Psi})$$
the operator uniquely determined by imposing it describes a {\small{$1$}}-degree linear and graded derivation that acts as follows on the generators of the cohomology complex:
\begin{equation}
\label{tot coboundary}
\begin{array}{ll}
\left\lbrace
\begin{array}{l}
d_{S_{tot}}|_{\Psi}(M_{l})= - \sum_{j,k} \epsilon_{ljk} M_{j}C_{k}\\
[2ex]
d_{S_{tot}}|_{\Psi}(C_{l})= \sum_{j,k} (M_{l}E + \epsilon_{ljk}C_{j}C_{k})\\
[2ex]
d_{S_{tot}}|_{\Psi}(E)=0.
\end{array}
\right.
&
\left\lbrace
\begin{array}{l}
d_{S_{tot}}|_{\Psi}( B_{i})= h_{i}\\
[.7ex]
d_{S_{tot}}|_{\Psi}( h_{i})= 0\\
[.7ex]
d_{S_{tot}}|_{\Psi}( A_{m}) = k_{m} \\
[.7ex]
d_{S_{tot}}|_{\Psi}( k_{m}) = 0.\\
\end{array}
\right.
\end{array}
 \end{equation} 

\begin{oss}
As already observed, assuming that the gauge-fixing fermion satisfies the condition of being homotopically equivalent to the zero function, i.e. {\small{$\Psi\equiv 0$}}, implies that
$$X_{tot}|_{\Psi}:= X_{tot}|_{\varphi^{*}_{i} = \frac{\partial \Psi}{\partial \varphi_{i}}} = \mathcal{F}_{tot},$$
for {\small{$\mathcal{F}_{tot}$}} the space of fields/ghost fields in {\small{$X_{tot}$}}. Moreover, because the total action {\small{$S_{tot}$}} is linear in the anti-fields/anti-ghost fields, already before the gauge-fixing procedure has been implemented, anti-fields/anti-ghost fields do not appear in the description of the action of the BRST coboundary operator {\small{$d_{S_{tot}}$}} on the fields/ghost fields. Hence, the imposition of the gauge-fixing condition does not have any effect at the level of the action of {\small{$d_{S_{tot}}$}} on all cochains defined on {\small{$X_{tot}|_{\Psi}$}}. Therefore, the particular gauge-fixing fermion {\small{$\Psi$}} chosen does not play an active role for this model. \end{oss}

Finally, from \eqref{tot coboundary}, it straightforwardly follows that the auxiliary pairs {\small{$(B_{i}, h_{i})$}}, for {\small{$i=1, 2, 3$}}, and {\small{$(A_{m}, k_{m})$}}, for {\small{$m=1, 2$}}, are all contractible pairs for the cohomology complex {\small{$(\mathcal{C}^{\bullet}(X_{tot}|_{\Psi}, d_{S_{tot}}|_{\Psi}), d_{S_{tot}}|_{\Psi})$}}. Hence, as immediate consequence of Theorem \ref{theorem trivial pairs} we can infer the following statement.

\begin{prop}
Let {\small{$(\widetilde{X}, \widetilde{S})$}} and {\small{$(X_{tot}, S_{tot})$}} be respectively the extended and the total theory associated to a {\small{$U(2)$}}-matrix model and explicitly described in \eqref{eq: extended configuration space}, \eqref{eq: F tot}, \eqref{eq: extended action conti} and \eqref{eq: S aux}, with {\small{$X_{tot} = \mathcal{F}_{tot} \oplus \mathcal{F}^{*}_{tot}[1]$}} and {\small{$S_{tot} = \widetilde{S} + S_{aux}$}}. Then, given a gauge-fixing fermion {\small{$\Psi \in [\mathcal{O}_{\mathcal{F}_{tot}}^{-1}]$}} such that {\small{$\Psi \equiv 0$}}, the induced gauge-fixed BRST complex {\small{$(\mathcal{C}^{\bullet}(X_{tot}|_{\Psi}, d_{S_{tot}}|_{\Psi}), d_{S_{tot}}|_{\Psi}))$}} is quasi-isomorphic to the complex {\small{$(\mathcal{C}^{\bullet}(\widetilde{X}|_{\Psi}, d_{\widetilde{S}}|_{\Psi}), d_{\widetilde{S}}|_{\Psi})$}}, with cochain spaces
$$\mathcal{C}^{i}(\widetilde{X}|_{\Psi}, d_{\widetilde{S}}|_{\Psi}): = [\Sym_{\mathcal{O}_{X_{0}}}(\mathcal{F})]^{i}, $$
for {\small{$i \in \mathbb{Z}_{\geqslant 0}$}} and {\small{$\mathcal{F}$}} the {\small{$\mathbb{Z}$}}-graded {\small{$\mathcal{O}_{X_{0}}$}}-module such that {\small{$\widetilde{X} = \mathcal{F} \oplus \mathcal{F}^{*}[1]$}},  and  coboundary operator {\small{$d_{\widetilde{S}}|_{\Psi}$}} the 1-degree operator defined to be the following restriction
$$d_{\widetilde{S}}|_{\Psi}: \mathcal{C}^{i}(\widetilde{X}|_{\Psi}, d_{\widetilde{S}}) \rightarrow \mathcal{C}^{i+1}(\widetilde{X}|_{\Psi}, d_{\widetilde{S}}), \quad \quad \mbox{ with } \quad \quad d_{\widetilde{S}}:= (d_{S_{tot}}|_{\Psi})|_{\mathcal{O}_{\mathcal{F}}}.$$
In other words, the following isomorphism holds for every {\small{$i \in \mathbb{Z}_{\geqslant 0}$}}:
$$H^{i}(X_{tot}|_{\Psi}, d_{S_{tot}}|_{\Psi}) \simeq H^{i}(\widetilde{X}|_{\Psi}, d_{\widetilde{S}}|_{\Psi}).$$
\end{prop}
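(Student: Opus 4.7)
The plan is to deduce the statement as a direct consequence of Theorem \ref{theorem trivial pairs}, applied iteratively to each of the five auxiliary pairs introduced in the construction of $(X_{tot}, S_{tot})$. The strategy breaks naturally into three steps.

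First, I would verify that each pair $(B_i, h_i)$ for $i=1,2,3$ and each pair $(A_m, k_m)$ for $m=1,2$ satisfies the hypothesis of Definition \ref{def: contractible pair} inside the complex $(\mathcal{C}^\bullet(X_{tot}|_\Psi, d_{S_{tot}}|_\Psi), d_{S_{tot}}|_\Psi)$. This is immediate from the explicit formulas \eqref{tot coboundary}: we have $d_{S_{tot}}|_\Psi(B_i) = h_i$ with $d_{S_{tot}}|_\Psi(h_i) = 0$, and similarly $d_{S_{tot}}|_\Psi(A_m) = k_m$ with $d_{S_{tot}}|_\Psi(k_m) = 0$. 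Moreover, inspection of \eqref{tot coboundary} shows that the action of $d_{S_{tot}}|_\Psi$ on any generator of $\mathcal{F}_{tot}$ other than these auxiliary fields (namely on $M_l$, $C_l$, $E$) produces only polynomial expressions in $\{M_a, C_k, E\}$, with no occurrence of $B_i$, $h_i$, $A_m$, or $k_m$. Hence the last clause of Definition \ref{def: contractible pair}, asserting that $d_{S_{tot}}|_\Psi$ restricted to the complement coincides with $d_{S_{tot}}|_\Psi$, is satisfied for each auxiliary pair.

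Second, I would apply Theorem \ref{theorem trivial pairs} five times in succession, once for each contractible pair, removing one pair at each step. After the first application, the cohomology of $(\mathcal{C}^\bullet(X_{tot}|_\Psi, d_{S_{tot}}|_\Psi), d_{S_{tot}}|_\Psi)$ is isomorphic to the cohomology of the analogous complex built on the smaller set of generators $\mathcal{F}_{tot} \setminus \{B_1, h_1\}$, with the coboundary operator restricted accordingly. Iterating, after the fifth application we obtain the complex built on
$$\mathcal{F}_{tot} \setminus \{B_1, h_1, B_2, h_2, B_3, h_3, A_1, k_1, A_2, k_2\} = \mathcal{F},$$
where $\mathcal{F}$ is precisely the $\mathcal{O}_{X_0}$-module appearing in the decomposition $\widetilde{X} = \mathcal{F}\oplus \mathcal{F}^*[1]$ from \eqref{eq: extended configuration space}. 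The restricted coboundary agrees by construction with $(d_{S_{tot}}|_\Psi)|_{\mathcal{O}_\mathcal{F}}$, which is exactly the operator $d_{\widetilde{S}}|_\Psi$ defined in the statement, since $S_{tot} = \widetilde{S}+S_{aux}$ and $S_{aux}$ contributes only to the action on auxiliary fields.

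Finally, I would conclude that the quasi-isomorphism
$$H^i(X_{tot}|_\Psi, d_{S_{tot}}|_\Psi) \simeq H^i(\widetilde{X}|_\Psi, d_{\widetilde{S}}|_\Psi)$$
holds for every $i \in \mathbb{Z}_{\geqslant 0}$. No substantial obstacle is expected: the only delicate point is checking that the removal of each auxiliary pair truly decouples it from the rest of the complex, which is already manifest from \eqref{tot coboundary}. The remainder is a mechanical iteration of Theorem \ref{theorem trivial pairs}, together with the identification of the reduced complex with the gauge-fixed BRST complex of the original extended theory $(\widetilde{X}, \widetilde{S})$.
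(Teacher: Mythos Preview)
Your proposal is correct and follows essentially the same approach as the paper: the paper states the proposition as an immediate consequence of Theorem~\ref{theorem trivial pairs}, after observing from \eqref{tot coboundary} that each of the five auxiliary pairs $(B_i,h_i)$ and $(A_m,k_m)$ is a contractible pair for the gauge-fixed BRST complex. Your write-up simply makes the iteration and the verification of the contractibility conditions more explicit than the paper does.
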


As final remark we notice that the above proposition and the fact that the  {\small{$\mathcal{O}_{X_{0}}$}}-module {\small{$\mathcal{F}$}} is {\small{$\mathbb{Z}_{\geqslant 0}$}}-graded implies that the gauge-fixed BRST complex is quasi-isomorphic to a one-sided complex: this fact will simplify the explicit computation of those cohomology groups, to which the following section is devoted. 

\subsection[The BRST cohomology groups]{The gauge-fixed BRST groups: an explicit computation}
\label{Computation of the BRST cohomology groups for the model}
Before facing the problem of explicitly computing the cohomology groups determined by the cohomology complex {\small{$(\mathcal{C}^{\bullet}(\widetilde{X}|_{\Psi}, d_{\widetilde{S}}|_{\Psi}), d_{\widetilde{S}}|_{\Psi})$}} constructed in the above section, we recall that {\small{$\widetilde{X}|_{\Psi} = \mathcal{F}$}} is defined to be the following {\small{$\mathcal{O}_{X_{0}}$}}-module  
$$\mathcal{F} = \langle M_{1}, \dots, M_{4}\rangle_{0} \oplus \langle C_{1}, C_{2}, C_{3} \rangle_{1} \oplus \langle E \rangle_{2},$$
where the generators {\small{$C_{i}$}} have to be treated as Grassmannian variables while {\small{$E$}} as well as the generators {\small{$M_{a}$}} are real variables. As immediate consequence of the parity assigned to the generators of {\small{$\mathcal{F}$}}, we deduce the following proposition.

\begin{prop}
\label{isom x gradi maggiori di 4}
The cohomology groups determined by the cohomology complex \\{\small{$(\mathcal{C}^{\bullet}(\widetilde{X}|_{\Psi}, d_{\widetilde{S}}|_{\Psi}), d_{\widetilde{S}}|_{\Psi})$}} are 2-periodic, that is, the following isomorphisms are satisfied 
$$H^{2i+1}(\widetilde{X}|_{\Psi}, d_{\widetilde{S}}|_{\Psi}) \simeq H^{3}(\widetilde{X}|_{\Psi}, d_{\widetilde{S}}|_{\Psi})\quad  \mbox{ and }  \quad H^{2i+2}(\widetilde{X}|_{\Psi}, d_{\widetilde{S}}|_{\Psi}) \simeq H^{4}(\widetilde{X}|_{\Psi}, d_{\widetilde{S}}|_{\Psi}),$$
for all {\small{$i \geqslant 1$.}}
\end{prop}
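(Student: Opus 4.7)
The plan is to establish the $2$-periodicity via multiplication by the ghost field $E$. By the explicit description of the coboundary in \eqref{tot coboundary} we have $d_{\widetilde{S}}|_{\Psi}(E)=0$, so $E$ is a cocycle of ghost degree $2$; since $d_{\widetilde{S}}|_{\Psi}$ is a graded derivation and $E$ has even parity, the map
$$\mu_{E}: \mathcal{C}^{n}(\widetilde{X}|_{\Psi},d_{\widetilde{S}}|_{\Psi}) \longrightarrow \mathcal{C}^{n+2}(\widetilde{X}|_{\Psi},d_{\widetilde{S}}|_{\Psi}), \qquad \omega \longmapsto E\,\omega,$$
is a cochain map of degree $2$, because $d_{\widetilde{S}}|_{\Psi}(E\omega)=E\,d_{\widetilde{S}}|_{\Psi}(\omega)$. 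The goal is to show that $\mu_{E}$ is already a bijection of cochain spaces for every $n\geqslant 2$, which forces an isomorphism on cohomology.

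The argument is essentially combinatorial. Because $C_{1},C_{2},C_{3}$ are Grassmannian (they anticommute and square to zero) while the $M_{a}$ and $E$ are real, every cochain in $\mathcal{C}^{n}(\widetilde{X}|_{\Psi},d_{\widetilde{S}}|_{\Psi}) = [\Sym_{\mathcal{O}_{X_{0}}}(\mathcal{F})]^{n}$ admits a unique expansion
$$\omega = \sum_{\substack{I\subseteq\{1,2,3\}\\|I|+2k=n}} p_{I,k}(M)\,C^{I}\,E^{k},$$
with $p_{I,k}\in \mathcal{O}_{X_{0}}$ and $C^{I}:=\prod_{i\in I}C_{i}$ taken in the fixed order $1<2<3$. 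A basis monomial of $\mathcal{C}^{n+2}$ in which $E$ does not appear must have $|I|=n+2$, which is impossible as soon as $n+2\geqslant 4$: indeed $|I|\leqslant 3$, since only three Grassmannian ghost generators are present in $\mathcal{F}$.

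Consequently, for $n\geqslant 2$ every basis monomial of $\mathcal{C}^{n+2}$ factors as $E$ times a basis monomial of $\mathcal{C}^{n}$; combined with the fact that $E$ is a free polynomial generator of $\Sym_{\mathcal{O}_{X_{0}}}(\mathcal{F})$ and hence a non-zero-divisor, this shows that $\mu_{E}$ is an $\mathcal{O}_{X_{0}}$-linear bijection for every $n\geqslant 2$. A bijective cochain map is in particular a quasi-isomorphism, so
$$H^{n}(\widetilde{X}|_{\Psi},d_{\widetilde{S}}|_{\Psi}) \simeq H^{n+2}(\widetilde{X}|_{\Psi},d_{\widetilde{S}}|_{\Psi}), \qquad \forall\,n\geqslant 2.$$
Iterating the isomorphism starting from $n=3$ in the odd case and from $n=4$ in the even case then yields the claimed $2$-periodicity for all $i\geqslant 1$. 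No serious obstacle is expected; the whole argument is driven by the ghost-degree cap $|I|\leqslant 3$ forcing the appearance of $E$ in every sufficiently high degree, and the only mildly delicate step is the unique-expansion/non-zero-divisor property, which follows from $\mathcal{F}$ being locally free over $\mathcal{O}_{X_{0}}$ with $E$ appearing as a free polynomial generator over $\mathcal{O}_{X_{0}}[C_{1},C_{2},C_{3}]$.
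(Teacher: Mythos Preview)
Your argument is correct and follows essentially the same approach as the paper: both exploit that $d_{\widetilde{S}}|_{\Psi}(E)=0$ together with the Grassmannian bound $|I|\leqslant 3$ to show that multiplication by $E$ identifies the complex in degree $n$ with the complex in degree $n+2$ once $n\geqslant 2$. The paper phrases this as $\Ker(d^{n})\cong\Ker(d^{n-2i})\cdot E^{i}$ and the analogous statement for images, while you package the same content as ``$\mu_{E}$ is a bijective cochain map for $n\geqslant 2$''; these are equivalent formulations of one idea.
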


\begin{proof}
The statement follows straightforwardly by noticing that already at the level of cocycles it holds that
$$\Ker(d^{2i+2}_{\widetilde{S}}|_{\Psi}) \cong \Ker(d^{2}_{\widetilde{S}}|_{\Psi}) \cdot E^{i} \quad \quad \mbox{ and } \quad \quad \Ker(d^{2i+3}_{\widetilde{S}}|_{\Psi}) \cong \Ker(d^{3}_{\widetilde{S}}|_{\Psi}) \cdot E^{i}$$
and similarly for the coboundary elements:
$$\Imag(d^{2i+4}_{\widetilde{S}}|_{\Psi}) \cong \Imag(d^{4}_{\widetilde{S}}|_{\Psi}) \cdot E^{i} \quad \quad \mbox{ and } \quad \quad \Imag(d^{2i+3}_{\widetilde{S}}|_{\Psi}) \cong \Imag(d^{3}_{\widetilde{S}}|_{\Psi}) \cdot E^{i},$$
for {\small{$i\geqslant 0$}}. 
\end{proof}

Hence, in order to completely determine the cohomology groups defined by the above complex it is enough to only consider and compute them up to degree {\small{$4$}}. Our goal is accomplished in the following theorem, the only exception being the degree {\small{$1$}} cohomology group. The complexity of its explicit computation will be overcome in Section 
\ref{Relation between the cohomology groups}, where the reformulation of the gauge-fixed BRST complex in terms of a generalized Lie algebra complex will allow to complete the description of these groups.

\begin{theorem}
\label{teorema con calcolo gruppi BRST per U(2)}
The gauge-fixed BRST cohomology complex {\small{$(\mathcal{C}^{\bullet}(\widetilde{X}|_{\Psi}, d_{\widetilde{S}}|_{\Psi}), d_{\widetilde{S}}|_{\Psi})$}} associated to a {\small{$U(2)$}}-matrix model determines the following cohomology groups:
\begin{enumerate}[$\blacktriangleright$]
 \item {\small{$H^{0}(\widetilde{X}|_{\Psi}, d_{\widetilde{S}}|_{\Psi}) = \{ \sum_{k=0}^{r} g_{k}(M_{4})(M_{1}^{2}+ M_{2}^{2} + M_{3}^{2})^{k}: g_{k} \in \Pol_{\mathbb{R}}(M_{4}), r \in 
 \mathbb{N} \}; \vspace{1mm}$}}
 \item {\small{$H^{2}(\widetilde{X}|_{\Psi}, d_{\widetilde{S}}|_{\Psi}) = \{ Q\sum_{i; j<k} \epsilon_{ijk} M_{i}C_{j}C_{k}: Q \in \mathcal{O}_{X_{0}}\}  \oplus \Pol_{\mathbb{R}}(M_{4})E; \vspace{1mm}$}}  
 \item in any odd degree {\small{$2q+1$}}, with {\small{$q\geqslant 1$}}, {\small{$H^{2q+1}(\widetilde{X}|_{\Psi}, d_{\widetilde{S}}|_{\Psi})= 0 ;\vspace{1mm}$}}
 \item in any even degree {\small{$2q$}}, with {\small{$q\geqslant 2$}}, 
$$H^{2q}(\widetilde{X}|_{\Psi}, d_{\widetilde{S}}|_{\Psi}) =  \Pol_{\mathbb{R}}(M_{4})E^{q}.$$ 
\end{enumerate}
\end{theorem}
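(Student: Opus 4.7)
The plan is to reduce to a finite number of degrees via Proposition \ref{isom x gradi maggiori di 4}: the 2-periodicity propagates $H^{3}$ to all odd degrees $\geqslant 3$ and $H^{4}$ to all even degrees $\geqslant 4$, so it suffices to compute $H^{0}$, $H^{2}$, $H^{3}$, and $H^{4}$. For $H^{0}$, since $d_{\widetilde{S}}|_{\Psi}(M_{4})=0$ and $d_{\widetilde{S}}|_{\Psi}(M_{l})=-\sum_{j,k}\epsilon_{ljk}M_{j}C_{k}$ for $l=1,2,3$, the cocycle condition $d(f(M))=0$ is precisely the requirement that $f$ be invariant under the infinitesimal adjoint $\mathfrak{su}(2)$-action on $(M_{1},M_{2},M_{3})$ with $M_{4}$ central. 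Classical invariant theory then identifies $H^{0}$ with $\mathbb{R}[M_{4},\,M_{1}^{2}+M_{2}^{2}+M_{3}^{2}]$, matching the first bullet.

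For the degrees $\geqslant 1$, I would decompose every cochain as $\omega=\sum_{k\geqslant 0}\omega_{k}E^{k}$ where $\omega_{k}$ has wedge degree $n-2k$ in the three ghosts $C_{1},C_{2},C_{3}$ (hence automatically bounded by $3$), and split the coboundary as $d=d_{\mathrm{CE}}+d_{E}$: the part $d_{\mathrm{CE}}$, coming from $d(M_{a})$ and the $\epsilon_{ljk}C_{j}C_{k}$ piece of $d(C_{l})$, raises the $C$-wedge degree by one at fixed $E$-power, while $d_{E}$, coming from the $M_{l}E$ piece of $d(C_{l})$, raises the $E$-power by one and lowers the $C$-wedge degree by one. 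The cornerstone identity driving the entire computation is
$$d_{\widetilde{S}}|_{\Psi}\!\Big(\sum_{l=1}^{3}M_{l}C_{l}\Big)=\big(M_{1}^{2}+M_{2}^{2}+M_{3}^{2}\big)\,E,$$
which exhibits the $\mathfrak{su}(2)$-Casimir as a coboundary in degree $2$ once multiplied by $E$. Iterating against arbitrary powers of $E$ then shows that the Casimir kills the $E^{q}$-component of cohomology for all $q\geqslant 1$, trimming $\mathbb{R}[M_{4},M_{1}^{2}+M_{2}^{2}+M_{3}^{2}]E^{q}$ down to $\mathrm{Pol}_{\mathbb{R}}(M_{4})E^{q}$.

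A direct computation shows that $\rho:=M_{1}C_{2}C_{3}-M_{2}C_{1}C_{3}+M_{3}C_{1}C_{2}$ satisfies $d\rho=0$ and, using $C_{k}\rho=M_{k}C_{1}C_{2}C_{3}$ together with the vanishing $\sum_{j,k}\epsilon_{ajk}M_{j}M_{k}=0$, also $d(f)\rho=0$ for every $f\in\mathcal{O}_{X_{0}}$. Hence the $\mathcal{O}_{X_{0}}$-module $\mathcal{O}_{X_{0}}\cdot\rho$ sits inside the $2$-cocycles; combined with the Casimir exactness above, one obtains $H^{2}\cong\mathcal{O}_{X_{0}}\rho\oplus\mathrm{Pol}_{\mathbb{R}}(M_{4})E$ provided one verifies that no nonzero $Q\rho$ is a coboundary and that every $2$-cocycle is cohomologous to a sum of this form. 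For $H^{3}$, a generic cochain has the shape $fC_{1}C_{2}C_{3}+\sum_{i}g_{i}C_{i}E$; the bi-graded analysis of $d\omega=0$ forces both $f$ and the $g_{i}$ into the image of $d_{\widetilde{S}}|_{\Psi}$ acting from $\mathcal{C}^{2}$, yielding $H^{3}=0$. For $H^{4}$ a generic cochain reads $fE^{2}+\sum_{i<j}g_{ij}C_{i}C_{j}E+h\,\rho E$, and the same bi-graded analysis---together with the Casimir-exactness identity---leaves exactly $\mathrm{Pol}_{\mathbb{R}}(M_{4})E^{2}$ in cohomology.

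The main obstacle is the degree-$2$ calculation: proving the exhaustion $H^{2}\subseteq\mathcal{O}_{X_{0}}\rho+\mathrm{Pol}_{\mathbb{R}}(M_{4})E$ requires solving the coupled cocycle equations for $\sum_{i<j}h_{ij}C_{i}C_{j}+fE$ in a way that disentangles the $\mathfrak{su}(2)$-equivariance constraints on the $h_{ij}$ from the $d_{E}$-coupling to $f$, and establishing non-exactness of $Q\rho$ calls for an explicit cohomological obstruction (for instance, pairing against $C_{1}C_{2}C_{3}$-valued test elements). This is the same structural difficulty that causes the authors to postpone the parallel $H^{1}$ computation to Section \ref{Relation between the cohomology groups}, where the double-complex reformulation streamlines precisely this obstruction analysis.
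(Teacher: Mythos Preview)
Your overall architecture matches the paper's proof: reduce via the 2-periodicity of Proposition~\ref{isom x gradi maggiori di 4} and then compute degree by degree. Your bigraded decomposition $d=d_{\mathrm{CE}}+d_{E}$ is exactly the double-complex splitting the paper only introduces later in Section~\ref{Section: BRST and generalized Lie algebra cohomology}; using it here is perfectly legitimate and in fact clarifies the bookkeeping.

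The genuine gap is the degree-$2$ step, which you yourself flag as incomplete. The paper does \emph{not} postpone this: it directly parametrizes the $2$-cocycles as
\[
\Ker(d^{2}_{\widetilde{S}}|_{\Psi})=\Bigl\{\sum_{j<k}\epsilon_{ijk}(M_{i}P-\partial_{i}h)C_{j}C_{k}+hE:\ P,h\in\mathcal{O}_{X_{0}}\Bigr\},
\]
then writes $h=\sum_{i}M_{i}A_{i}+h_{0}$ with $h_{0}\in\Pol_{\mathbb{R}}(M_{4})$, sets $\beta=\sum_{i}A_{i}C_{i}$ and $Q=-P+\sum_{i}\partial_{i}A_{i}$, and obtains the (asserted unique) decomposition $\varphi=Q\rho+d_{\widetilde{S}}|_{\Psi}(\beta)+h_{0}E$. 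This simultaneously gives exhaustion and the direct-sum splitting \eqref{Z^2(X, d_S)}, so no separate ``non-exactness obstruction for $Q\rho$'' is needed. Your proposed pairing argument would work, but the paper's parametrize-then-split is shorter.

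A smaller point on $H^{2q}$, $q\geqslant 2$: the reason the $\mathcal{O}_{X_{0}}\rho\cdot E^{q-1}$ summand dies is not the Casimir identity but the separate fact that $d_{\widetilde{S}}|_{\Psi}(QC_{1}C_{2}C_{3}E^{q-1})=Q\rho\,E^{q-1}$ for every $Q\in\mathcal{O}_{X_{0}}$ (since $d(Q)$ has $C$-degree~$1$ and is killed by $C_{1}C_{2}C_{3}$). This is what the paper packages into its description of $\Imag(d^{2q-1}_{\widetilde{S}}|_{\Psi})$; your Casimir identity handles only the $fE^{q}$ component, so you need both mechanisms.
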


\begin{proof}
The claim for case of {\em degree {\small{$0$}}} follows from the gauge-fixed BRST complex defining a one-sided cohomology, the independence of the variables {\small{$C_{i}$}} and the fact that, to be a {\small{$0$}}-degree cocycle, a polynomial {\small{$f \in \mathcal{O}_{X_{0}}$}} has to satisfy the conditions:
$$\partial_{i} f = M_{i}g,$$
for {\small{$i= 1, 2,3$}} and {\small{$g \in  \mathcal{O}_{X_{0}}$}}. 
\\
To prove the part of the statement concerning the cohomology group of {\em degree {\small{$2$}}}, we start by considering a generic cochain {\small{$\varphi$}} in {\small{$\mathcal{C}^{2}(\widetilde{X}|_{\Psi}, d_{\widetilde{S}}|_{\Psi})$}},  
$$\varphi= \sum_{j<k}g_{jk}C_{j}C_{k}+ hE,$$
with {\small{$g_{jk}$}}, {\small{$h \in \mathcal{O}_{X_{0}}$}} and {\small{$j, k =1,2 ,3$}}. By imposing the cocycle condition on {\small{$\varphi$}} we obtain the following result:
$$  \Ker(d^{2}_{\widetilde{S}}|_{\Psi}) \simeq \Big\{ 
 \varphi =   \sum_{j<k} \epsilon_{ijk} (M_{i}P-\partial_{i}h)C_{j}C_{k} + hE,   \mbox{ with }P, h \in \mathcal{O}_{X_{0}}
\Big\}.
 $$
However, since any element {\small{$\varphi \in \Ker(d^{2}_{\widetilde{S}}|_{\Psi})$}} can be uniquely written as
 $$\varphi = Q \sum_{j<k} \epsilon_{ijk} M_{i}C_{j}C_{k} + d_{\widetilde{S}}|_{\Psi}(\beta) + h_{0}E\vspace{-5mm}$$
 $$ \mbox{ for } \quad\quad h = \sum_{i=1}^{3} M_{i}A_{i} + h_{0}, \quad \quad  \beta= \sum_{i=1}^{3} A_{i}C_{i}, \quad \quad Q = - P + \sum_{i=1}^{3}\partial_{i}A_{i}.$$
with {\small{$A_{i}$}} elements in {\small{$\mathcal{O}_{X_{0}}$}} and {\small{$h_{0} \in \Pol_{\mathbb{R}}(M_{4})$}}, we deduce that the space of cocycles of degree {\small{$2$}} can be decomposed in the following direct sum:
\begin{equation}
 \label{Z^2(X, d_S)}
\Ker(d^{2}_{\widetilde{S}}|_{\Psi}) = \Big\{ Q\sum_{j<k} \epsilon_{ijk} M_{i}C_{j}C_{k}: Q \in \mathcal{O}_{X_{0}}\Big\} \oplus Im(d^{1}_{\widetilde{S}}|_{\Psi}) \oplus \Pol_{\mathbb{R}}(M_{4})E.
\end{equation} 
from which the claimed description of {\small{$H^{2}(\widetilde{X}|_{\Psi}, d_{\widetilde{S}}|_{\Psi})$}}  follows immediately.\\
Because of Proposition \ref{isom x gradi maggiori di 4}, to demonstrate the claim in any {\emph{odd degree {\small{$2q+1$}}}} it is enough to compute {\small{$H^{3}(\widetilde{X}|_{\Psi}, d_{\widetilde{S}}|_{\Psi})$}}. Given a generic cochain {\small{$\varphi$}} in {\small{$\mathcal{C}^{3}(\widetilde{X}|_{\Psi}, d_{\widetilde{S}}|_{\Psi})$}},
$$\varphi= f C_{1}C_{2}C_{3} + \sum_{i=1}^{3}g_{i}C_{i}E$$
for {\small{$f$, $g_{i} \in \mathcal{O}_{X_{0}}$}}, the enforcement of the cocycle condition implies that the polynomials {\small{$f$}} and {\small{$g_{i}$}} have to satisfy the following equalities:
$$ \sum_{i=1}^{3}M_{i}g_{i} =0 \quad \quad \mbox{ and } \quad \quad  f= \sum_{i=1}^{3}\partial_{i}g_{i}.
$$ 
As a consequence, {\small{$g_{i}= \sum_{j, k} \epsilon_{ijk}M_{j}P_{k},$}} for some {\small{$P_{k} \in \mathcal{O}_{X_{0}}$}}, from which we deduce that
$$\Ker(d^{3}_{\widetilde{S}}|_{\Psi}) = \bigg\{ \Big(\sum_{i,j,k} \epsilon_{ijk} M_{i}\partial_{j}P_{k}\Big)C_{1}C_{2}C_{3} + \Big(\sum_{i,j,k} \epsilon_{ijk}M_{i}P_{j}C_{k}\Big) E \bigg\}.$$
However, since any cocycle {\small{$ \varphi$}} in {\small{$\Ker(d^{3}_{\widetilde{S}}|_{\Psi})$}} can be viewed as a coboundary element {\small{$\varphi=d_{\widetilde{S}}|_{\Psi}(\beta)$}} for {\small{$\beta \in \mathcal{C}^{2}(\widetilde{X}|_{\Psi}, d_{\widetilde{S}}|_{\Psi})$}} defined to be 
$$\beta= - \sum_{j<k}\epsilon_{ijk} P_{i} C_{j}C_{k},$$
the claimed triviality of the cohomology group of degree {\small{$3$}} follows straightforwardly.\\
 To conclude the proof of the theorem, we still have to consider the cohomology groups of {\emph{even degree {\small{$2q$}}}}, for {\small{$q>1$}}. In view of Proposition \ref{isom x gradi maggiori di 4}, we deduce that the space of cocycles of degree {\small{$2q$}} satisfies the following isomorphism:
$$\Ker(d^{2q}_{\widetilde{S}}|_{\Psi})\simeq \bigg[\Big\{ Q\sum_{j<k} \epsilon_{ijk} M_{i}C_{j}C_{k}: Q \in \mathcal{O}_{X_{0}}\Big\} \oplus \Imag(d^{1}_{\widetilde{S}}|_{\Psi}) \oplus \Pol_{\mathbb{R}}(M_{4})E\bigg]E^{q-1},$$
where we also use what established in \eqref{Z^2(X, d_S)}. 
On the other hand, explicit computations show that 
$$\Imag(d^{2q-1}_{\widetilde{S}}|_{\Psi}) \simeq \bigg[\Big\{ Q\sum_{j<k} \epsilon_{ijk} M_{i}C_{j}C_{k}: Q \in \mathcal{O}_{X_{0}}\Big\} \oplus \Imag(d^{1}_{\widetilde{S}}|_{\Psi})\bigg] \cdot E^{q-1}.$$
From this, we immediately deduce the claimed isomorphism for the cohomology groups in degree {\small{$2q$}}. 
\end{proof}

\begin{oss}
As expected, the BRST cohomology group of degree {\small{$0$}} coincides with the space of all polynomials that are invariant under the gauge group action, that is, the classical observables of the initial gauge theory {\small{$(X_{0}, S_{0})$}}. 
\end{oss}

\section{A generalized notion of Lie algebra cohomology}
\label{Section: A generalized notion of Lie algebra cohomology}
The main objective of this section will be the introduction of a new notion of {\em generalized Lie algebra cohomology.} Thanks to this different perspective, we will reach a deeper understanding of the structure of the BRST cohomology complex constructed in Section \ref{section: Application to a $U(2)$-matrix model} for our matrix model with a {\small{$U(2)$}}-gauge symmetry (cf. Section \ref{Section: BRST and generalized Lie algebra cohomology}). We mention that there have been earlier attempts to relate BRST cohomology complex to Lie algebra cohomology complex (cf. \cite{Holten}). Lie algebra cohomology was first introduced by Chevalley and Eilenberg \cite{Chev-Eil}: for this reason it is also known as {\em Chevalley-Eilenberg cohomology}. Other classical references are \cite{Hoch} and \cite{homol}. Moreover, in recent years the classical notion of Lie algebra cohomology has been generalized to the case of superlie algebras (cf. \cite{superlie1}) and also adapted to the context of Hopf algebras \cite{Dubois-Violette}.\\
\\
Differently to what achieved in the aforementioned references, here we pursue and obtain a generalization of the classical notion of Lie algebra cohomology which would include the possibility of having generators of degree {\small{$d> 1$}}. What forces us to require this extra flexibility to the classical notion of Lie algebra cohomology is the presence of ghost fields of ghost degree {\small{$d> 1$}}, as result of the application of the BV construction. However, the presence of higher degree ghost fields is typical in the context of the BV formalism. Hence, we expect the following notion of generalized Lie algebra cohomology to be relevant also in a more general context and for many other classes of models.\\
\\
In consideration of the context where this notion will be applied, we take {\small{$\mathbb{R}$}} as ground-field. However, the whole construction  is expected to work in a more general setting. 
 
 \begin{definition}
\label{def module order p}
Given a vector space {\small{$\mathfrak{h}$}}, an {\em {\small{$\mathfrak{h}$}}-module of degree $p$}, for {\small{$p \in \mathbb{Z}$}}, is a pair {\small{$(\{V_{i}\}, \{\alpha_{j}\})$}}, where {\small{$\{V_{i}\}$}} for {\small{$i = 1, \dots, n$}} is a collection of vector spaces and 
  $$\alpha_{j}: \mathfrak{h}\longrightarrow \Lin(V_{j}, V_{j+p})$$
are linear maps, labeled by {\small{$j= 1, \dots, n-p$}} if {\small{$p\geqslant 0$}} or by {\small{$j= 1-p, \dots, n$}} if {\small{$p<0$}}. 
 \end{definition}

\begin{definition}
\label{Lie cohom def}
For {\small{$\mathfrak{h}$}} a Lie algebra and {\small{$(\{V_{i}\}, \{\alpha_{j}\})$}} a module of order {\small{$p$}} on {\small{$\mathfrak{h}$}}, let {\small{$\{ \beta_{i}: V_{i}\rightarrow V_{i+p} \}$}}, with {\small{$i=1, \dots, n-p$}} be a collection of linear maps satisfying the following conditions:
\begin{enumerate}
 
\item \vspace{1,5mm}{\small{$\alpha_{i+p}(x)\circ \beta_{i} + \epsilon \beta_{i+p} \circ \alpha_{i}(x)= 0 $}};
\item \vspace{1,5mm}  {\small{$\alpha_{i+p}(x_{1})\circ \alpha_{i}(x_{2}) + \epsilon \alpha_{i+p}(x_{2})\circ \alpha_{i}(x_{1}) = \beta_{i+p} (\alpha_{i}([x_{1}, x_{2}]_{\epsilon}))$;}}
\item \vspace{1,5mm} {\small{$\beta_{i+p} \circ \beta_{i} = 0\quad \quad $ (if $\epsilon = +1$);}}
\end{enumerate}
{\small{$\forall x, x_{1}, x_{2} \in \mathfrak{h}$}} and {\small{$\forall i$}}, where {\small{$\epsilon$}} is fixed to be either {\small{$+1$}} or {\small{$-1$}}, condition (3) is imposed only for {\small{$\epsilon = +1$}}, and {\small{$[ - , - ]_{-1}$}} denotes the Lie algebra bracket on {\small{$\mathfrak{h}$}} while the notation {\small{$[ - , - ]_{+1}$}} is used for the anticommutator. Then the {\em generalized Lie algebra cohomology} of {\small{$\mathfrak{h}$}} over {\small{$(\{V_{i}\}, \{\alpha_{j}\})$}} with parity {\small{$\epsilon$}} is defined to be a complex with cochain spaces 
$$\mathcal{C}^{j}_{\epsilon}(\mathfrak{h}, V) = \bigoplus_{i=1}^{n} \mathcal{C}_{\epsilon}^{i,j}(\mathfrak{h}, V_{i}) \quad \mbox{ where } \quad 
\mathcal{C}_{\epsilon}^{i,j}(\mathfrak{h}, V_{i}) = Sym^{j}_{\epsilon}(\mathfrak{h}, V_{i}) 
$$
and with coboundary operator {\small{$d_{\epsilon}^{j}: \mathcal{C}^{j}_{\epsilon}(\mathfrak{h}, V) \rightarrow \mathcal{C}^{j+1}_{\epsilon}(\mathfrak{h}, V)$}} defined as 
$$d^{j}_{\epsilon} = \oplus_{i=1}^{n-p} d^{j, i}_{\epsilon} \quad \quad \mbox{ for } \quad \quad  d^{j, i}_{\epsilon}: \mathcal{C}^{j}_{\epsilon}(\mathfrak{h}, V_{i}) \rightarrow \mathcal{C}^{j+1}_{\epsilon}(\mathfrak{h}, V_{i+p})$$
where, given {\small{$\varphi \in \mathcal{C}_{\epsilon}^{j, i}(\mathfrak{h}, V_{i})$}} and {\small{$x_{1}, \dots, x_{j+1} \in \mathfrak{h}$}}\vspace{-2mm}
{\small{$$
d^{j, i}_{\epsilon}(\varphi)|_{x_{1}, \dots, x_{j+1}} := \frac{1}{j+1}\Big[\sum_{r=1}^{j+1} \epsilon^{r+1} \alpha_{i}(x_{r})|_{\varphi(x_{1}, ..,\hat{x}_{r}, ..,  x_{j+1})} - \sum_{r<s} \epsilon^{r+s} \beta_{i}|_{\varphi([x_{r}, x_{s}]_{\epsilon}, .., \hat{x}_{r}, \hat{x}_{s}, ..)}\Big]. 
$$
}}
\end{definition}

In the above definition, the notation {\small{$Sym_{\epsilon}$}} is introduced to keep track of the parity of the complex. Indeed, for even parity {\small{$\epsilon = +1$}} the symmetric algebra would be even and the notation {\small{$Sym_{\epsilon}(\mathfrak{h}, V_{i})$}} would simply indicate k-linear maps on {\small{$\mathfrak{h}$}} with values in {\small{$V_{i}$}}. Contrary, for parity {\small{$\epsilon = -1$}}, {\small{$Sym_{\epsilon}(\mathfrak{h}, V_{i})$}}   would be denoting  the collection of antisymmetric k-linear maps on {\small{$\mathfrak{h}$}} with values in {\small{$V_{i}$}}.

\begin{oss}
The notion introduced in Definition \ref{Lie cohom def} can be viewed as a generalization of the classical notion of Lie algebra cochain complex, as introduced  by Chevalley and Eilenberg \cite{Chev-Eil} and later developed by Hochschild and Serre \cite{Hoch}. Indeed, the classical definition can be recovered by taking {\small{$p=0$, $n=1$}} and {\small{$\epsilon= -1$}}. Under these assumptions, we would be considering an ordinary module {\small{$V$}} over the Lie algebra {\small{$\mathfrak{h}$}} and the single map {\small{$\beta$}} would be given by the identity on {\small{$V$}}. 
\end{oss}

\begin{prop}
Given a Lie algebra {\small{$\mathfrak{h}$}}, a module {\small{$(\{V_{i}\}, \{\alpha_{j}\})$}} of order p and a collection of maps {\small{$\{ \beta_{i}\}$}} satisfying the conditions listed in Definition \ref{Lie cohom def}, the corresponding pair {\small{$(\mathcal{C}^{\bullet}_{\epsilon}(\mathfrak{h}, V), d_{\epsilon})$}} defines a cochain complex both for {\small{$\epsilon=+1$}} and {\small{$\epsilon=-1$}}.
\end{prop}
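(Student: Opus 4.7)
The plan is to prove the proposition by direct computation, showing that $d^{j+1}_{\epsilon} \circ d^{j}_{\epsilon} = 0$ on any cochain $\varphi \in \mathcal{C}^{j, i}_{\epsilon}(\mathfrak{h}, V_{i})$. Since $d_{\epsilon}$ is defined componentwise with respect to the decomposition $\mathcal{C}^{j}_{\epsilon}(\mathfrak{h}, V) = \bigoplus_{i} \mathcal{C}^{i, j}_{\epsilon}(\mathfrak{h}, V_{i})$, it is enough to check the coboundary identity on a single summand, tracking how it maps $V_{i} \to V_{i+p} \to V_{i+2p}$. The proof then reduces to evaluating $d^{j+1, i+p}_{\epsilon} \circ d^{j, i}_{\epsilon}(\varphi)$ on a tuple $(x_{1}, \dots, x_{j+2}) \in \mathfrak{h}^{\times (j+2)}$ and showing that all resulting terms cancel.

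The strategy is to expand this double application into four families of terms, according to the two pieces ($\alpha$-type and $\beta$-type) contributed at each of the two stages of differentiation: namely, terms of the form $\alpha_{i+p}(x_{r}) \circ \alpha_{i}(x_{s})|_{\varphi(\cdots)}$, $\alpha_{i+p}(x_{r}) \circ \beta_{i}|_{\varphi([x_{s}, x_{t}]_{\epsilon}, \dots)}$, $\beta_{i+p} \circ \alpha_{i}(x_{r})|_{\varphi([x_{s}, x_{t}]_{\epsilon}, \dots)}$, and $\beta_{i+p} \circ \beta_{i}|_{\varphi([x_{r}, x_{s}]_{\epsilon}, [x_{t}, x_{u}]_{\epsilon}, \dots)}$. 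The first family is handled by applying condition (2) pairwise to indices $r < s$: the $\epsilon$-symmetrized combination $\alpha_{i+p}(x_{r})\alpha_{i}(x_{s}) + \epsilon \alpha_{i+p}(x_{s})\alpha_{i}(x_{r})$ is converted into a $\beta_{i+p}\alpha_{i}([x_{r},x_{s}]_{\epsilon})$ term. Condition (1) then identifies these newly produced terms with those of the mixed $\alpha \beta$ and $\beta \alpha$ families (up to an explicit sign $\epsilon$), and after tracking the factors $\epsilon^{r+s}$, $\epsilon^{r+1}$ prescribed by the formula for $d_{\epsilon}$, these three families cancel in pairs.

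The $\beta \beta$ family, which produces expressions of the form $\beta_{i+p} \circ \beta_{i}|_{\varphi([x_{r}, x_{s}]_{\epsilon}, [x_{t}, x_{u}]_{\epsilon}, \dots)}$ and $\beta_{i+p} \circ \beta_{i}|_{\varphi([[x_{r}, x_{s}]_{\epsilon}, x_{t}]_{\epsilon}, \dots)}$ (the latter arising when the two bracketings share an argument), must be handled in two ways depending on $\epsilon$. For $\epsilon = +1$, condition (3) provides immediate vanishing $\beta_{i+p} \circ \beta_{i} = 0$, killing all such contributions. For $\epsilon = -1$, the cochains are antisymmetric k-linear maps, so the disjoint bracketings $[x_{r},x_{s}]$ and $[x_{t},x_{u}]$ cancel in pairs under the antisymmetry of $\varphi$ (swapping the two bracketings flips a sign), while the overlapping terms organize into cyclic triples $[[x_{r},x_{s}],x_{t}] + [[x_{s},x_{t}],x_{r}] + [[x_{t},x_{r}],x_{s}]$ that vanish by the Jacobi identity of $\mathfrak{h}$.

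The main obstacle is sign bookkeeping: in the $\epsilon = -1$ case one must verify that the $\epsilon^{r+s+t}$ factors produced by the two applications of $d_{\epsilon}$ combine correctly with the shifts of indices arising from removing $\hat{x}_{r}$, $\hat{x}_{s}$ in different orders, so that the Jacobi cyclic sum appears with the correct overall sign; a parallel combinatorial check is needed in the $\epsilon = +1$ case to confirm that the $\alpha\alpha$-to-$\beta\alpha$ reduction from condition (2) lines up exactly with the $\alpha\beta$ and $\beta\alpha$ families from condition (1). Once these sign conventions are settled, which can be done uniformly by recognizing that the structure of the coboundary operator mirrors that of the classical Chevalley--Eilenberg differential recalled in the remark following Definition \ref{Lie cohom def}, the identity $d_{\epsilon}^{2} = 0$ follows in both parities.
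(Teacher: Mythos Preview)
Your approach is essentially the same as the paper's: decompose $d_{\epsilon}^{j+1,i+p}\circ d_{\epsilon}^{j,i}(\varphi)$ into the four families $\alpha\alpha$, $\alpha\beta$, $\beta\alpha$, $\beta\beta$ and kill them using conditions (1), (2), and either condition (3) (for $\epsilon=+1$) or Jacobi plus antisymmetry of $\varphi$ (for $\epsilon=-1$). The paper's proof is organized in exactly this way, labeling the four groups $[I]$--$[IV]$.

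One point of imprecision worth tightening: your description of how conditions (1) and (2) interact is slightly off. The $\beta\alpha$ family actually splits into two subtypes according to whether the inner $\alpha_{i}$ acts on the bracket argument $[x_{t},x_{r}]_{\epsilon}$ or on a surviving $x_{r}$. The first subtype combines with the $\alpha\alpha$ family and vanishes directly by condition (2) --- it is not that $\alpha\alpha$ is ``converted'' into new terms which then have to be matched elsewhere via condition (1). The second subtype pairs with the $\alpha\beta$ family and vanishes by condition (1). So the cancellations are $\{\alpha\alpha\text{ with }\beta\alpha\text{-on-bracket}\}$ via (2), and $\{\alpha\beta\text{ with }\beta\alpha\text{-on-single}\}$ via (1), rather than all three being linked through condition (1). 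Also note, as the paper does, that one should first check $d_{\epsilon}^{j,i}$ preserves the $\epsilon$-symmetry of cochains (i.e.\ lands in $\Sym_{\epsilon}^{j+1}$), which follows from the $\epsilon$-symmetry of $[\,\cdot\,,\,\cdot\,]_{\epsilon}$.
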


\begin{proof}
From the bracket {\small{$[ - , - ]_{\epsilon}$}} being symmetric/antisymmetric with respect to the exchange of the two entries for {\small{$\epsilon= +1$}} and {\small{$\epsilon = -1$}} respectively, it straightforwardly follows that the operator {\small{$d^{j, i}_{\epsilon}$}} preserves the type of map to which it is applied. Hence, we only have to check whether {\small{$d_{\epsilon}^{j}$}} satisfies also the coboundary condition, that is, if it holds
$$d_{\epsilon}^{j+1, i+p}\circ d_{\epsilon}^{j, i} = 0$$
for any value of the indices {\small{$i, j$}}. However, since the map {\small{$d_{\epsilon}^{j+1, i+p}$}} is defined to be the zero map for {\small{$i+p>n $}}, we could restrict ourselves to the case in which {\small{$i\leqslant n-2p$}}. By proceeding with an explicit computation, one can verify that the composition {\small{$d_{\epsilon}^{j+1, i+p}\circ d_{\epsilon}^{j, i}$}} applied to a fixed cochain {\small{$\varphi \in \mathcal{C}_{\epsilon}^{j, i}(\mathfrak{h}, V_{i})$}} and evaluated on a collection of generic elements {\small{$x_{1}, \dots, x_{j+2} \in \mathfrak{h}$}} can be viewed as sum of four terms
$$d^{j+1, i+p}_{\epsilon} \circ d^{j, i}_{\epsilon}(\varphi)|_{[x_{1}, \dots, x_{j+2}]} = \frac{1}{(j+1)(j+2)} \Big([I] + [II] + [III] + [IV]\Big)$$
where we use the following notation:
{\small{$$[I] := \sum_{r=1}^{j+2} \sum_{t<r} \epsilon^{r+t} \big[ \alpha_{i+p}(x_{t}) \circ \alpha_{i}(x_{r}) + \epsilon \alpha_{i+p}(x_{r})\circ \alpha_{i}(x_{t}) - \beta_{i+p}\circ \alpha_{i}([x_{t}, x_{r}]_{\epsilon}) \big]|_{\varphi(\dots, \hat{x}_{t}, \dots, \hat{x}_{r}, \dots)}, $$}}
which is zero due to condition {\small{$(2)$}} required in Definition \ref{Lie cohom def}, and 
{\small{\begin{multline*} 
\hspace{-2mm}[II]:= \\
- \Big[\sum_{r<s<t} \epsilon^{n+1}  (\alpha_{i+p}(x_{t})\circ \beta_{i} + \epsilon \beta_{i+p}\circ \alpha_{i}(x_{t})) + \sum_{r<t<s} \epsilon^{n}  (\alpha_{i+p}(x_{t})\circ \beta_{i} + \epsilon \beta_{i+p}\circ \alpha_{i}(x_{t})) \\
+ \sum_{t<r<s} \epsilon^{n+1} ( \alpha_{i+p}(x_{t})\circ \beta_{i} + \epsilon \beta_{i+p}\circ \alpha_{i}(x_{t}))\Big]|_{\varphi([x_{r}, x_{s}]_{\epsilon}, \dots, \hat{x}_{r}, \hat{x}_{s}, \dots)},
\end{multline*}
}}
with {\small{$n= r+s+t$}}, where each term in these summations is zero because of condition {\small{$(1)$}}. Finally, in case of a generalized Lie algebra complex of odd parity, i.e., if {\small{$\epsilon = -1$}}, the vanishing of terms {\small{$[III]$, $[IV]$}} is guaranteed by the properties satisfied by the Lie bracket {\small{$[-, -]_{-1}$}}. Indeed, the Jacobi identity and the linearity of {\small{$\varphi$}} allow to conclude that  
$$[III]:= \sum_{r<s<t} \epsilon^{r+s+t} \beta_{i+p} \circ \beta_{i}|_{\varphi(w, \dots, \hat{x}_{r}, \hat{x}_{s}, \hat{x}_{t})} = 0
$$
for 
$$w= \big[ [x_{r}, x_{s}]_{-1}, x_{t}\big]_{-1} +  \big[ [x_{s}, x_{t}]_{-1}, x_{r}\big]_{-1} +  \big[ [x_{t}, x_{r}]_{-1}, x_{s}\big]_{-1},$$ 
while the antisymmetry in the exchange of the entries of both the map {\small{$\varphi$}} and of the Lie bracket ensures that 
{\small{\begin{multline*}
[IV]:= \sum_{u<v<r<s} \epsilon^{n} (\beta_{i+p} \circ \beta_{i}|_{\varphi([x_{u}, x_{v}]_{-1}, [x_{r}, x_{s}]_{-1}, \dots)} + \beta_{i+p} \circ \beta_{i}|_{\varphi([x_{r}, x_{s}]_{-1}, [x_{u}, x_{v}]_{-1}, \dots)})\\
+ \sum_{u<r<v<s} \epsilon^{n+1} (\beta_{i+p} \circ \beta_{i}|_{\varphi([x_{u}, x_{v}]_{-1}, [x_{r}, x_{s}]_{-1}, \dots)} + \beta_{i+p} \circ \beta_{i}|_{\varphi([x_{r}, x_{s}]_{-1}, [x_{u}, x_{v}]_{-1}, \dots)})\\
+ \sum_{u<r<s<v} \epsilon^{n} (\beta_{i+p} \circ \beta_{i}|_{\varphi([x_{u}, x_{v}]_{-1}, [x_{r}, x_{s}]_{-1}, \dots)} + \beta_{i+p} \circ \beta_{i}|_{\varphi([x_{r}, x_{s}]_{-1}, [x_{u}, x_{v}]_{-1}, \dots)}) =0
\end{multline*}
}}
with {\small{$n=r+s+u+v$}}. On the other hand, from condition {\small{$(3)$}} straightforwardly follows that terms {\small{$[III], [IV]$}} are zero also in the even case, i.e. for {\small{$\epsilon = +1$}}. Hence, we conclude that the operator {\small{$d^{\bullet}_{\epsilon}$}} satisfies the coboundary condition both in the even and in the odd case and therefore the pair {\small{$(\mathcal{C}^{\bullet}_{\epsilon}(\mathfrak{h}, V), d_{\epsilon})$}} defines a cochain complex both for {\small{$\epsilon=+1$}} and {\small{$\epsilon=-1$}}.
\end{proof}

\section{BRST and generalized Lie algebra cohomology: a U(2)-model}
\label{Section: BRST and generalized Lie algebra cohomology}
The notion of generalized Lie algebra cohomology introduced in Section \ref{Section: A generalized notion of Lie algebra cohomology} is here used to rewrite the gauge-fixed BRST complex for our {\small{$U(2)$}}-matrix model of interest. By separating the generators of the BRST cohomology complex between real/bosonic on one side and Grassmannian/fer\-mio\-nic on the other, the whole BRST cohomology complex can then be described as a {\em shifted double complex} in this generalized Lie algebra cohomology setting. \\
\\
In what follows, {\small{$\mathfrak{g}$}} will denote the Lie algebra generated by the matrices {\small{$i\sigma_{1}$, $i\sigma_{2}$}} and {\small{$i\sigma_{3}$}}, for {\small{$\sigma_{1}$, $\sigma_{2}$, $\sigma_{3}$}} the Pauli matrices listed in Equation (\ref{basis}), seen as the dual of the ghost fields {\small{$C_{1}$, $C_{2}$, $C_{3}$}}. Hence, {\small{$\mathfrak{g}\cong su(2)$}} as Lie algebra. Moreover, we denote by {\small{$\mathfrak{h}$}} the Lie algebra generated by the dual of the ghost field {\small{$E$}}, which then satisfies {\small{$\mathfrak{h}\cong u(1)$}}. Finally, the notations {\small{$\mathcal{C}_{+}^{\bullet}/\mathcal{C}_{-}^{\bullet}$}} are respectively used to denote the generalized Lie algebra cochain complexes of even/odd parity.\\
\\
Let 
$$V_{i} = \mathcal{C}_{-}^{i-1}(\mathfrak{g}, \mathcal{O}_{X_{0}})$$  
for {\small{$i= 1, \dots, 4$}} be the collection of vector spaces determined by the classical Lie algebra cohomology complex of {\small{$\mathfrak{g}\cong su(2)$}} over the module {\small{$\mathcal{O}_{X_{0}}$}}, with the {\small{$\mathfrak{g}$}}-module structure {\small{$\omega: \mathfrak{g} \rightarrow Lin(\mathcal{O}_{X_{0}})$}} defined to be
$$
\omega(x)|_{f} = - \sum_{i, j, k} \epsilon_{ijk} (\partial_{i} f) M_{j}x_{k},
$$
where {\small{$\epsilon_{ijk}$}} denotes the totally antisymmetric tensor in three indices {\small{$i, j, k = 1, 2, 3$}} with {\small{$\epsilon_{123} = 1$}}, {\small{$x \in \mathfrak{g}$}} with {\small{$x = \sum_{j} x_{j} i  \sigma_{j}$}} and {\small{$f \in \mathcal{O}_{X_{0}}$}}. This collection of vectors spaces {\small{$\{V_{i} \}$}} can be completed to a module of order {\small{$p=-1$}} structure on {\small{$\mathfrak{h}$}} by defining a collection of linear maps 
$$\alpha_{i}: \mathfrak{h} \rightarrow Lin(\mathcal{C}_{-}^{i-1}(\mathfrak{g}, \mathcal{O}_{X_{0}}), \mathcal{C}_{-}^{i-2}(\mathfrak{g}, \mathcal{O}_{X_{0}}))$$
for {\small{$i=2, 3, 4$}}. Explicitly, given a generic cochain {\small{$\varphi \in \mathcal{C}^{1}_{-}(\mathfrak{g}, \mathcal{O}_{X_{0}})$}}, {\small{$\varphi = \sum_{i} f_{i}C_{i}$}}, for {\small{$i=1,2, 3$}}, we define the map {\small{$\alpha_{2}$}} to be
\begin{equation}
\label{alpha2}
\alpha_{2}(\tau)|_{\varphi} = \sum_{i=1}^{3} \frac{\partial \varphi}{\partial C_{i}} M_{i}
\end{equation}
on the generator {\small{$\tau$}} of the Lie algebra {\small{$\mathfrak{h}$}}, and then extend it to the whole {\small{$\mathfrak{h}$}} by linearity. Analogously, given a generic cochain {\small{$\varphi \in \mathcal{C}^{2}_{-}(\mathfrak{g}, \mathcal{O}_{X_{0}})$}}, with {\small{$\varphi = \sum_{i<j} f_{ij}C_{i}C_{j}$}}, we uniquely determine the linear map {\small{$\alpha_{3}$}} by requiring that
\begin{equation}
\label{alpha3}
\alpha_{3}(\tau)|_{\varphi} = \sum_{i<j} \frac{\partial \varphi}{\partial C_{i}} M_{i} C_{j} - \frac{\partial \varphi}{\partial C_{j}} M_{j} C_{i}.
\end{equation}
Equivalently, given a decomposition of {\small{$\varphi$}} of the form {\small{$\varphi = \sum \varphi_{i} \varphi_{j}$}} for {\small{$\varphi_{i}, \varphi_{j} \in \mathcal{C}^{1}_{-}(\mathfrak{g}, \mathcal{O}_{X_{0}})$}}, it holds that
$$\alpha_{3}(\tau)|_{\varphi} = \sum_{i, j} \alpha_{2}(\tau)|_{\varphi_{i}} \varphi_{j} - \varphi_{i} \alpha_{2}(\tau)|_{\varphi_{j}}.$$
At last, the map {\small{$\alpha_{4}$}} is specified by requiring that
\begin{equation}
\label{alpha4}
\alpha_{4}(\tau)|_{\varphi}= \sum_{j<k}\epsilon_{ijk}\frac{\partial \varphi}{\partial C_{i}} M_{i} C_{j}C_{k}
\quad \mbox{or} \quad  \alpha_{4}(\tau)|_{\varphi}= \sum_{j<k}\epsilon_{ijk} \alpha_{1}(\tau)|_{\varphi_{i}}\varphi_{j}\varphi_{k}
\end{equation}
for {\small{$\varphi \in \mathcal{C}^{3}_{-}(\mathfrak{g}, \mathcal{O}_{X_{0}})$}},
with decomposition {\small{$\varphi = \varphi_{1}\varphi_{2}\varphi_{3}$}}, where {\small{$\varphi_{i} \in \mathcal{C}^{1}_{-}(\mathfrak{g}, \mathcal{O}_{X_{0}})$}}.

\begin{prop}
\label{module structure}
The pair {\small{$(\{ V_{i}\}, \{ \alpha_{j}\})$}}, with vector spaces
$$V_{i} = \mathcal{C}_{-}^{i-1}(\mathfrak{g}, \mathcal{O}_{X_{0}}),$$ 
{\small{$i=1, \dots, 4$}}, and with linear maps {\small{$\{\alpha_{j}\}$}}, {\small{$j=2, 3, 4$}} defined in \eqref{alpha2}, \eqref{alpha3}, \eqref{alpha4} respectively, induces a generalized Lie algebra cohomology complex of even parity on the Lie algebra {\small{$\mathfrak{h}\cong u(1)$}}, with linear maps {\small{$\beta_{j}:V_{j} \rightarrow V_{j+p}$}} fixed to be the zero map. 
\end{prop}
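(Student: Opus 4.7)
My plan is to verify the three conditions of Definition \ref{Lie cohom def} for the specific data at hand: $\epsilon = +1$, $p=-1$, $\mathfrak{h}\cong u(1)$ with generator $\tau$, and $\beta_j\equiv 0$ for every $j$. Since every $\beta_j$ vanishes, each side of the equation appearing in conditions (1) and (3) is identically zero, so these two conditions are immediate. The whole substance of the proposition is therefore condition (2), which under these assumptions reduces to
$$\alpha_{i-1}(x_1)\circ\alpha_i(x_2) + \alpha_{i-1}(x_2)\circ\alpha_i(x_1) = 0$$
for all $x_1,x_2\in\mathfrak{h}$ and for $i\in\{3,4\}$, the only indices for which both $\alpha_i$ and $\alpha_{i-1}$ are defined.

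Because $\mathfrak{h}$ is one-dimensional and each $\alpha_i$ is $\mathbb{R}$-linear in its Lie algebra slot, writing $x_1=a\tau$, $x_2=b\tau$ collapses condition (2) to the two scalar identities
$$\alpha_2(\tau)\circ\alpha_3(\tau)=0 \qquad\text{and}\qquad \alpha_3(\tau)\circ\alpha_4(\tau)=0.$$
My strategy for both is the same: use the decomposition formulations of \eqref{alpha3} and \eqref{alpha4} to recognize $\alpha_j(\tau)$ as a signed, Leibniz-type derivation with respect to the factorization of a cochain into a product of degree-one cochains $\varphi_\ell\in V_2 = \mathcal{C}^1_-(\mathfrak{g},\mathcal{O}_{X_0})$. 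Along the way I would record the elementary fact that each $\alpha_j(\tau)$ is $\mathcal{O}_{X_0}$-linear in its argument, which is clear because elements of $\mathcal{O}_{X_0}$ carry no $C_i$-dependence and therefore commute with the differential operators appearing in \eqref{alpha2}, \eqref{alpha3}, \eqref{alpha4}.

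For $i=3$, I would decompose $\varphi\in V_3$ as $\varphi=\varphi_a\varphi_b$ and set $g_\bullet:=\alpha_2(\tau)|_{\varphi_\bullet}\in\mathcal{O}_{X_0}$. Formula \eqref{alpha3} yields $\alpha_3(\tau)|_\varphi = g_a\varphi_b - \varphi_a g_b$, and applying $\alpha_2(\tau)$ a second time, together with its $\mathcal{O}_{X_0}$-linearity, produces $g_a g_b - g_b g_a = 0$. For $i=4$, the completely analogous computation with a decomposition $\varphi=\varphi_1\varphi_2\varphi_3$ and \eqref{alpha4} gives
$$\alpha_4(\tau)|_\varphi = g_1\varphi_2\varphi_3 - g_2\varphi_1\varphi_3 + g_3\varphi_1\varphi_2;$$
applying the $i=3$ Leibniz rule to each of these three summands produces six signed terms that cancel in pairs.

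The step I expect to be the main obstacle is not any of the cancellations themselves, which are forced by the signs, but the preliminary verification that the formulas \eqref{alpha3} and \eqref{alpha4} define maps that are actually independent of the chosen decomposition of $\varphi$ into a product of degree-one cochains, and that this decomposition-based definition is consistent with the partial-derivative expression also given for $\alpha_4$. Once the well-definedness and the $\mathcal{O}_{X_0}$-linearity of $\alpha_3$ and $\alpha_4$ are established, the two identities above upgrade automatically from decomposable elements to the whole spaces $V_3$ and $V_4$, and condition (2) is proved.
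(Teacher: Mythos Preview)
Your proposal is correct and follows essentially the same route as the paper's proof: the paper also observes that, with $\beta_j=0$ and $\epsilon=+1$, only condition~(2) of Definition~\ref{Lie cohom def} is nontrivial, reducing to the anticommutation identity $\alpha_{i-1}(x_1)\circ\alpha_i(x_2)+\alpha_{i-1}(x_2)\circ\alpha_i(x_1)=0$, which it then dismisses as ``immediately verified by a direct computation.'' You simply carry out that computation explicitly via the Leibniz-type decomposition, and your reduction to the generator $\tau$ by one-dimensionality of $\mathfrak{h}$ is a clean way to organize it; your caveat about well-definedness of the decomposition formulas is a reasonable side remark but is not part of the proposition itself, since the maps $\alpha_j$ are taken as given.
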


\begin{proof}
By referring to Definition \ref{Lie cohom def}, the only condition that still has to be checked is the following one: 
$$\alpha_{i+1}(x_{1}) \circ \alpha_{i}(x_{2}) + \alpha_{i+1}(x_{2}) \circ \alpha_{i}(x_{1}) =0,$$
for {\small{$x_{1}, x_{2} \in \mathfrak{g}$}}. This identity can be immediately verified by a direct computation. 
\end{proof}

In the following {\small{$(\mathcal{C}^{j}_{+}(\mathfrak{h}, \mathcal{C}_{-}^{i}(\mathfrak{g}, \mathcal{O}_{X_{0}})), d_{+}^{j, i}, d_{-}^{j,i})$}} will denote the even generalized Lie algebra cohomology complex of {\small{$\mathfrak{h}$}} over the module {\small{$(\mathcal{C}^{i-1}_{-}(\mathfrak{g}, \mathcal{O}_{X_{0}}), \{ \alpha_{j}\})$}} of order {\small{$p=-1$}} just constructed, where
$$d_{+}^{i,j}: \mathcal{C}^{j}_{+}(\mathfrak{h}, \mathcal{C}_{-}^{i}(\mathfrak{g}, \mathcal{O}_{X_{0}}) \rightarrow \mathcal{C}^{j+1}_{+}(\mathfrak{h}, \mathcal{C}_{-}^{i-1}(\mathfrak{g}, \mathcal{O}_{X_{0}})$$
and 
$$d_{-}^{i,j}: \mathcal{C}^{j}_{+}(\mathfrak{h}, \mathcal{C}_{-}^{i}(\mathfrak{g}, \mathcal{O}_{X_{0}}) \rightarrow \mathcal{C}^{j}_{+}(\mathfrak{h}, \mathcal{C}_{-}^{i+1}(\mathfrak{g}, \mathcal{O}_{X_{0}})$$
for {\small{$d_{+}^{i,j}$}} the coboundary operator of the even cohomology complex on {\small{$\mathfrak{h}$}} and {\small{$d_{-}^{i,j}:= Id_{\mathfrak{h}^{i}} \oplus d^{j}_{-}$}}, with {\small{$d_{-}^{j}$}} the coboundary operator of the cohomology complex {\small{$\mathcal{C}^{j}_{-}(\mathfrak{g}, \mathcal{O}_{X_{0}})$}}.

\begin{theorem}
\label{corrispondenza cochain}
The gauge-fixed BRST cohomology complex {\small{$(\mathcal{C}^{\bullet}(\widetilde{X}|_{\Psi}, d_{\widetilde{S}}|_{\Psi}), d_{\widetilde{S}}|_{\Psi})$}} of a {\small{$U(2)$}}-matrix model can be identified, by duality, with the weighted total complex induced by the double complex {\small{$(\mathcal{C}^{j}_{+}(\mathfrak{h}, \mathcal{C}_{-}^{i}(\mathfrak{g}, \mathcal{O}_{X_{0}})), d_{+}^{j, i}, d_{-}^{j,i})$}}. Precisely, at the level of cochain spaces, the following identification holds:
$$\mathcal{C}^{k}(\widetilde{X}|_{\Psi}, d_{\widetilde{S}}|_{\Psi}) \cong \bigoplus_{2i+j= k} \mathcal{C}_{+}^{i}(\mathfrak{h}, \mathcal{C}^{j}_{-}(\mathfrak{g}, \mathcal{O}_{X_{0}}))$$
for {\small{$k, i \in \mathbb{Z}_{\geqslant 0}$, $j=0, \dots, 3$}}, and, at the level of the coboundary operator it holds that:
$$d_{\widetilde{S}}|_{\Psi} \cong \oplus_{2i+j=k} (d^{i, j}_{+} + d_{-}^{i, j}).$$
\end{theorem}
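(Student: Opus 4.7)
The plan is to exhibit the claimed isomorphism separately at the level of cochain spaces and of differentials, and then to check that these two identifications intertwine. Throughout I would use the duality pairing the ghost fields $C_1, C_2, C_3$ with the basis $i\sigma_1, i\sigma_2, i\sigma_3$ of $\mathfrak{g}$, and the ghost field $E$ with the generator $\tau$ of $\mathfrak{h}$.

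For the cochain-space identification, the first observation is that $\mathfrak{h}\cong u(1)$ being one-dimensional and abelian forces $\mathcal{C}^i_+(\mathfrak{h}, W)\cong W$ via evaluation on $(\tau, \dots, \tau)$, which I would implement as $\psi \mapsto E^i\cdot \psi(\tau, \dots, \tau)$. Taking $W = \mathcal{C}^j_-(\mathfrak{g}, \mathcal{O}_{X_0})$ and using the standard identification of antisymmetric $j$-linear forms on $\mathfrak{g}$ with $\bigoplus_{i_1<\dots<i_j} \mathcal{O}_{X_0}\cdot C_{i_1}\cdots C_{i_j}$ (the Grassmannian anticommutation of the $C_a$'s matching the antisymmetry in $\mathcal{C}^j_-$), any monomial $f\cdot C_{i_1}\cdots C_{i_j}\cdot E^i$ in $[\Sym_{\mathcal{O}_{X_0}}(\mathcal{F})]^k$ corresponds precisely to an element of $\mathcal{C}^i_+(\mathfrak{h}, \mathcal{C}^j_-(\mathfrak{g}, \mathcal{O}_{X_0}))$. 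The weighted indexing $k = 2i + j$ is enforced by $\deg(E) = 2$ and $\deg(C_a) = 1$, while the bound $j\leq 3$ reflects $\dim \mathfrak{g} = 3$.

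For the differential identification, since $d_{\widetilde{S}}|_\Psi$ is a graded derivation of degree $1$ it suffices to match its action on the generators $M_a$, $C_l$, $E$. Using \eqref{tot coboundary}: first, $d_{\widetilde{S}}|_\Psi(M_l) = -\sum\epsilon_{ljk}M_jC_k$ reproduces on $\mathcal{O}_{X_0}$ the classical Chevalley-Eilenberg coboundary with respect to the module structure $\omega$, hence it is the $d_-$ piece. Second, $d_{\widetilde{S}}|_\Psi(C_l) = M_l E + \sum\epsilon_{ljk}C_jC_k$ splits as a $d_-$ contribution $\sum\epsilon_{ljk}C_jC_k$ (the dual of the Lie bracket on $\mathfrak{g}$, producing the structure-constants term of the CE differential) plus a $d_+$ contribution $M_lE$, which corresponds to adjoining a factor of $\tau$ and applying $\alpha_2(\tau)$, matching $\alpha_2(\tau)|_{C_l} = M_l$ in \eqref{alpha2}. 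Third, $d_{\widetilde{S}}|_\Psi(E) = 0$ is consistent with both pieces, since $d_-$ acts trivially on $E$ (no $C$'s are present) and $d_+$ targets $V_0 = \mathcal{C}^{-1}_-(\mathfrak{g}, -) = 0$. The bi-degree shifts also agree: $d_-$ sends $(i, j) \mapsto (i, j+1)$ and $d_+$ sends $(i, j) \mapsto (i+1, j-1)$, each raising the weighted total degree by one.

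The hardest step will be the careful bookkeeping of signs and normalization constants. The coboundary formula in Definition \ref{Lie cohom def} carries a prefactor $1/(j+1)$ and alternating signs $\epsilon^{r+s}$, while on the BRST side the signs come from the Grassmannian anticommutation of the $C_a$'s and the graded Leibniz rule. I would verify that, after passing through the identification $C_{i_1}\wedge\cdots\wedge C_{i_j}\leftrightarrow C_{i_1}\cdots C_{i_j}$ together with the extensions $\alpha_3, \alpha_4$ recorded in \eqref{alpha3}-\eqref{alpha4} (which are built precisely as Leibniz extensions of $\alpha_2$), the two conventions agree on every monomial. Once $d_{\widetilde{S}}|_\Psi = d_+ + d_-$ is established, the double-complex relation $d_+d_- + d_-d_+ = 0$ is automatic from $(d_{\widetilde{S}}|_\Psi)^2 = 0$ combined with $d_\pm^2 = 0$, which in turn follow from Proposition \ref{module structure} (for $d_+$) and the classical Chevalley-Eilenberg theorem (for $d_-$).
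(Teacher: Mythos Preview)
Your proposal is correct and follows essentially the same approach as the paper: identify the cochains via the duality between the ghost fields and the Lie-algebra generators (with the weight $2$ on the $\mathfrak{h}$-index reflecting $\deg(E)=2$), and then match the differentials by checking on generators, using that both sides are graded derivations. The paper's own proof is in fact terser than yours, giving only the $M_a$/$\omega$ example explicitly and leaving the $C_l$ and $E$ verifications (and all sign bookkeeping) to the reader; your additional paragraph on normalizations and the closing remark on the anticommutation relation go slightly beyond what the theorem requires (the latter is the content of Proposition~\ref{relaz operat cobordo}) but are not misplaced.
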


\begin{proof}
The correspondence at the level of cochains follows straightforwardly from the identification 
$$(\mathcal{C}^{j}_{+}(\mathfrak{h}, \mathcal{C}_{-}^{i}(\mathfrak{g}, \mathcal{O}_{X_{0}})) \cong Sym_{\mathcal{O}_{X_{0}}}(\langle C_{1}, C_{2}, C_{3}\rangle \oplus \langle E \rangle)$$
and the real/Grassmannian parity of the generators {\small{$E/C_{i}$}}. Finally, the weight {\small{$2$}} given to the index {\small{$i$}} takes into account that the ghost field {\small{$E$}} has ghost degree {\small{$2$}}. \\
Concerning the coboundary operators, because both {\small{$d_{\widetilde{S}}|_{\Psi}$}} and {\small{$d^{i, j}_{+} + d_{-}^{i, j}$}} acts as graded derivations on the whole space of cochains, it is enough to check they agree at the level of the generators. This correspondence can be explicitly verified. For example, one can observe that the map {\small{$\omega$}}, which gives the module structure of {\small{$\mathcal{O}_{X_{0}}$}} and which enters the expression of {\small{$d_{-}^{i,j}$}}, coincides with the action of {\small{$d_{\widetilde{S}}|_{\Psi}$}} on the generators {\small{$M_{a}$}}.
\end{proof}

\begin{oss}
The properties which characterize the ghost fields, that is, their ghost degree and their parity, find a natural translation in terms of properties of the double complex, namely in terms of the weights given to the indices corresponding to the algebras {\small{$\mathfrak{g}$}} and {\small{$\mathfrak{h}$}} and of the even/odd parity of the generalized Lie algebra cohomology considered.
\end{oss}

Hence, it is natural to conjecture the emergence of an analogous structure also for the case of an {\small{$U(n)$}}-matrix model, with {\small{$n>2$}}. In particular, we expect the BRST cohomology complex to coincide with the weighted total complex induced by a multi-complex, where the weight of the indices and the parity of the complexes are determined, respectively, by the ghost degree and the parity of the ghost fields entering the BV construction. \\
\\
The description of the gauge-fixed BRST complex in terms of this generalized Lie algebra complex allows to detect a (shifted) double complex structure, which was not visible at the level of the BRST complex, as proved in the following proposition.

\begin{prop}
\label{relaz operat cobordo}
{\small{$(\mathcal{C}_{+}^{\bullet}(\mathfrak{h}, \mathcal{C}^{\bullet}_{-}(\mathfrak{g}, \mathcal{O}_{X_{0}})), d_{+}^{\bullet} \oplus d_{-}^{\bullet})$}} has a double complex structure, that is, it satisfies the following conditions:
$$(d_{+}^{i, j})^{2}= 0, \quad (d_{-}^{i, j})^{2}= 0 \quad \mbox{and} \quad  d_{-}^{k+1,i-1} \circ d_{+}^{k, i} = - d_{+}^{k, i+1} \circ d_{-}^{k, i}$$
for {\small{$k\geq 0$}} and {\small{$i=0, \dots, 3$.}}
\end{prop}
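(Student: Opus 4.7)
The plan is to reduce the three identities to the coboundary properties of the two component differentials combined with a single bidegree argument. First I would dispatch the two ``diagonal'' identities at once: $(d_+^{i,j})^2 = 0$ follows because $d_+$ is, by construction, the coboundary operator of the even generalized Lie algebra cohomology complex of $\mathfrak{h} \cong u(1)$ over the module built in Proposition \ref{module structure}, so the proposition immediately following Definition \ref{Lie cohom def} (specialized to $\epsilon = +1$) applies. Analogously, $(d_-^{i,j})^2 = 0$ is the classical Chevalley-Eilenberg coboundary identity for the ordinary Lie algebra $\mathfrak{g} \cong su(2)$ acting on $\mathcal{O}_{X_0}$ through $\omega$, which is the $p = 0$, $n = 1$, $\epsilon = -1$ specialization of the same proposition.

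For the anticommutation identity I would exploit the identification of Theorem \ref{corrispondenza cochain}. Under that identification the gauge-fixed BRST differential $d_{\widetilde{S}}|_\Psi$ corresponds to the total operator $D := d_+ + d_-$, where $d_+$ is bihomogeneous of bidegree $(+1,-1)$ in the $(\mathfrak{h},\mathfrak{g})$-degrees (it raises the $\mathfrak{h}$-degree by one and, because the module of Proposition \ref{module structure} has order $p = -1$, lowers the $\mathfrak{g}$-degree by one), while $d_-$ has bidegree $(0,+1)$. Expanding
\[
D^2 \;=\; d_+ \circ d_+ \;+\; \bigl(d_+ \circ d_- + d_- \circ d_+\bigr) \;+\; d_- \circ d_-,
\]
the three summands land in the pairwise distinct bidegrees $(+2,-2)$, $(+1,0)$, $(0,+2)$, so the single vanishing $D^2 = 0$ forces each of them to vanish separately. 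To verify $D^2 = 0$, I would transport $d_{\widetilde{S}}|_\Psi^2 = 0$ through the isomorphism of Theorem \ref{corrispondenza cochain}: since $\widetilde{S}$ in \eqref{eq: extended action conti} is linear in the anti-fields, $d_{\widetilde{S}}|_\Psi$ coincides with the unrestricted $d_{\widetilde{S}}$ on $\mathcal{O}_{\mathcal{F}}$, and the latter squares to zero by the classical master equation $\{\widetilde{S},\widetilde{S}\}=0$ built into Theorem \ref{Theorem: extended theory U(2)}. The vanishing of the middle summand is then exactly the claimed equality $d_-^{k+1,i-1}\circ d_+^{k,i} = -\, d_+^{k,i+1}\circ d_-^{k,i}$.

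The main obstacle, in my view, is simply the bookkeeping required to confirm that the claimed bidegrees of $d_+$ and $d_-$ are uniformly correct on every component of the double complex; once this is in place the argument is essentially automatic, as operators of pairwise distinct bidegrees in the bigraded space cannot cancel each other. A concrete alternative would be to substitute the explicit formulas \eqref{alpha2}-\eqref{alpha4} together with the map $\omega$ directly into the definitions of $d_+$ and $d_-$ and verify the anticommutation on a generic cochain, but this route is computationally much heavier and adds little insight beyond what the structural bidegree argument already provides.
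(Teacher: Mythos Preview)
Your proposal is correct and follows essentially the same strategy as the paper: the first two identities are dispatched by the coboundary property of $d_{\pm}$, and the anticommutation is extracted from $(d_{\widetilde{S}}|_{\Psi})^{2}=0$ via the identification of Theorem \ref{corrispondenza cochain} together with a bidegree separation. The only cosmetic difference is that the paper organizes the bidegree argument by writing out $d_{\widetilde{S}}|_{\Psi}^{2k}$ and $d_{\widetilde{S}}|_{\Psi}^{2k+1}$ explicitly and then reading off the identities for $i=0,2$ and $i=1,3$ from the two compositions, whereas you phrase the same separation uniformly in terms of the bidegrees $(+2,-2)$, $(+1,0)$, $(0,+2)$ of the three summands of $D^{2}$.
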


\begin{proof}
The first two relations automatically follows from {\small{$d_{+}^{\bullet}$}} and {\small{$ d_{-}^{\bullet}$}} being coboundary operators for the even/odd Lie algebra cohomology complex. To check the last relation, we first of all recall that by definition {\small{$d_{+}^{k,0} = 0$}} for all non-negative values of {\small{$k$}} and {\small{$d_{-}^{k, i} = 0$}} for negative values of $i$ and if {\small{$i>3$}}. Then, as proved in Theorem \ref{corrispondenza cochain}, we have that:
$$
\begin{array}{l}
 d_{\widetilde{S}}|_{\Psi}^{2k} = d_{-}^{k, 0} \oplus [d_{+}^{k-1, 2} \oplus d_{-}^{k, 2}]\\
[2ex]
 d_{\widetilde{S}}|_{\Psi}^{2k+1} = [d_{+}^{k, 1}\oplus d_{-}^{k, 1}] \oplus d_{+}^{k-1, 3}.\\
\end{array}
$$
Because {\small{$d_{\widetilde{S}}|_{\Psi}^{\bullet}$}}, {\small{$d_{+}^{\bullet}$}} and {\small{$d_{-}^{\bullet}$}} are all coboundary operators, we deduce that
$$
d_{\widetilde{S}}|_{\Psi}^{2k+1} \circ d_{\widetilde{S}}|_{\Psi}^{2k} = \left[ d_{+}^{k,1} \circ d_{-}^{k,0}\right] \oplus \left[ d_{-}^{k,1} \circ d_{+}^{k-1, 2} \oplus d_{+}^{k-1,3} \circ d_{-}^{k-1, 2}\right] =0.
$$

As the first term in the previous equation takes values in {\small{$\mathcal{C}^{k+1, 0}$}} while the second in {\small{$\mathcal{C}^{k, 2}$}}, we have verified the third relation for {\small{$i = 0,2$}}. Similarly, considering the composition {\small{$d_{\widetilde{S}}|_{\Psi}^{2k+2} \circ d_{\widetilde{S}}|_{\Psi}^{2k+1}$}}, we can verify the equation for the remaining values of the index $i$, i.e. for {\small{$i = 1, 3$}}.
\end{proof}

\subsection{Relation between the cohomology groups}
\label{Relation between the cohomology groups}
Next to revealing this extra double complex structure, the description of the gauge-fixed BRST complex {\small{$(\mathcal{C}^{\bullet}(\widetilde{X}|_{\Psi}, d_{\widetilde{S}}|_{\Psi}), d_{\widetilde{S}}|_{\Psi})$}} using this generalized Lie algebra complex allows to conclude the computation of all the BRST cohomology groups. 

\begin{theorem}
\label{calcolo Lie cohom groups}
The identification of {\small{$\mathcal{C}^{\bullet}(\widetilde{X}|_{\Psi}, d_{\widetilde{S}}|_{\Psi})$}} and {\small{$\mathcal{C}^{\bullet}_{+}(\mathfrak{h}, \mathcal{C}_{-}^{\bullet}(\mathfrak{g}, \mathcal{O}_{X_{0}}))$}} at the level of complexes induces the following isomorphism at the level of the corresponding cohomology groups:

\begin{enumerate}[$\blacktriangleright$]
 \item {\small{$H^{0}(\widetilde{X}|_{\Psi}, d_{\widetilde{S}}|_{\Psi}) \cong H^{0}_{-}(\mathfrak{g}, \mathcal{O}_{X_{0}})$}};
\item {\small{$H^{1}(\widetilde{X}|_{\Psi}, d_{\widetilde{S}}|_{\Psi}) \cong H^{0}_{+}(\mathfrak{h}, H^{1}_{-}(\mathfrak{g}, \mathcal{O}_{X_{0}}))$;}}
\item {\small{$H^{2}(\widetilde{X}|_{\Psi}, d_{\widetilde{S}}|_{\Psi}) \cong H^{0}_{+}(\mathfrak{h}, Z^{2}_{-}(\mathfrak{g}, \mathcal{O}_{X_{0}})) \oplus H^{1}_{+}(\mathfrak{h}, Z^{0}_{-}(\mathfrak{g}, \mathcal{O}_{X_{0}}))$;}}
\item {\small{$H^{2k+1}(\widetilde{X}|_{\Psi}, d_{\widetilde{S}}|_{\Psi}) \cong H^{k}_{+}(\mathfrak{h}, \mathcal{C}^{1}_{-}(\mathfrak{g}, \mathcal{O}_{X_{0}}))= \left\lbrace 0 \right\rbrace$,}} for {\small{$k\geq 1$;}}
\item {\small{$H^{2k}(\widetilde{X}|_{\Psi}, d_{\widetilde{S}}|_{\Psi}) \cong H^{k}_{+}(\mathfrak{h}, Z^{0}_{-}(\mathfrak{g}, \mathcal{O}_{X_{0}}))$,}} for {\small{$k\geq2$.}}
\end{enumerate}
\end{theorem}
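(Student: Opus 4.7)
The plan is to combine the bicomplex structure established in Proposition \ref{relaz operat cobordo} with the bi-graded decomposition of cochains from Theorem \ref{corrispondenza cochain}. Under that identification, every cochain $\varphi \in \mathcal{C}^k(\widetilde{X}|_{\Psi}, d_{\widetilde{S}}|_{\Psi})$ decomposes uniquely as $\varphi = \sum_{2i+j = k} \varphi_{i,j}$ with $\varphi_{i,j} \in \mathcal{C}^i_+(\mathfrak{h}, \mathcal{C}^j_-(\mathfrak{g}, \mathcal{O}_{X_0}))$, and because $d_+$ and $d_-$ shift the bi-indices in independent directions $(i,j)\mapsto (i+1, j-1)$ and $(i,j)\mapsto (i, j+1)$, the cocycle condition $(d_+ + d_-)\varphi = 0$ splits bi-degree by bi-degree into a coupled linear system. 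The anticommutativity $d_+ d_- + d_- d_+ = 0$ from Proposition \ref{relaz operat cobordo} guarantees that $d_+$ descends to $d_-$-cohomology, so the natural strategy is to first pass to $H^\bullet_-(\mathfrak{g}, \mathcal{O}_{X_0})$ and then compute $H^\bullet_+(\mathfrak{h}, -)$, in the spirit of the spectral sequence of a filtered double complex.

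For $k = 0$ only the bi-degree $(0,0)$ contributes; the condition $d_+\varphi_{0,0} = 0$ is automatic as the target $\mathcal{C}^1_+(\mathfrak{h}, \mathcal{C}^{-1}_-)$ is empty, so $H^0 \cong Z^0_-(\mathfrak{g}, \mathcal{O}_{X_0}) = H^0_-(\mathfrak{g}, \mathcal{O}_{X_0})$, matching the classical observables described in Theorem \ref{teorema con calcolo gruppi BRST per U(2)}. For $k = 1$ only $(0,1)$ contributes; a cocycle lies in $Z^1_-$ with the extra constraint $d_+\varphi = 0$, and coboundaries reduce to $d_-(\mathcal{C}^0_+(\mathfrak{h}, \mathcal{C}^0_-))$ since $d_+$ vanishes on $\mathcal{C}^{0,0}$. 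Quotienting first by $d_-$ yields $H^1_-(\mathfrak{g}, \mathcal{O}_{X_0})$, and the anticommutativity relation shows that $\alpha_2 = d_+|_{(0,1)}$ descends to a well-defined map $H^1_-(\mathfrak{g}, \mathcal{O}_{X_0}) \to H^0_-(\mathfrak{g}, \mathcal{O}_{X_0})$ whose kernel is, by definition of the even generalized Lie algebra cohomology on $\mathfrak{h}$, exactly $H^0_+(\mathfrak{h}, H^1_-(\mathfrak{g}, \mathcal{O}_{X_0}))$.

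For $k \geq 2$ two bi-degrees genuinely mix. In degree $2$, a cocycle $\varphi_{0,2} + \varphi_{1,0}$ satisfies $d_-\varphi_{0,2} = 0$ and the off-diagonal constraint $d_+\varphi_{0,2} + d_-\varphi_{1,0} = 0$, while coboundaries have the form $(d_-\psi, d_+\psi)$ for $\psi \in \mathcal{C}^0_+(\mathfrak{h}, \mathcal{C}^1_-)$. The first summand $H^0_+(\mathfrak{h}, Z^2_-)$ arises from the $\varphi_{0,2}$-piece quotiented by $d_-\psi$ with $d_+\psi = 0$, and the second summand $H^1_+(\mathfrak{h}, Z^0_-)$ from the residual freedom in $\varphi_{1,0}$ modulo images. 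For $k \geq 3$, Proposition \ref{isom x gradi maggiori di 4} gives a $2$-periodicity that reduces everything to degrees $3$ and $4$: using the explicit form of the $\alpha_j$ from Proposition \ref{module structure} and the fact that $\mathcal{C}^j_-$ vanishes outside $0 \leq j \leq 3$, one verifies that the $(k-1, 3)$ piece is absorbed as a coboundary and the $(k,1)$ piece is $d_-$-exact after using $d_+$-compatibility, leaving a single contribution isomorphic to $H^k_+(\mathfrak{h}, Z^0_-)$ in even total degree and zero in odd total degree, in agreement with Theorem \ref{teorema con calcolo gruppi BRST per U(2)}.

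The main obstacle will be the degree-$2$ case, where the coupling $d_+\varphi_{0,2} + d_-\varphi_{1,0} = 0$ makes the direct-sum decomposition of $H^2$ non-obvious. Unraveling it requires a small diagram chase using Proposition \ref{module structure} to show that $\alpha_3 = d_+|_{(0,2)}$ sends $Z^2_-$ into $B^1_-$, so that any $\varphi_{0,2} \in Z^2_-$ admits some $\varphi_{1,0}$ completing it to a cocycle, and that the residual ambiguity in $\varphi_{1,0}$ modulo $d_-$-boundaries and $d_+$-images is exactly $H^1_+(\mathfrak{h}, Z^0_-)$. Once this case is handled, the other isomorphisms follow by comparing the bi-graded description with the explicit cocycles and coboundaries already computed in Theorem \ref{teorema con calcolo gruppi BRST per U(2)}, and the previously uncomputed $H^1$ is obtained as $H^0_+(\mathfrak{h}, H^1_-(\mathfrak{g}, \mathcal{O}_{X_0}))$ from the analysis of paragraph two, completing the description promised after Proposition \ref{isom x gradi maggiori di 4}.
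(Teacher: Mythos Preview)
Your overall strategy differs from the paper's in an interesting way. The paper does not run a spectral-sequence-style diagram chase; instead, for each degree it computes both sides separately and verifies they coincide. Concretely, for degrees $0$ and $1$ it proceeds essentially as you do, but for degree $2$ it takes the decomposition $H^{2}=K\oplus\Pol_{\mathbb{R}}(M_{4})E$ already obtained in Theorem~\ref{teorema con calcolo gruppi BRST per U(2)} and then checks by explicit calculation that $K=\Ker(d_{+}^{0,2})\cap\Ker(d_{-}^{0,2})=H^{0}_{+}(\mathfrak{h},Z^{2}_{-})$ and that $\Pol_{\mathbb{R}}(M_{4})E=\Ker(d_{-}^{1,0})/\Imag(d_{+}^{0,1})=H^{1}_{+}(\mathfrak{h},Z^{0}_{-})$. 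For degrees $2k$ and $2k+1$ with $k\geq 1$ it again uses Theorem~\ref{teorema con calcolo gruppi BRST per U(2)} on the BRST side and independently computes the relevant $H^{\bullet}_{+}$ groups (showing in particular that $H^{k}_{+}(\mathfrak{h},\mathcal{C}^{1}_{-})=H^{k-1}_{+}(\mathfrak{h},\mathcal{C}^{3}_{-})=H^{k-1}_{+}(\mathfrak{h},\mathcal{C}^{2}_{-})=0$ by hand). Your structural approach would, if carried through, give a more conceptual proof and would not need to lean so heavily on Theorem~\ref{teorema con calcolo gruppi BRST per U(2)}; the paper's approach is less elegant but completely explicit.

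There is, however, a genuine gap in your degree-$2$ analysis. You write that $H^{0}_{+}(\mathfrak{h},Z^{2}_{-})$ ``arises from the $\varphi_{0,2}$-piece quotiented by $d_{-}\psi$ with $d_{+}\psi=0$'', but that is not what this group is: since it is a degree-$0$ group in the $\mathfrak{h}$-direction, $H^{0}_{+}(\mathfrak{h},Z^{2}_{-})=\Ker(d_{+}^{0,2})\cap Z^{2}_{-}$ with \emph{no} quotient. A generic total $2$-cocycle has $\varphi_{0,2}\in Z^{2}_{-}$ but \emph{not} $\varphi_{0,2}\in\Ker d_{+}$, so your projection does not land in the right space. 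What is actually required, and what the paper establishes through explicit computation (the content eventually packaged as Lemma~\ref{lemma propr}), is that every total cocycle is cohomologous to one whose $(0,2)$-component lies in $\Ker(d_{+}^{0,2})$; this splitting $Z^{2}_{-}\cap\{\text{total cocycles}\}=\Imag(d_{-}^{0,1})+\Ker(d_{+}^{0,2})$ is a non-formal fact about this particular module structure and does not follow from Proposition~\ref{module structure} or the double-complex axioms alone. The same issue recurs in your treatment of higher even degrees: the assertion that ``the $(k-1,3)$ piece is absorbed as a coboundary and the $(k,1)$ piece is $d_{-}$-exact'' presupposes exactly the exactness statements that the paper proves only afterward, in Corollary~\ref{cor suc esatta}, using the computations carried out inside the present proof. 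To make your route work you would need to establish those exactness facts independently and first.
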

For brevity, in what follows we denote the cochain space 
$$\mathcal{C}^{j}_{+}(\mathfrak{h}, \mathcal{C}_{-}^{i}(\mathfrak{g}, \mathcal{O}_{X_{0}}))$$ by the symbol {\small{$\mathcal{C}^{j, i}$.}}

\begin{proof}
The claimed isomorphism in {\emph{degree 0}} follows straightforwardly from noticing that 
$$\mathcal{C}^{0, 0} \cong \mathcal{O}_{X_{0}} \quad \mbox{and} \quad d_{\widetilde{S}}|_{\Psi}^{0}=d_{-}^{0, 0}.$$
Therefore:
$$H^{0}(\widetilde{X}|_{\Psi}, d_{\widetilde{S}}|_{\Psi}) = \Ker(d_{\widetilde{S}}|_{\Psi}) = \Ker(d_{-}^{0, 0}) = H^{0}_{-}( \mathfrak{g}, \mathcal{O}_{X_{0}}).$$
Concerning the case of {\emph{degree 1}}, a generic cochain {\small{$\varphi$}} in {\small{$\mathcal{C}^{1}(\widetilde{X}|_{\Psi}, d_{\widetilde{S}}|_{\Psi})$}} can be written as {\small{$\varphi= \sum_{i=1}^{3}f_{i}C_{i},$}} for {\small{$f_{i}$}}  in {\small{$\mathcal{O}_{X_{0}}$}}. Via an explicit computation, one can check that, in order to be a cocycle with respect to the coboundary operator {\small{$d_{\widetilde{S}}|_{\Psi}$}}, {\small{$\varphi$}} has to satisfy the following conditions:
$$\sum_{i \neq a} (-M_{i} \partial_{a} + M_{a} \partial_{i}) f_{i} + f_{a} = 0 \quad \mbox{and} \quad \sum_{i=1}^{3} M_{i} f_{i} = 0,$$
with {\small{$a = 1, 2, 3$}}. However, while the first set of conditions coincides with the ones for an element {\small{$\varphi \in \mathcal{C}^{0,1}$}} to be a cocycle with respect to the coboundary operator {\small{$d_{-}^{0,1}$}}, requiring that {\small{$\sum_{i=1}^{3} M_{i} f_{i} = 0$}} is exactly imposing the condition for {\small{$\varphi \in \mathcal{C}^{0,1}$}} to be a cocycle with respect to the coboundary operator {\small{$d_{+}^{0,1}$}}. Thus:
$$\Ker( d^{1}_{\widetilde{S}}|_{\Psi}) = \Ker(d_{+}^{0,1}) \cap \Ker(d_{-}^{0, 1}). $$
Since {\small{$ d^{0}_{\widetilde{S}}|_{\Psi}$}} coincides with the operator {\small{$d_{-}^{0, 0}$}}, this allows to conclude that:
$$\begin{array}{ll}
H^{1}(\widetilde{X}|_{\Psi}, d_{\widetilde{S}}|_{\Psi}) & = \Ker(d_{+}^{0,1}) \cap \dfrac{\Ker(d_{-}^{0, 1})}{\Imag(d_{-}^{0, 0})} = \Ker(d_{+}^{0,1}) \cap H^{1}_{-}(\mathfrak{g}, \mathcal{O}_{X_{0}}) \\
[3ex]
& \cong H^{0}_{+}(\mathfrak{h}, H^{1}_{-}(\mathfrak{g}, \mathcal{O}_{X_{0}})). \\
\end{array}
$$

Now we consider the case of \emph{degree 2}. In Theorem \ref{teorema con calcolo gruppi BRST per U(2)}, we proved  that 
$$ H^{2}(\widetilde{X}|_{\Psi}, d_{\widetilde{S}}|_{\Psi}) = K \oplus \Pol_{\mathbb{R}}(M_{4})E,$$ 
with 
$$K:= \Big\{ Q \sum_{i;j<k} \epsilon_{ijk} M_{i} C_{j}C_{k}: Q \in \mathcal{O}_{X_{0}}\Big\}.$$
To verify the relation between the cohomology groups claimed in this theorem, we are going to show that:

$$H^{0}_{+}(\mathfrak{h}, Z^{2}_{-}(\mathfrak{g}, \mathcal{O}_{X_{0}})) \cong K \quad \mbox{and} \quad H^{1}_{+}(\mathfrak{h}, Z^{0}_{-}(\mathfrak{g}, \mathcal{O}_{X_{0}})) \cong \Pol_{\mathbb{R}}(M_{4}) E.$$

The first isomorphism immediately follows from an explicit description of the form that a generic cochain {\small{$\varphi \in \mathcal{C}^{0,2}$}} has to have in order to belong to the intersection {\small{$\Ker(d_{+}^{0, 2}) \cap \Ker(d_{-}^{0, 2})$}}. On the other hand, the second isomorphism is implied by the fact that {\small{$\Ker(d_{+}^{1, 0})$}} coincides with {\small{$\mathcal{C}^{1,0}$}}, due to the map {\small{$d_{+}^{1,0}$}} being the zero map, and by the following identities: 
$$\Ker(d_{-}^{1, 0})\cong \Ker(d_{-}^{0, 0})\cdot E \quad \mbox{and} \quad \Imag(d_{+}^{0,1}) = \{\big(\sum_{i=1}^{3} M_{i}f_{i}\big)E, \mbox{  }f_{i} \in \mathcal{O}_{X_{0}}\}.$$
Therefore, by recalling the explicit form of {\small{$\Ker(d_{-}^{0, 0})= \Ker(d_{\widetilde{S}}|_{\Psi})$}} determined in Theorem \ref{teorema con calcolo gruppi BRST per U(2)}, one concludes that
$$H^{1}_{+}(\mathfrak{h}, Z^{0}_{-}(\mathfrak{g}, \mathcal{O}_{X_{0}}))= \dfrac{\Ker(d_{+}^{1,0})\cap \Ker(d_{-}^{1, 0})}{\Imag(d_{+}^{0,1})} = \Pol_{\mathbb{R}}(M_{4})E
$$
as claimed in the statement. Going to the case of \emph{degree  {\small{$2k+1$}}}, in Theorem \ref{teorema con calcolo gruppi BRST per U(2)} we proved that {\small{$H^{2k+1}(\widetilde{X}|_{\Psi}, d_{\widetilde{S}}|_{\Psi})$}}
is trivial for each {\small{$k\geqslant 1$}}. Therefore, to prove our statement it is enough to verify that:
$$H_{+}^{k}(\mathfrak{h}, \mathcal{C}_{-}^{1}(\mathfrak{g}, \mathcal{O}_{X_{0}})) = \left\lbrace 0 \right\rbrace \quad \quad \mbox{and}\quad \quad H_{+}^{k-1}(\mathfrak{h}, \mathcal{C}_{-}^{3}(\mathfrak{g}, \mathcal{O}_{X_{0}})) = \left\lbrace 0 \right\rbrace.$$
To check the first identity, we start by noticing that a cochain {\small{$\varphi \in \mathcal{C}^{k, 1}$}}, in order to be a cocycle in {\small{$\Ker(d_{+}^{k,1})$}}, has to be of the form:
$$\varphi = \sum_{i, j, l=1}^{3}\big( \epsilon_{ijl} M_{j}g_{l})\cdot C_{i}E^{k}$$
for {\small{$g_{l}$}} in {\small{$\mathcal{O}_{X_{0}}$}}. On the other hand, one can verify that the cochain {\small{$\psi$}}, defined by
$$\psi= \sum_{i, j, l=1}^{3}\big( \epsilon_{ijl} g_{i} C_{j} C_{l}\big) \cdot E^{k}$$ 
satisfies {\small{$d_{+}^{k, 1} (\psi)= \varphi$}}. Hence we conclude that
{\small{$H_{+}^{k}(\mathfrak{h}, \mathcal{C}_{-}^{1}(\mathfrak{g}, \mathcal{O}_{X_{0}}))$}} is trivial, for any {\small{$k\geq 1$}}. As to the second identity to prove, a direct computation shows that the map {\small{$d_{+}^{k-1, 3}$}} is injective for all {\small{$k\geq 1$}}. This concludes the proof for the case of degree {\small{$2k+1$}}. Finally, we still have to consider the case of {\emph{degree 2k}}. Accordingly to what determined in Theorem \ref{teorema con calcolo gruppi BRST per U(2)}, to complete the proof of the theorem is enough to verify that: 
$$H^{k}_{+}(\mathfrak{h}, Z^{0}_{-}(\mathfrak{g}, \mathcal{O}_{X_{0}})) = \Pol_{\mathbb{R}}(M_{4}) E^{k} \quad \mbox{and} \quad H^{k-1}_{+}(\mathfrak{h}, \mathcal{C}^{2}_{-}(\mathfrak{g}, \mathcal{O}_{X_{0}}))=0.$$
This first identity follows from the map {\small{$d_{+}^{k, 0}$}} being the zero map and from the fact that 
$$\Ker(d_{-}^{k, 0}) = \Big\{ \Big[g_{0}(M_{4}) + \sum_{j=1}^{s}(M_{1}^{2}+ M_{2}^{2}+ M_{3}^{2})^{j}g_{j}(M_{4})\Big]E^{k}: g_{i}\in \Pol_{\mathbb{R}}(M_{4})  \Big\}$$
and
$$ \Imag(d_{+}^{k-1, 1}) = \Big\{ \Big[ \sum_{i=1}^{3} M_{i}f_{i}\Big]\cdot E^{k}:  f_{i}\in \mathcal{O}_{X_{0}}\Big\},
$$
from which we deduce 
$$H^{k}_{+}(\mathfrak{h}, Z^{0}_{-}(\mathfrak{g}, \mathcal{O}_{X_{0}})) = \frac{\Ker(d_{-}^{k, 0})}{\Imag(d_{+}^{k-1, 1})} =  \Pol_{\mathbb{R}}(M_{4}) \cdot E^{k}.$$

Finally, one can check that 
$$\Ker(d_{+}^{k-1, 2}) = \Imag(d_{+}^{k-2, 3}) = \Big\{ P\Big[ \sum_{i, j, l} \epsilon_{ijl} M_{i}C_{j}C_{l}\Big], P \in \mathcal{O}_{X_{0}}\Big\} \cdot E^{k-1},$$
from which the triviality of the group {\small{$H^{k-1}(\mathfrak{h}, \mathcal{C}^{2}_{-}(\mathfrak{g}, \mathcal{O}_{X_{0}}))$}} is immediately implied.
\end{proof}

All the properties revealed by the computations carried on for the proof of the above theorem can be summarized in the following corollary. 

\begin{cor}
\label{cor suc esatta}
Let {\small{$(\mathcal{C}^{i,j}, d_{+}^{i,j}, d_{-}^{i,j})$}}, with {\small{$j\geq 0$}} and {\small{$i= 0, \dots, 3$}}, be the cochain complex analyzed in Theorem \ref{calcolo Lie cohom groups}. Then the following sequences are exact:  
\begin{equation}
\label{succ esatta}
\begin{array}{l}
0 \rightarrow \mathcal{C}^{k-2, 3} \xrightarrow{d_{+}^{k-2, 3}} \mathcal{C}^{k-1, 2} \xrightarrow{d_{+}^{k-1, 2}} \mathcal{C}^{k, 1}\xrightarrow{d_{+}^{k, 1}} \widetilde{\mathcal{C}}^{k+1, 0}\rightarrow 0,
\end{array}
\end{equation}
and 
$$0 \rightarrow \mathcal{C}_{+}^{k-2}(\mathfrak{h}, Z^{3}_{-}) \xrightarrow{d_{+}^{k-2, 3}} \mathcal{C}_{+}^{k-1}(\mathfrak{h}, Z^{2}_{-})\xrightarrow{d_{+}^{k-1, 2}} \mathcal{C}_{+}^{k}(\mathfrak{h}, Z^{1}_{-}) \xrightarrow{d_{+}^{k, 1}} \widetilde{\mathcal{C}}^{k+1}_{+}(\mathfrak{h}, \widetilde{W})\rightarrow 0,
$$
with \\
\\
\phantom{m} \quad  {\small{$\widetilde{\mathcal{C}}^{k+1, 0} := \mathcal{C}_{+}^{k+1}(\mathfrak{h}, \mathcal{C}^{0}_{-}(\mathfrak{g}, W))$}}, 
for {\small{$W:= \mathcal{O}_{X_{0}} \backslash \Pol_{\mathbb{R}}(M_{4})$}},  \\
\phantom{m} \quad  {\small{$\mathcal{C}_{+}^{j}(\mathfrak{h}, Z^{i}_{-}):= \mathcal{C}_{+}^{j}(\mathfrak{h}, Z^{i}_{-}(\mathfrak{g}, \mathcal{O}_{X_{0}}))$}} and
{\small{$\widetilde{\mathcal{C}}^{k+1}_{+}(\mathfrak{h}, \widetilde{W}) := \mathcal{C}_{+}^{k+1}(\mathfrak{h}, \mathcal{C}_{-}^{0}(\mathfrak{g}, \widetilde{W}))$}} for 
$$\widetilde{W}= \bigg\{ \sum_{j = 1}^{r} (M_{1}^{2}+ M_{2}^{2} + M_{3}^{2})^{j} g_{j}(M_{4}):  g_{j}(M_{4})\in \Pol_{\mathbb{R}}(M_{4}), \quad r \in \mathbb{N} \bigg\},$$
i.e. it holds  {\small{$Z^{0}_{-}(\mathfrak{g}, \mathcal{O}_{X_{0}}) = \Pol_{\mathbb{R}}(M_{4}) \oplus \widetilde{W}$}}
\end{cor}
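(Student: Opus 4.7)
The plan is to read off exactness directly from the explicit computations made in the proof of Theorem \ref{calcolo Lie cohom groups}. Both sequences involve only horizontal coboundary operators $d_+^{\bullet,\bullet}$, so the double-complex structure from Proposition \ref{relaz operat cobordo} enters only to guarantee that these maps restrict properly to $d_-$-cocycle subspaces in the second sequence.

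For the first sequence, four exactness conditions need to be verified: injectivity of $d_+^{k-2,3}$, the equality $\Ker(d_+^{k-1,2}) = \Imag(d_+^{k-2,3})$, the equality $\Ker(d_+^{k,1}) = \Imag(d_+^{k-1,2})$, and surjectivity of $d_+^{k,1}$ onto $\widetilde{\mathcal{C}}^{k+1,0}$. Injectivity of $d_+^{k-2,3}$ was already obtained in the theorem's proof as part of showing the triviality of $H_+^{k-2}(\mathfrak{h},\mathcal{C}^3_-)$. The equality at $\mathcal{C}^{k,1}$ was also verified there: an arbitrary cocycle takes the form $\sum \epsilon_{ijl}(M_j g_l) C_i E^k$, which is the $d_+^{k-1,2}$-image of $\sum \epsilon_{ijl} g_i C_j C_l E^{k-1}$. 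Surjectivity onto $\widetilde{\mathcal{C}}^{k+1,0}$ is essentially by definition: the computation $\Imag(d_+^{k,1}) = \bigl\{\bigl[\sum_i M_i f_i\bigr] E^{k+1} : f_i \in \mathcal{O}_{X_0}\bigr\}$ identifies the image with $W \cdot E^{k+1}$, which is precisely $\widetilde{\mathcal{C}}^{k+1,0}$. The remaining step, exactness at $\mathcal{C}^{k-1,2}$, will be handled by the same type of direct argument: write a generic element of $\Ker(d_+^{k-1,2})$ in explicit basis form and rearrange it as the $d_+^{k-2,3}$-image of an explicit cochain, paralleling the manipulation used in the odd-degree portion of Theorem \ref{calcolo Lie cohom groups}.

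The second sequence will be treated as the restriction of the first to $d_-$-cocycle subspaces. Since $d_-^{k+1,i-1}\circ d_+^{k,i} = - d_+^{k,i+1}\circ d_-^{k,i}$ by Proposition \ref{relaz operat cobordo}, each $d_+^{i,j}$ sends $\mathcal{C}_+^\bullet(\mathfrak{h}, Z^j_-)$ into $\mathcal{C}_+^\bullet(\mathfrak{h}, Z^{j-1}_-)$, so the restricted maps are well defined, and injectivity together with exactness at the two middle positions are inherited from the first sequence. The only new ingredient is the surjectivity onto $\widetilde{\mathcal{C}}^{k+1}_+(\mathfrak{h}, \widetilde{W})$: here one combines the explicit description $\Ker(d_-^{k,0}) = (\Pol_{\mathbb{R}}(M_4) \oplus \widetilde{W})\cdot E^k$ with $\Imag(d_+^{k-1,1}) = \{[\sum_i M_i f_i] E^k\}$ derived in Theorem \ref{calcolo Lie cohom groups}, and observes that the restriction of $d_+^{k,1}$ to $\mathcal{C}_+^k(\mathfrak{h}, Z^1_-)$ surjects precisely onto the $\widetilde{W}$-summand at height $E^{k+1}$.

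The main hurdle is not computational but notational bookkeeping: keeping track of the shift pattern $(k-2,3)\to(k-1,2)\to(k,1)\to(k+1,0)$ in the indices, correctly identifying $W$, $\widetilde{W}$ and $\Pol_{\mathbb{R}}(M_4)$ with the summands of $Z^0_-(\mathfrak{g},\mathcal{O}_{X_0})$, and aligning the restricted targets in the two sequences. Once the dictionary between pieces of cohomology groups and images of the $d_+$ maps is set up, every exactness check reduces to a statement already established in the paper.
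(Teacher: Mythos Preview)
Your approach matches the paper's: exactness of the first sequence is read off from the computations already carried out in Theorem \ref{calcolo Lie cohom groups} (injectivity of $d_+^{k-2,3}$, vanishing of $H^{k-1}_+(\mathfrak{h},\mathcal{C}^2_-)$ and $H^k_+(\mathfrak{h},\mathcal{C}^1_-)$, and the identification $\Imag(d_+^{k,1}) \cong W\cdot E^{k+1}$), and the second sequence is obtained by restriction to $d_-$-cocycles, with well-definedness checked via the anticommutation relation of Proposition \ref{relaz operat cobordo}. The one minor redundancy is your plan to rederive exactness at $\mathcal{C}^{k-1,2}$ by hand: the equality $\Ker(d_+^{k-1,2})=\Imag(d_+^{k-2,3})$ was already established in the degree-$2k$ portion of Theorem \ref{calcolo Lie cohom groups}, so the paper simply cites it rather than repeating the computation.
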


\begin{proof}
The exactness of the first sequence follows from the coboundary operator {\small{$d_{+}^{k-2, 3}$}} being injective, the triviality of the cohomology groups {\small{$H^{k-1}_{+}(\mathfrak{h}, \mathcal{C}^{2}_{-}(\mathfrak{g}, \mathcal{O}_{X_{0}}))$}} and {\small{$H^{k}_{+}(\mathfrak{h}, \mathcal{C}^{1}_{-}(\mathfrak{g}, \mathcal{O}_{X_{0}}))$}} and, finally, from the fact that
$$
 \Imag(d_{+}^{k, 1}) \cong [\mathcal{O}_{X_{0}} \backslash \Pol_{\mathbb{R}}(M_{4})]E^{k+1}= \mathcal{C}_{+}^{k+1}(\mathfrak{h}, \mathcal{C}_{-}^{0}(\mathfrak{g}, W))
 $$
which implies the subjectivity of the map {\small{$d_{+}^{k, 1}$}} and hence allows to conclude that the first sequence is exact. On the other hand, because the second sequence is obtained from the first via the restriction of the domains and codomains of the coboundary operators {\small{$d_{+}^{\bullet, \bullet}$}} to cochain spaces with coefficients in the cocycle spaces {\small{$Z^{\bullet}_{-}(\mathfrak{g}, \mathcal{O}_{X_{0}})$}}, its exactness follows from the exactness of (\ref{succ esatta}), the only thing to check being that this second sequence is actually well defined. More explicitly, we have to verify that, given an element {\small{$\varphi$}} in {\small{$\mathcal{C}_{+}^{i}(\mathfrak{h}, Z^{j}_{-})$}}, with {\small{$i\geq 0$}} and {\small{$j= 1, 2, 3$}}, then {\small{$d_{+}^{i,j}(\varphi)$}} is an element in {\small{$\mathcal{C}_{+}^{i+1}(\mathfrak{h}, Z^{j-1}_{-})$}}. \\
\\
By hypothesis, {\small{$\varphi$}} is given by a product {\small{$\varphi = \psi \cdot E^{i}$}} with {\small{$\psi \in Z^{j}_{-}(\mathfrak{g}, \mathcal{O}_{X_{0}})$}}. We show that {\small{$d_{+}^{i,j}(\varphi) = \chi \cdot E^{i}$}} for some {\small{$\chi$}} in {\small{$Z^{j-1}_{-}(\mathfrak{g}, \mathcal{O}_{X_{0}})\cdot E.$}}
Let us define {\small{$\chi:=d_{+}^{0,j}(\psi)$}}. Then we only have to show that {\small{$\chi$}} is an element of {\small{$\Ker(d_{-}^{1, j-1})$}}.
Recall what we proved in Proposition \ref{relaz operat cobordo}, which implies
$$d_{-}^{1, j-1}(\chi)= d_{-}^{1, j-1}(d_{+}^{0,j}(\psi))= - d_{+}^{0, j+1}(d_{-}^{0,j}(\psi))=0,$$
where we are using the fact that {\small{$\psi$}} belongs to {\small{$Z^{j}_{-}(\mathfrak{g}, \mathcal{O}_{X_{0}})$}}. Therefore, the second sequence is well defined and hence exact. 
\end{proof}

Finally, we can prove the following lemma about the relation between cocycles in the gauge-fixed BRST complex and the different cocycle/coboundary groups on the side of the corresponding generalized Lie algebra complex. 

\begin{lemma}
\label{lemma propr}
Let {\small{$d^{\bullet}_{\widetilde{S}}|_{\Psi}$, $d_{+}^{\bullet}$, $d_{-}^{\bullet}$}} and {\small{$\left\lbrace \mathcal{C}^{k, i}\right\rbrace$, $k \in \mathbb{N}_{0}$}}, {\small{$i=0, \dots, 3$}}, be the same coboundary operators and the same collection of cochain spaces as in Theorem \ref{calcolo Lie cohom groups}. Then, for each {\small{$k\geq1$}}, it holds
 $$\Ker(d_{\widetilde{S}}^{2k}|_{\Psi}) = \Ker(d_{-}^{k, 0}) \oplus \left[ \Imag( d_{-}^{k-1, 1}) + \Ker(d_{+}^{k-1, 2})\right].$$
\end{lemma}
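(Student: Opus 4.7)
The plan is to unravel the cocycle condition using the bi-grading $\mathcal{C}^{2k}\cong \mathcal{C}^{k,0}\oplus \mathcal{C}^{k-1,2}$ from Theorem~\ref{corrispondenza cochain} together with the splitting $d_{\widetilde{S}}^{2k}=d_{-}^{k,0}\oplus [d_{+}^{k-1,2}\oplus d_{-}^{k-1,2}]$ from Proposition~\ref{relaz operat cobordo}. Writing $\varphi=a+b$ with $a\in \mathcal{C}^{k,0}$ and $b\in \mathcal{C}^{k-1,2}$, the equation $d_{\widetilde{S}}^{2k}(\varphi)=0$ splits into the two projections
\[
d_{-}^{k,0}(a)+d_{+}^{k-1,2}(b)=0 \quad\text{in }\mathcal{C}^{k,1}, \qquad d_{-}^{k-1,2}(b)=0 \quad\text{in }\mathcal{C}^{k-1,3}.
\]

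First I would establish the inclusion $\Ker(d_{+}^{k-1,2})\subseteq \Ker(d_{-}^{k-1,2})$, specific to this model. By Corollary~\ref{cor suc esatta} every element of $\Ker(d_{+}^{k-1,2})$ has the form $P\cdot S\cdot E^{k-1}$ with $S=\sum_{i,j,l}\epsilon_{ijl}M_i C_j C_l$, and a direct computation of the $\mathfrak{g}$-degree-$3$ part of $d_{\widetilde{S}}(P\cdot S\cdot E^{k-1})$ reduces, via $\sum_{jl}\epsilon_{ijl}\epsilon_{cjl}=2\delta_{ic}$ and $C_i^2=0$, to the antisymmetric sum $\sum_{bc}\epsilon_{abc}M_b M_c=0$. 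Combined with the automatic $\Imag(d_{-}^{k-1,1})\subseteq \Ker(d_{-}^{k-1,2})$ coming from $(d_{-})^2=0$, this guarantees that the second cocycle equation is satisfied whenever $b$ lies in $\Imag(d_{-}^{k-1,1})+\Ker(d_{+}^{k-1,2})$. For the first cocycle equation I would use the anticommutation $d_{+}^{k-1,2}\circ d_{-}^{k-1,1}=-d_{-}^{k,0}\circ d_{+}^{k-1,1}$ of Proposition~\ref{relaz operat cobordo}: writing $b=d_{-}^{k-1,1}(x)+b_0$ with $b_0\in \Ker(d_{+}^{k-1,2})$, the equation becomes $d_{-}^{k,0}(a-d_{+}^{k-1,1}(x))=0$, and the ring decomposition $\mathcal{O}_{X_0}=\Pol_{\mathbb{R}}(M_4)+M_1\mathcal{O}_{X_0}+M_2\mathcal{O}_{X_0}+M_3\mathcal{O}_{X_0}$ combined with the description of $\Ker(d_{-}^{k,0})$ via the $su(2)$-invariants (Theorem~\ref{teorema con calcolo gruppi BRST per U(2)}) lets me re-absorb the shift $d_{+}^{k-1,1}(x)$ into the $\Ker(d_{-}^{k,0})$-part of $a$, giving the $\supseteq$ inclusion of the lemma.

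The main obstacle will be the reverse inclusion, namely that every $b$ appearing as the $\mathcal{C}^{k-1,2}$-component of a cocycle admits a decomposition $b=d_{-}^{k-1,1}(x)+b_0$ with $b_0\in \Ker(d_{+}^{k-1,2})$. This amounts to proving $\Ker(d_{-}^{k-1,2})=\Imag(d_{-}^{k-1,1})+\Ker(d_{+}^{k-1,2})$, i.e.\ that every class in the classical Lie-algebra cohomology $H^2_{-}(\mathfrak{g},\mathcal{O}_{X_0})$ with $\mathfrak{g}\cong su(2)$ admits a representative in $\Ker(d_{+}^{k-1,2})$. I would argue this by recasting $d_{-}^{k-1,2}(b)=0$ as a curl-type equation $M\cdot \mathrm{curl}\,G=0$ on the coefficient vector $G=(g_{23},-g_{13},g_{12})$ of $b=(g_{12}C_1C_2+g_{13}C_1C_3+g_{23}C_2C_3)E^{k-1}$, and then using the exact sequences of Corollary~\ref{cor suc esatta} to extract the required $x$ and $b_0$.
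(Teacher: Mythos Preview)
Your setup matches the paper's exactly: both write $\varphi=\varphi_{k,0}+\varphi_{k-1,2}$ and reduce the cocycle condition to the pair $d_{-}^{k,0}(\varphi_{k,0})=-d_{+}^{k-1,2}(\varphi_{k-1,2})$ and $d_{-}^{k-1,2}(\varphi_{k-1,2})=0$. The divergence is in how the $\varphi_{k-1,2}$-component is handled. The paper does not try to characterise $\Ker(d_{-}^{k-1,2})$ abstractly; instead it uses the \emph{first} equation as the primary constraint and computes, in coordinates, the intersection $\Imag(d_{-}^{k,0})\cap\Imag(d_{+}^{k-1,2})$. This produces a concrete form for $\varphi_{k-1,2}$ depending on the polynomial $f$ (the coefficient of $\varphi_{k,0}$) and an auxiliary $Q\in\mathcal{O}_{X_0}$; writing $f=-\sum_iM_iA_i+A_0$ with $A_0\in\Pol_{\mathbb R}(M_4)$ then gives the splitting $\varphi_{k-1,2}=\lambda+\mu$ together with an \emph{explicit} $d_{-}$-primitive $\alpha=\sum_iA_iC_iE^{k-1}$ for $\lambda$. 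The crucial feature of this particular primitive is that $d_{+}^{k-1,1}(\alpha)=(\sum_iA_iM_i)E^k=(A_0-f)E^k$, which differs from $-\varphi_{k,0}$ only by an invariant; the final chain $d_{-}^{k,0}(\varphi_{k,0})=d_{-}^{k,0}(d_{+}^{k-1,1}(\alpha))=-d_{-}^{k,0}(\varphi_{k,0})$ then forces $\varphi_{k,0}\in\Ker(d_{-}^{k,0})$.

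Your structural route---showing $\Ker(d_{-}^{k-1,2})=\Imag(d_{-}^{k-1,1})+\Ker(d_{+}^{k-1,2})$ and appealing to Corollary~\ref{cor suc esatta}---only produces \emph{some} primitive $x$ for the $\Imag(d_{-}^{k-1,1})$-part, with no link to $\varphi_{k,0}$. The first cocycle equation then gives only $\varphi_{k,0}-d_{+}^{k-1,1}(x)\in\Ker(d_{-}^{k,0})$, not $\varphi_{k,0}\in\Ker(d_{-}^{k,0})$: for a generic $x$ the element $d_{+}^{k-1,1}(x)=(\sum_if_iM_i)E^k$ is not $\mathfrak g$-invariant. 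Your ``re-absorb the shift'' step cannot repair this, because shifting $a$ by $d_{+}^{k-1,1}(x)$ moves it outside $\Ker(d_{-}^{k,0})$, which is precisely the first summand of the claimed decomposition. (The same issue undermines your $\supseteq$ paragraph: taking $a=0$ and $b=d_{-}^{k-1,1}(C_1E^{k-1})=2C_2C_3E^{k-1}$ gives an element of the right-hand side with $d_{+}^{k-1,2}(b)=2(M_2C_3-M_3C_2)E^k\neq 0$.) The fix is to follow the paper and tie the primitive to $\varphi_{k,0}$ via the decomposition of its coefficient $f$ modulo the ideal $(M_1,M_2,M_3)$, rather than picking an arbitrary $x$ from the $H^2_-$ argument.
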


\begin{proof}
Let us starting by noticing that a generic cochain {\small{$\varphi$}} of ghost degree {\small{$2k$}} can be written as sum of two cochains {\small{$\varphi_{k, 0}$}} and {\small{$\varphi_{k-1, 2}$}}, which belong to the cochain spaces {\small{$\mathcal{C}^{k, 0}$}} and {\small{$\mathcal{C}^{k-1, 2}$}} respectively. Thus
$$d^{2k}_{\widetilde{S}}|_{\Psi}(\varphi)= (d_{+}^{k-1, 2} + d_{-}^{k-1, 2}) (\varphi_{k-1, 2})  + d_{-}^{k, 0}(\varphi_{k, 0}).$$
Because {\small{$d_{-}^{k, 0}(\varphi_{k, 0})$}} and {\small{$d_{+}^{k-1, 2}(\varphi_{k-1, 2})$}} are elements in {\small{$\mathcal{C}^{k,1}$}} while {\small{$d_{+}^{k-1, 2}(\varphi_{k-1, 2})$}} belongs to {\small{$\mathcal{C}^{k-1, 3}$}}, to have that {\small{$\varphi$}} is a cocycle element is equivalent to imposing
$$ d_{-}^{k, 0}(\varphi_{k, 0}) = - d_{+}^{k-1, 2}(\varphi_{k-1, 2}) \quad \mbox{and} \quad d_{-}^{k-1, 2}(\varphi_{k-1, 2}) = 0.$$
Considering the intersection of {\small{$\Imag(d_{-}^{k, 0})$}} together with {\small{$\Imag(d_{+}^{k-1, 2})$}}, we see that a generic coboundary element {\small{$\alpha$}} in {\small{$\Imag(d_{-}^{k, 0})$}} on one side and a generic element {\small{$\beta$}} in {\small{$\Imag(d_{+}^{k-1, 2})$}} on the other take the following form, respectively:
$$
\alpha =  - \Big[ \sum_{i, j , l}\epsilon_{ijl} \ M_{i}(\partial_{j}f) C_{l}\Big] \cdot E^{k},  \quad  \beta = - \Big[ \sum_{i, j , l}\epsilon_{ijl} \ g_{i}M_{j}C_{l}\Big] \cdot E^{k}.
$$
for some {\small{$f$, $g_{j}$}} in {\small{$\mathcal{O}_{X_{0}}$}}, Hence, to have a cochain that is a coboundary with respect to both maps {\small{$d_{+}^{k-1, 2}$}} and {\small{$d_{-}^{k, 0}$}}, the polynomials {\small{$f$}} and {\small{$g_{j}$}} need to satisfy
$$\begin{array}{lcr}
 g_{1}= M_{3}Q + \partial_{1}f \, \quad & \quad  g_{2}= - M_{2}Q - \partial_{2}f \,\quad &\quad g_{3}= M_{1}Q + \partial_{3}f. 
  \end{array}
$$
with {\small{$Q$}} a polynomial in {\small{$\mathcal{O}_{X_{0}}$.}} Therefore, for {\small{$d_{+}^{k-1, 2}(\varphi_{k-1, 2})$}} to coincide with {\small{$-d_{-}^{k. 0}(\varphi_{k, 0})$}}, we need
$$ \varphi_{k-1, 2} = \left[ \left(M_{1}Q + \partial_{3}f \right)C_{1}C_{2} - \left(M_{2}Q + \partial_{2}f\right)C_{1}C_{3} +  \left(M_{3}Q + \partial_{1}f\right)C_{2}C_{3}\right] E^{k-1} \mbox{ },$$
where {\small{$Q$}} and {\small{$f$}} are generic polynomials in {\small{$\mathcal{O}_{X_{0}}$}}. Let {\small{$A_{i}$}}, for {\small{$i = 1, 2, 3$}} be polynomials in {\small{$\mathcal{O}_{X_{0}}$}} and {\small{$A_{0}$}} in {\small{$\Pol_{\mathbb{R}}(M_{4})$}}, such that {\small{$f$}} can be rewritten as {\small{$f = - \sum_{i=1}^{3} M_{i}A_{i} + A_{0}.$}}
By defining {\small{$\widetilde{Q} := - \sum_{i = 1}^{3} \partial A_{i}/\partial M_{i},$}} {\small{$\varphi_{k-1, 2}$}} can be written as a sum {\small{$\varphi_{k-1, 2} = \lambda + \mu$}}, with {\small{$\lambda$}} an element in {\small{$\Imag(d_{-}^{k-1, 1})$}} explicitly given by 
$$\lambda = [(M_{1}\widetilde{Q} + \partial_ {3}f)C_{1}C_{2} - ( M_{2}\widetilde{Q} + \partial_ {2}f)C_{1}C_{3}+ (M_{3}\widetilde{Q} + \partial_ {1}f)C_{2}C_{3}] E^{k-1}$$
and {\small{$\mu$}} an element in {\small{$\Ker(d_{+}^{k-1, 2})$}} with
$$\mu= \left( Q - \widetilde{Q} \right) \Big[ \sum_{i, j<l} \epsilon_{ijl} M_{i}C_{j}C_{l}\Big]E^{k-1}.$$
Hence, a generic cochain {\small{$\varphi$}} in {\small{$\Ker(d_{\widetilde{S}}^{2k}|_{\Psi})$}} can be uniquely decomposed as 
$$\varphi= \varphi_{k, 0} + \lambda + \mu$$
with {\small{$\lambda$}} in {\small{$\Imag(d_{-}^{k-1, 1})$}} and {\small{$\mu$}} in {\small{$\Ker(d_{+}^{k-1, 2})$}}. Moreover, because 
$$d_{-}^{k, 0} (\varphi_{k, 0})= - d_{+}^{k-1, 2}(\varphi_{k-1, 2}) =  - d_{+}^{k-1, 2}(d_{-}^{k-1,1} (\alpha)) = d_{-}^{k, 0}(d_{+}^{k-1, 1}(\alpha)) = 0,$$
we also conclude that {\small{$\varphi_{k, 0}$}} belongs to {\small{$\Ker(d_{-}^{k, 0})$}}. This last observation implies the statement.
\end{proof}

\subsection{The shifted double complex}
In this concluding paragraph we reverse what found in the previous section by proving that a double complex satisfying the properties determined for {\small{$(\mathcal{C}^{i,j}, d_{+}^{i,j}, d_{-}^{i,j})$}} presents, at the level of cohomology groups, the same relation with the corresponding total complex that we found between gauge-fixed BRST cohomology and generalized Lie algebra cohomology. In other words, we have explicitly identified the complete list of properties which enforce the isomorphisms in Theorem \ref{calcolo Lie cohom groups} at the level of cohomology groups.

\begin{theorem}
\label{theorem double complex}
Let {\small{$(\mathcal{C}^{i, j}, d_{+}^{i, j} \oplus d_{-}^{i, j})$}} be a (shifted) double complex in the generalized Lie algebra cohomology with 
$$\mathcal{C}^{i,j}:= \mathcal{C}_{+}^{i}(\mathfrak{h}, \mathcal{C}_{-}^{j}(\mathfrak{g}, \Omega))$$
for {\small{$\Omega$}} a module over {\small{$\mathfrak{g}$}}, {\small{$\mathcal{C}_{-}^{j}(\mathfrak{g}, \Omega)$}} a module of order {\small{$p = -1$}} on {\small{$\mathfrak{h}$}} and coboundary operators  
$$d_{+}^{i,j}: \mathcal{C}^{i, j} \rightarrow \mathcal{C}^{i+1, j-1} \quad \quad d_{-}^{i,j}: \mathcal{C}^{i, j} \rightarrow \mathcal{C}^{i, j+1}.$$
Suppose that this complex satisfies the following list of properties:
\begin{enumerate}[(1)]
\item \label{propr bicomplesso1} {\small{$d_{-}^{\bullet, \bullet}$}} and {\small{$d_{+}^{\bullet, \bullet}$}} are coboundary operators, i.e., 
$$d_{-}^{k, i+1} \circ d_{-}^{k, i}= 0 \quad \mbox{and} \quad d_{+}^{k+1, i} \circ d_{+}^{k, i+1}= 0,$$
for all {\small{$k\geq 0$}} and {\small{$i= 0, 1, 2$;}}
\item \label{propr bicomplesso3} The composition for the two operators {\small{$d_{-}^{\bullet}$}} and {\small{$d_{+}^{\bullet}$}} satisfies the following relation, for all {\small{$k\geq0$}} and {\small{$i=0, \dots, 3$:}}
$$d_{-}^{k+1,i-1} \circ d_{+}^{k, i} = - d_{+}^{k, i+1} \circ d_{-}^{k, i}.$$
Moreover, {\small{$d_{+}^{k,0} = 0$}} and {\small{$d_{-}^{k, i}= 0$}}, for all {\small{$k\geq 0$, $i<0$}} and {\small{$i>3$;}}
\item \label{propr bicomplesso4} The operator {\small{$d_{+}^{k, 3}$}} is injective for all {\small{$k\geq 0$;}}
\item \label{propr bicomplesso5} The following identity holds:
 $$\Ker(d_{tot}^{2k}) = \Ker(d_{-}^{k, 0}) \oplus \left[ \Imag( d_{-}^{k-1, 1}) + \Ker(d_{+}^{k-1, 2})\right],$$ 
 for all {\small{$k\geq1$}};
\item \label{propr succ esatta} The following sequence is exact, for all {\small{$k\geq 0$:}}
$$\mathcal{C}^{k, 3} \xrightarrow{d_{+}^{k, 3}} \mathcal{C}^{k+1, 2} \xrightarrow{d_{+}^{k+1, 2}} \mathcal{C}^{k+2, 1} \xrightarrow{d_{+}^{k+2, 1}} \mathcal{C}^{k+3, 0}.
$$
\end{enumerate}
Then, the induced total complex with 
$$\mathcal{C}_{tot}^{k} := \bigoplus_{2i+j = k} \mathcal{C}^{i, j} \quad \mbox{and} \quad d_{tot}^{k} :=  \bigoplus_{2i+j = k} d_{+}^{i, j} + d_{-}^{i, j},$$
has corresponding cohomology groups satisfying the following relations:\\
\begin{enumerate}[$\blacktriangleright$]
 \item {\small{$H^{2k}((\mathfrak{g}_{1} \oplus\mathfrak{h}_{2})^{\vee}, d_{tot}) \cong H_{+}^{k}( \mathfrak{h}, Z^{0}_{-}(\mathfrak{g}, \Omega)) \oplus H_{+}^{k-1}( \mathfrak{h}, Z^{2}_{-}(\mathfrak{g}, \Omega))$; }} 
 \item {\small{$H^{2k+1}((\mathfrak{g}_{1} \oplus\mathfrak{h}_{2})^{\vee}, d_{tot}) \cong H_{+}^{k}( \mathfrak{h}, \mathcal{C}^{1}_{-}(\mathfrak{g}, \Omega)) = 0,$}}
  \end{enumerate}
for {\small{$k \geq 1$}}. In particular, for {\small{$k=0, 1$}} we have
\begin{enumerate}[$\blacktriangleright$]
 \item {\small{$H^{0}((\mathfrak{g}_{1} \oplus\mathfrak{h}_{2})^{\vee}, d_{tot}) \cong H_{-}^{0}(\mathfrak{g}, \Omega);$}}
 \item {\small{$H^{1}((\mathfrak{g}_{1} \oplus\mathfrak{h}_{2})^{\vee}, d_{tot}) \cong H_{+}^{0}( \mathfrak{h}, H^{1}_{-}(\mathfrak{g}, \Omega)).$}}
\end{enumerate}
\end{theorem}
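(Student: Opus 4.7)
The plan is to mirror the case-by-case argument of Theorem \ref{calcolo Lie cohom groups} (and its supporting Lemma \ref{lemma propr} and Corollary \ref{cor suc esatta}) in the abstract setting: the properties (1)--(5) are exactly the structural facts used in the $U(2)$-proof, so once I make the right assignment ``which axiom justifies which step'', the argument should transfer verbatim. First I would dispose of the low-degree cases directly. In total degree $0$ the complex reduces to $\mathcal{C}^{0,0}$ with differential $d_-^{0,0}$ (since $d_+^{k,0}=0$ by (2)), giving $H^0_{tot}\cong H^0_-(\mathfrak{g},\Omega)$. In total degree $1$, the cochain space is $\mathcal{C}^{0,1}$ and a cocycle must satisfy both $d_+^{0,1}\varphi=0$ and $d_-^{0,1}\varphi=0$; property (2) together with $d_+^{0,0}=0$ shows that $\Imag(d_-^{0,0})\subseteq\Ker(d_+^{0,1})$, so the quotient rearranges as the $d_+^{0,1}$-invariants in $H^1_-(\mathfrak{g},\Omega)$, i.e.\ $H^0_+(\mathfrak{h},H^1_-(\mathfrak{g},\Omega))$.

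Next I would treat the odd total degrees $2k+1$ with $k\geq 1$. A cocycle $(\varphi,\eta)\in\mathcal{C}^{k,1}\oplus\mathcal{C}^{k-1,3}$ satisfies $d_+^{k,1}\varphi=0$ and $d_-^{k,1}\varphi+d_+^{k-1,3}\eta=0$. Exactness of (5) gives $\varphi=d_+^{k-1,2}\chi$ for some $\chi\in\mathcal{C}^{k-1,2}$; applying the anticommutativity in (2) to compute $d_-^{k,1}(d_+^{k-1,2}\chi)$ and then using the injectivity of $d_+^{k-1,3}$ from (3), I would force $\eta=d_-^{k-1,2}\chi$. Hence $(\varphi,\eta)=d_{tot}^{2k-1}(0,\chi)$ is a coboundary and $H^{2k+1}_{tot}=0$.

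For even total degrees $2k$ with $k\geq 1$, property (4) is the starting point: a cocycle splits as $(c_0,c_2)$ with $c_0\in\Ker(d_-^{k,0})$ and $c_2\in\Imag(d_-^{k-1,1})+\Ker(d_+^{k-1,2})$. Write $c_2=d_-^{k-1,1}\alpha+\mu$ with $\mu\in\Ker(d_+^{k-1,2})=\Imag(d_+^{k-2,3})$ by (5), pick $\eta$ with $d_+^{k-2,3}\eta=\mu$, and subtract the coboundary $d_{tot}^{2k-1}(\alpha,\eta)$ to replace the class by $(c_0-d_+^{k-1,1}\alpha,0)$. Property (2) combined with the cocycle equation $d_-^{k,0}c_0+d_+^{k-1,2}c_2=0$ guarantees that the new $c_0$ still lies in $\Ker(d_-^{k,0})$. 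The residual ambiguity in the choice of $\alpha$ (which differs by an element whose image under $d_-^{k-1,1}$ sits in $\Ker(d_+^{k-1,2})$) is exactly what the subcomplex with $Z^0_-$ coefficients wants to quotient by, producing $H^k_+(\mathfrak{h},Z^0_-(\mathfrak{g},\Omega))$; the second summand $H^{k-1}_+(\mathfrak{h},Z^2_-(\mathfrak{g},\Omega))$ arises from the independent $\Ker(d_+^{k-1,2})\cap\Ker(d_-^{k-1,2})$ contribution to $c_2$, modulo $\Imag(d_+^{k-2,3})$.

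The main obstacle I expect is the book-keeping of the direct-sum decomposition in even degree: showing that the ``$c_0$-part'' and the ``$c_2$-part'' of a cocycle genuinely assemble into independent cohomological contributions, with no cross-identifications created by the coupled form of the total coboundary. The cleanest route is probably to introduce the analogues of the two exact sequences of Corollary \ref{cor suc esatta} directly from axioms (3)--(5), and then chase the associated long exact sequence to identify the two graded pieces with $\mathfrak{h}$-cohomology of $Z^0_-$ and $Z^2_-$ respectively. The other cases all reduce, via the properties (1)--(5), to routine verifications of the type already carried out in Section \ref{Relation between the cohomology groups}.
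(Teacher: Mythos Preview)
Your plan is essentially the paper's own: a degree-by-degree analysis that invokes exactly the axiom needed at each step.  The low-degree cases and the odd case $2k+1$ match the paper almost verbatim; your direct exhibition of the preimage $\chi$ is in fact slightly cleaner than the paper's separate computation of $\Ker(d_{tot}^{2k+1})$ and $\Imag(d_{tot}^{2k})$.  (One index slip: the preimage $\chi\in\mathcal{C}^{k-1,2}$ lives in total degree $2k$, so you should write $d_{tot}^{2k}(0,\chi)$, not $d_{tot}^{2k-1}$.)

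The even-degree paragraph, however, contains a genuine inconsistency.  You write $c_2=d_-^{k-1,1}\alpha+\mu$ with $\mu\in\Ker(d_+^{k-1,2})=\Imag(d_+^{k-2,3})$, lift $\mu$ to $\eta$, and then subtract $d_{tot}^{2k-1}(\alpha,\eta)$ to reach the representative $(c_0-d_+^{k-1,1}\alpha,0)$.  Having killed $c_2$ completely, you then say the second summand $H_+^{k-1}(\mathfrak{h},Z^2_-)$ ``arises from the independent $\Ker(d_+^{k-1,2})\cap\Ker(d_-^{k-1,2})$ contribution to $c_2$''.  These two sentences cannot both be right: if every class has a representative with $c_2=0$, no second summand survives.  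The resolution is that the equality $\Ker(d_+^{k-1,2})=\Imag(d_+^{k-2,3})$ (from axiom (5)) only holds for $k\ge 2$, in which case the second summand is indeed zero; for $k=1$ there is no $\mathcal{C}^{-1,3}$, you cannot lift $\mu$, and the residual $c_2'\in\Ker(d_+^{0,2})\cap\Ker(d_-^{0,2})$ is precisely the source of $H_+^{0}(\mathfrak{h},Z^2_-)$.  You correctly flag this bookkeeping as the main obstacle, but the text as written obscures the mechanism.

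The paper sidesteps this by not normalising at all: it takes the decomposition of $\Ker(d_{tot}^{2k})$ given by axiom (4), computes $\Imag(d_{tot}^{2k-1})=\Imag(d_+^{k-1,1})\oplus[\Imag(d_-^{k-1,1})+\Imag(d_+^{k-2,3})]$ directly, checks (using (2) and (5)) that the two decompositions are compatible, and reads off the quotient componentwise.  This is shorter and avoids the case distinction $k=1$ versus $k\ge 2$ in the reduction step.
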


The notation {\small{$\mathfrak{g}_{1} \oplus\mathfrak{h}_{2}$}} emphasizes that the generators in {\small{$\mathfrak{g}$}} have degree {\small{$1$}} and odd parity while {\small{$\mathfrak{h}_{2}$}} indicates that its generator has degree {\small{$2$}} and even parity.

\begin{proof}
First of all, we notice that the induced total complex {\small{$(\mathcal{C}_{tot}^{\bullet}, d_{tot}^{\bullet})$}} is indeed a cohomology complex, where {\small{$d_{tot}^{\bullet}$}} defining a coboundary operator follows from properties (\ref{propr bicomplesso1}) and (\ref{propr bicomplesso3}). Moreover, from property (\ref{propr succ esatta}), we deduce that
$$\begin{array}{lr}
   \Ker(d_{+}^{k+1, 2}) = \Imag(d_{+}^{k, 3})\mbox{ }, \quad  &  \quad  \Ker(d_{+}^{k+1, 1}) = \Imag(d_{+}^{k, 2})\mbox{ }, \quad \quad \forall k\geq 0. 
  \end{array}
$$
We start proving the theorem for the case of {\emph{total degree 2k}}. Property (\ref{propr bicomplesso5}) implies that 
$$\Ker(d_{tot}^{2k}) = \Ker(d_{-}^{k, 0}) \oplus \left[ \Imag( d_{-}^{k-1, 1}) + \Ker(d_{+}^{k-1, 2})\right].$$
However, for {\small{$k=0$}} the above equality reduces to
$$\Ker(d_{tot}^{0})  = \Ker(d_{-}^{0, 0}).$$
For what concerns the image, we have that it holds 
$$\Imag(d_{tot}^{2k-1}) = \Imag(d_{+}^{k-1, 1}) \oplus [\Imag (d_{-}^{k-1, 1}) + \Imag(d_{+}^{k-2, 3})].$$
In particular, for {\small{$k=1$}} the coboundary operator {\small{$d_{+}^{k-2, 3}$}} does not appear and so 
$$\Imag(d_{tot}^{1}) = \Imag(d_{+}^{0, 1}) \oplus \Imag (d_{-}^{0, 1}).$$
From the above equalities, it follows straightforwardly that:
$$\begin{array}{ll}
H^{2k}((\mathfrak{g}_{1} \oplus\mathfrak{h}_{2}))^{\vee}, d_{tot})   & \cong  \dfrac{\Ker(d_{-}^{k, 0}) \cap\Ker(d_{+}^{k, 0 })}{\Imag(d_{+}^{k-1, 1})} \oplus \dfrac{\Ker(d_{+}^{k-1, 2}) \cap \Ker(d_{-}^{k-1, 2})}{\Imag(d_{+}^{k-2, 3})}\\
[3ex]& =  H_{+}^{k}( \mathfrak{h}, Z^{0}_{-}(\mathfrak{g}, \Omega)) \oplus H_{+}^{k-1}( \mathfrak{h}, Z^{2}_{-}(\mathfrak{g}, \Omega)).
  \end{array}
$$
We also recall that the coboundary operator {\small{$d_{+}^{k, 0}$}} is the zero map, i.e.  {\small{$\Ker(d_{-}^{k, 0})$}} coincides with {\small{$\Ker(d_{-}^{k, 0}) \cap \Ker(d_{+}^{k, 0 })$}}, and that {\small{$\Ker(d_{+}^{k-1, 2})$}} is a subset of {\small{$\Ker(d_{-}^{k-1, 2})$}} for property (\ref{propr bicomplesso3}) and (\ref{propr succ esatta}). Finally, for {\small{$k=0$}}, we have
$$H^{0}((\mathfrak{g}_{1} \oplus\mathfrak{h}_{2})^{\vee}, d_{tot})  \cong \Ker(d_{-}^{0, 0}) = H_{-}^{0}(\mathfrak{g}, \Omega).$$
To conclude the proof of the theorem, let us consider the case of {\emph{total degree}} {\small{$2k+1$}}, for {\small{$k\geq1$}}. Given a generic cochain {\small{$\varphi$}} in {\small{$\mathcal{C}^{2k+1}_{tot}$}}, it can  be uniquely decomposed as the sum:
$$\varphi = \varphi_{k,1} + \varphi_{k-1, 3}$$
for {\small{$\varphi_{k,1}$}} in {\small{$\mathcal{C}^{k,1}$}} and {\small{$\varphi_{k-1,3}$}} in {\small{$\mathcal{C}^{k-1,3}$}}. Then, imposing that {\small{$\varphi$}} is a cocycle in the total complex is equivalent to ask that 
$$d_{-}^{k, 1}(\varphi_{k, 1}) = - d_{+}^{k-1, 3}(\varphi_{k-1, 3}) \quad \mbox{and} \quad d_{+}^{k, 1}(\varphi_{k, 1}) = 0.$$
By applying Property (\ref{propr succ esatta}), we have that {\small{$\varphi_{k,1} \in \Imag(d_{+}^{k-1, 2})$}}, i.e., that there exists an element {\small{$\alpha$}} in {\small{$\mathcal{C}^{k-1, 2}$}} such that {\small{$d_{+}^{k-1, 2}(\alpha) = \varphi_{k,1}$}}. Hence, for the relation imposed by Property (\ref{propr bicomplesso3}) among the coboundary operators, we deduce that
$$d_{+}^{k-1, 3}(\varphi_{k-1, 3}) = - d_{-}^{k, 1}( d_{+}^{k-1, 2}(\alpha)) = d_{+}^{k-1, 3}(d_{-}^{k-1, 2}(\alpha)). $$ 
Finally, using the injectivity of {\small{$d_{+}^{k, 3}$}} for all non-negative value of $k$, i.e. Property (\ref{propr bicomplesso4}), we enforce that {\small{$\varphi_{k-1, 3}$}} is an element of {\small{$\Imag(d_{-}^{k-1, 2})$}}. Thus
$$\Ker(d_{tot}^{2k+1}) = \Imag(d_{+}^{k-1, 2}) \oplus \Imag(d_{-}^{k-1, 2}).$$
In case of {\small{$k=0$}}, the above expression simplifies to
$$\Ker(d_{tot}^{1}) = \Ker(d_{+}^{0, 1}) \cap \Ker(d_{-}^{0, 1}).$$
Last, we consider {\small{$\Imag(d_{tot}^{2k})$}}. The coboundary operator {\small{$d_{tot}^{2k}$}}, with {\small{$k \geq 1$}}, can be written as a sum of coboundary operators {\small{$d_{+}^{\bullet}$}} and {\small{$d_{-}^{\bullet}$}}: 
$$d_{tot}^{2k}= d_{-}^{k, 0} \oplus [d_{+}^{k-1, 2} \oplus d_{-}^{k-1, 2}].$$
Both operators {\small{$d_{+}^{k-1, 2}$}} and {\small{$d_{-}^{k, 0}$}} take values in {\small{$\mathcal{C}^{k, 1}$}} while the image of the map {\small{$d_{-}^{k-1, 2}$}} is in {\small{$\mathcal{C}^{k-1, 3}$}}. Moreover, using Property (\ref{propr bicomplesso3}) for {\small{$i=0$}} as well as Property (\ref{propr succ esatta}), we deduce that
$$\Imag(d_{-}^{k, 0}) \subseteq \Ker(d_{+}^{k, 1}) = \Imag(d_{+}^{k-1, 2}).$$ 
Therefore,
$$\Imag(d_{tot}^{2k})= \Imag(d_{+}^{k-1, 2}) \oplus \Imag(d_{-}^{k-1, 2}).$$
In the case {\small{$k=0$}}, we immediately see that {\small{$\Imag(d_{tot}^{0})$}} coincides with {\small{$\Imag(d_{-}^{0, 0})$}}.\\
\\
To conclude, a cohomology group {\small{$H^{2k+1}((\mathfrak{g} \oplus\mathfrak{h})^{\vee}, d_{tot})$}}, for {\small{$k \geq 1$}}, satisfies the following identity:
$$H^{2k+1}((\mathfrak{g}_{1} \oplus\mathfrak{h}_{2})^{\vee}, d_{tot}) = \dfrac{\Ker(d_{+}^{k, 1})\oplus \Imag(d_{-}^{k-1, 2})}{ \Imag(d_{+}^{k-1, 2}) \oplus \Imag(d_{-}^{k-1, 2})} =  \dfrac{\Imag(d_{+}^{k-1, 2})}{\Imag(d_{+}^{k-1, 2}) } = \left\lbrace 0 \right\rbrace.$$
Finally, if {\small{$k=0$}}, we have
$$H^{1}((\mathfrak{g}_{1} \oplus\mathfrak{h}_{2})^{\vee}, d_{tot})  \cong \dfrac{\Ker(d_{+}^{0, 1}) \cap \Ker(d_{-}^{0, 1})}{\Imag(d_{-}^{0, 0})} = H_{+}^{0}( \mathfrak{h}, H^{1}_{-}(\mathfrak{g}, \Omega)).$$
\end{proof}

\section{Conclusions and outlooks}
\label{Sect: Conclusions}
In this article we proved that the known relation between BRST cohomology and Lie algebra cohomology for irreducible theories can be recovered also in the case of reducible theories. This goal has been achieved via the introduction of a generalize notion of Lie algebra cohomology, which allows to have generators of higher degree. Even though we restricted ourselves to consider a {\small{$U(2)$}}-matrix model, we expect that the same approach could be followed for other classes of models, with higher level of reducibility. By separating the BRST generators by their ghost degree, we believe that the induced gauge-fixed BRST complex can then be seen as a weighted multicomplex for this generalized Lie algebra complex, where the weights of the indices are determined by the ghost degree of the generators while the parity of the complex reflects the parity, bosonic or fermionic, of the BRST generators. The hope and the believe is that, thanks to this approach, we can have a clearer understanding of this gauge-fixed BRST complex and of its structure also for higher order gauge-symmetry groups. Moreover, as shown for our model of interest, this point of view could be helpful in facing the challenging task of explicitly describing the BRST cohomology groups of degree {\small{$k>0$}} and eventually contributing to clarifying their physical relevance, which has still be to fully understood.

\bibliographystyle{plain}

\end{document}